\documentclass[12pt,a4paper]{article}
\usepackage{authblk}

\usepackage[margin=1in]{geometry}
\usepackage[T1]{fontenc}
\usepackage[utf8]{inputenc}
\usepackage{lmodern}
\usepackage{microtype}
\usepackage{amsmath,amssymb,amsfonts,mathtools,bm}
\usepackage{amsthm}
\usepackage{mathrsfs}
\usepackage{bbm}
\usepackage{graphicx}
\usepackage{booktabs}
\usepackage{array}
\usepackage{enumitem}
\usepackage{xcolor}
\usepackage{hyperref}
\usepackage[nameinlink,capitalise]{cleveref}
\usepackage{placeins}
\usepackage{setspace}

\newif\ifprojectbib
\IfFileExists{ref.bib}{\projectbibtrue}{\projectbibfalse}
\ifprojectbib\else
  \renewcommand{\cite}[1]{\unskip}
\fi
\hypersetup{
  colorlinks=true,
  citecolor=blue,
  linkcolor=blue,
  urlcolor=blue,
  breaklinks=true
}
\allowdisplaybreaks
\newcommand{\cH}{\mathcal H}
\newcommand{\cN}{\mathcal N}
\newcommand{\cM}{\mathcal M}
\newcommand{\cS}{\mathcal S}
\newcommand{\cI}{\mathcal I}
\newcommand{\cD}{\mathcal D}
\newcommand{\cT}{\mathcal T}
\newcommand{\cE}{\mathcal E}
\newcommand{\cF}{\mathcal F}
\newcommand{\cC}{\mathcal C}
\newcommand{\cP}{\mathcal P}
\newcommand{\cQ}{\mathcal Q}

\newcommand{\id}{\mathrm{id}}
\newcommand{\misc}{\mathrm{MISC}}
\newcommand{\disc}{\mathrm{DISC}}
\newcommand{\para}{\mathrm{Par}}
\newcommand{\seq}{\mathrm{Seq}}
\newcommand{\gen}{\mathrm{Gen}}
\newcommand{\fixAB}{A\prec B}
\newcommand{\fixBA}{B\prec A}
\newcommand{\ox}{\otimes}
\newcommand{\tr}{\operatorname{Tr}}
\newcommand{\diag}{\operatorname{diag}}

\newcommand{\ket}[1]{\lvert #1\rangle}
\newcommand{\bra}[1]{\langle #1\rvert}

\newcommand{\ketbra}[2]{\lvert #1\rangle\!\langle #2\rvert}

\newcommand{\norm}[1]{\lVert #1\rVert}

\newcommand{\vect}[1]{\bm{#1}}

\newtheorem{theorem}{Theorem}[section]
\newtheorem{proposition}[theorem]{Proposition}
\newtheorem{lemma}[theorem]{Lemma}
\newtheorem{corollary}[theorem]{Corollary}
\newtheorem{definition}[theorem]{Definition}
\theoremstyle{remark}

\title{Causal-Class Hierarchies in Coherence-Constrained Channel Transformation}

\author[1,2]{Lin Zhu}
\author[3]{Benchi Zhao}
\author[3]{Xuanqiang Zhao}
\author[2]{Ranyiliu Chen\thanks{chenranyiliu@quantumsc.cn}}
\author[1]{Xin Wang\thanks{felixxinwang@hkust-gz.edu.cn}}
\author[2]{Shenggen Zheng\thanks{zhengshenggen@quantumsc.cn}}

\affil[1]{Thrust of Artificial Intelligence, Information Hub, The Hong Kong University of Science and Technology (Guangzhou), Guangdong 511453, China}
\affil[2]{Quantum Science Center of Guangdong--Hong Kong--Macao Greater Bay Area, Shenzhen 518045, China}
\affil[3]{QICI Quantum Information and Computation Initiative, School of Computing and Data Science, The University of Hong Kong, Pokfulam Road, Hong Kong}
\date{}

\begin{document}
\maketitle

\begin{abstract}
Higher-order quantum transformations allow multiple channel uses to be combined through different causal architectures, from parallel and fixed-order sequential networks to general higher-order processes. Whether this causal freedom improves channel transformation when the higher-order operation is also constrained by a resource theory remains largely unexplored. We study this question in the dynamical resource theory of coherence using a unified semidefinite-programming framework. For two qubit amplitude-damping channels and the identity target, we prove a strict causal hierarchy at every nontrivial damping strength under both maximally incoherent superchannels (MISC) and dephasing-covariant incoherent superchannels (DISC). In contrast, mixed-Pauli channels admit a common teleportation simulation that transfers the channel dependence to Bell-diagonal program states prepared in parallel. The remaining processing can then be absorbed into a single quantum operational, so parallel, fixed-order sequential, and general higher-order strategies achieve the same optimal error for any target. These results identify free program-state parallelisation as a structural obstruction to causal enhancement.
\end{abstract}

\newpage
\section{Introduction}
The causal arrangement of noisy quantum channels determines which uses can exchange information and exploit an intermediate memory. Parallel protocols place the supplied channels in one layer, fixed-order sequential networks allow information to pass from an earlier call to a later one, and more general deterministic higher-order processes obey weaker causal constraints~\cite{Chiribella2008CircuitArchitecture,Chiribella2008MemoryEffects,Chiribella2009Networks,Perinotti2017CausalStructures,Oreshkov2012NoCausalOrder,BisioPerinotti2019HigherOrder}. A basic operational question is whether this causal freedom improves channel transformation when the surrounding higher-order processing is subject to a fixed resource constraint. Given the same resource channels, can a sequential network or a more general deterministic higher-order process approximate a target channel more accurately than parallel processing?

A resource constraint is essential for the known-channel transformation task considered here. If arbitrary surrounding processing were allowed, the supplied interfaces could simply be ignored and the known target channel implemented directly. We therefore work in the dynamical resource theory of coherence, in which classical channels are free and the allowed higher-order maps obey a common coherence restriction~\cite{Theurer2019QuantifyingOperations,GourWinter2019DynamicalResource,LiuWinter2019ChannelResource,LiuYuan2020OperationalChannels,Saxena2020DynamicalCoherence,GourScandolo2021DynamicalResources,RegulaTakagi2021OneShot,Yuan2020OneShotDynamical}. We compare maximally incoherent superchannels (MISC) with the smaller class of dephasing-covariant incoherent superchannels (DISC), while varying only the allowed causal architecture. This separates the role of causal organization from the choice of free-operation constraint. Since $\disc\subseteq\misc$, persistence of a causal advantage under DISC also tests whether that advantage survives a genuinely stricter coherence constraint.

Previous work has established that causal architecture can affect communication, discrimination, metrology, and transformations of unknown operations~\cite{Chiribella2012PerfectDiscrimination,Bavaresco2021Hierarchy,Bavaresco2022UnitaryDiscrimination,Ebler2018Communication,Chiribella2021ZeroCapacity,zhao2020quantum,Liu2023Metrology,Quintino2019Reversing,QuintinoEbler2022UnitaryTransformations}. Resource-theoretic approaches have also quantified causal connection and the communication power permitted by different higher-order signalling structures~\cite{Milz2022CausalConnection,zhao2025communication}. In particular, Ref.~\cite{zhao2025communication} showed that causal architecture can affect channel-conversion performance in a resource theory of signalling, where replacement channels are free and the allowed higher-order operations are signalling-non-generating. For dynamical coherence, however, this does not determine either the achievable improvement or the channel structures that preclude it. An advantage for one free-operation class need not survive a stricter class, and an example of separation does not explain when all causal classes have the same transformation power. Existing channel resource theories typically optimize over a free-superchannel class without resolving parallel, fixed-order sequential, and more general causal subclasses~\cite{Gour2019Comparison,LiuWinter2019ChannelResource,LiuYuan2020OperationalChannels,GourWinter2019DynamicalResource,GourScandolo2021DynamicalResources}, while task-specific causal-order advantages do not by themselves answer this resource-constrained conversion question.

Here we establish both a strict causal hierarchy and a complementary sufficient condition for its collapse within the same coherence-constrained transformation problem. Two identical qubit amplitude-damping channels exhibit a strict parallel--sequential--general hierarchy for approximation of the identity at every nontrivial damping strength, and the optimal errors coincide between MISC and DISC for each causal class considered. By contrast, if the supplied resource channels admit free simulations from program states that can all be prepared in parallel, every general strategy can be matched on those resources by a parallel one. Qubit mixed-Pauli channels satisfy this condition through teleportation simulation, so the collapse holds for any finite collection of such channels and any target. The two regimes show that the usefulness of causal freedom depends on structural properties of the resource channels even when the resource-theoretic constraint is held fixed.

Our main results are organized around these two scientific findings. First, for two identical qubit amplitude-damping channels $\cN_{\rm AD}^{\varepsilon}$ and the identity target, we prove that for every $0<\varepsilon<1$ and every $\mathbf F\in\{\misc,\disc\}$,
\begin{equation}
D_{\mathbf F}^{\para}\left(\cN_{\rm AD}^{\varepsilon},\cN_{\rm AD}^{\varepsilon};\cI\right)
>
D_{\mathbf F}^{\seq}\left(\cN_{\rm AD}^{\varepsilon},\cN_{\rm AD}^{\varepsilon};\cI\right)
>
D_{\mathbf F}^{\gen}\left(\cN_{\rm AD}^{\varepsilon},\cN_{\rm AD}^{\varepsilon};\cI\right).
\label{eq:intro-hierarchy}
\end{equation}
For each of the three causal classes, the MISC and DISC optima are equal in this task. Exchange symmetry makes the two fixed orders equivalent for the identical resources. We determine the fixed-order optimum through a four-dimensional symmetric eigenvalue problem, while the general-process error is available in closed form,
\begin{equation}
D_{\mathbf F}^{\gen}\left(\cN_{\rm AD}^{\varepsilon},\cN_{\rm AD}^{\varepsilon};\cI\right)
=
\frac12\left[1-\sqrt{1-\varepsilon}\left(\varepsilon+\sqrt{\varepsilon^2+1-\varepsilon}\right)\right],
\qquad
\mathbf F\in\{\misc,\disc\}.
\label{eq:intro-ico-formula}
\end{equation}
This is a separation of the specified process cones; it does not identify the optimizing general process with a quantum switch or another particular quantum-controlled-order implementation.

Second, we prove a structural collapse theorem. If one use of each resource channel can prepare a channel-dependent program state, a resource-independent processor reconstructs the channel action from that state, and the induced simulators are free, then all program states can be prepared in one parallel layer and every subsequent use of the resource channels can be absorbed into a single residual quantum channel. Consequently, every general free strategy can be reproduced by a parallel free strategy on the given resource tuple. Qubit mixed-Pauli channels satisfy this condition through teleportation simulation. Hence, for any finite collection of such channels, any target channel, and either MISC or DISC, parallel processing, every fixed order, and the general-process class have the same optimal transformation error. This is a sufficient condition for collapse, not a necessary characterization of all channel families with no causal advantage.

The unified semidefinite-programming framework supports these two findings rather than constituting the primary conclusion. A common Choi-space formulation imposes the causal normalization and MISC or DISC conditions independently. For the identity target, an output twirl reduces the optimization to a single coherence-transfer coefficient $x_c$. Explicit feasible constructions, reduced primal--dual programs, and analytic upper-bound certificates then establish the amplitude-damping hierarchy and the program-state collapse on the same footing.

\paragraph{Structure of the paper.}
\Cref{sec:framework} fixes the channel, causal-class, and free-operation conventions and gives the unified SDP. \Cref{sec:coherence-reduction} derives the identity-target symmetry reduction. The amplitude-damping hierarchy and its analytic certificates are developed in \cref{sec:amplitude-damping}. \Cref{sec:collapse} proves the program-state parallelization theorem and applies it to mixed-Pauli channels, while \cref{sec:disc-robustness} establishes robustness under DISC. Numerical evidence beyond the analytic examples is reported in \cref{sec:numerics}, and \cref{sec:discussion} summarizes the scope and open questions. Detailed causal constraints, feasible constructions, and certificate calculations are collected in the appendices.

\section{Operational setting and incoherent higher-order transformations}
\label{sec:framework}

A fair causal comparison requires the resource-theoretic and causal restrictions to be specified independently. This section fixes the Choi conventions, defines the three causal classes used in the optimization, and then imposes MISC or DISC as a separate constraint on the same higher-order map.

\subsection{Channels, superchannels, and causal classes}
Let $\cH_X$ be the finite-dimensional Hilbert space associated with a quantum system $X$, let $d_X:=\dim\cH_X$, and denote by $\mathsf L(X)$ the space of linear operators on $\cH_X$. A quantum channel $\cN:\mathsf L(X_i)\to\mathsf L(X_o)$ is a completely positive trace-preserving linear map~\cite{nielsen2010quantum,watrous2018theory}. For brevity, we write $\cN:X_i\to X_o$ when the underlying operator spaces are clear from context. For a linear map $\cN:X_i\to X_o$, we use the unnormalised Choi operator~\cite{Jamiolkowski1972LinearTransformations,Choi1975CompletelyPositive}

\begin{equation}
J_{\cN}
:=
\sum_{i,j=0}^{d_{X_i}-1}
\ketbra{i}{j}_{X_i}
\ox
\cN(\ketbra{i}{j})_{X_o},
\qquad
\tr_{X_o}J_{\cN}=I_{X_i}.
\label{eq:choi-definition}
\end{equation}
A linear map $\cN:X_i\to X_o$ is completely positive if and only if $J_{\cN}\geq0$, and it is trace preserving if and only if $\tr_{X_o}J_{\cN}=I_{X_i}$. For two resource channels $\cN_A:A_i\to A_o$ and $\cN_B:B_i\to B_o$, a two-input superchannel $\cS:AB\to C$ maps them to an output channel $\cM_C=\cS(\cN_A,\cN_B)$~\cite{Chiribella2008Supermaps,BisioPerinotti2019HigherOrder}. Its Choi operator $\Theta^{\cS}$ acts through the link product~\cite{Chiribella2008CircuitArchitecture,Chiribella2009Networks},
\begin{equation}
J_{\cS(\cN_A,\cN_B)}
=
(J_{\cN_A}\ox J_{\cN_B})*\Theta^{\cS}.
\label{eq:superchannel-link}
\end{equation}
The link product with symbol $*$ is the Choi-space operation that connects compatible input and output systems of two quantum processes. The link-product convention and its explicit definition are given in \cref{app:notation}.

Throughout this work, a superchannel is a deterministic higher-order transformation: it maps every admissible tuple of input channels to an output channel and remains completely positive under arbitrary ancillary extensions. In finite dimensions this is encoded by a positive higher-order Choi operator, $\Theta^{\cS}\succeq0$, together with the appropriate normalization conditions~\cite{Chiribella2008Supermaps,Chiribella2009Networks,BisioPerinotti2019HigherOrder}. The causal classes differ only through those normalization conditions.

We write $A\prec B$ when channel $A$ is called before $B$, and $B\prec A$ for the reverse order. The comparison uses four classes: the parallel class $\para$, the fixed orders $\seq_{\fixAB}$ and $\seq_{\fixBA}$, and the general class $\gen$. A parallel network places both calls in a single layer; a fixed-order comb can pass a quantum memory from the first call to the second. The class $\gen$ is the full deterministic two-input higher-order feasible set defined by the normalization constraints used in this work. It contains the preceding classes and may contain causally nonseparable processes. We use the neutral symbol $\gen$ because membership in this cone, by itself, does not certify realization by a quantum switch or another quantum-controlled-order circuit. The classes are summarized in \cref{tab:causal-classes}, with inclusions
\begin{equation}
\para\subseteq\seq_{\fixAB}\subseteq\gen,
\qquad
\para\subseteq\seq_{\fixBA}\subseteq\gen.
\label{eq:causal-inclusions}
\end{equation}
When the two resource channels are identical, the fixed orders
$A\prec B$ and $B\prec A$ have the same optimal error by exchange
symmetry, and we therefore simply write $\seq$ when there are two identical resource channels. For nonidentical resources, the order is always displayed explicitly.

Operationally, the distinction is whether information may pass between the two channel calls. Parallel processing forbids such feed-forward, whereas a fixed-order comb carries an intermediate quantum memory. The general class relaxes the fixed-order comb conditions to the full two-input constraints used here. The three architectures are illustrated schematically in \cref{fig:causal-order}. In every case the resource channels and the coherence restriction are held fixed, so any difference in $D_{\mathbf F}^{\mathbf H}$ is attributable to the causal class itself~\cite{Chiribella2008MemoryEffects,Perinotti2017CausalStructures,OreshkovGiarmatzi2016CausalProcesses}.

\begin{table}[h!]
\centering
\caption{Causal classes compared in the two-use transformation problem. The last row denotes the general deterministic higher-order feasible set used in the SDP; no circuit realization is assumed for an arbitrary element of this class.}
\label{tab:causal-classes}
\begin{tabular}{p{0.16\linewidth}p{0.72\linewidth}}
\toprule
Class & Allowed organization of the two channel calls \\
\midrule
$\para$ & Both resource channels are used in one layer between a joint preparation and a joint post-processing stage. \\
$\seq_{A\prec B}$ & Channel $A$ is used before $B$, with an intermediate quantum memory. \\
$\seq_{B\prec A}$ & Channel $B$ is used before $A$, with an intermediate quantum memory. \\
$\gen$ & The full deterministic two-input higher-order class defined by the normalization constraints in \cref{app:general-sdp}; it contains the preceding classes and may include processes with no definite order. \\
\bottomrule
\end{tabular}
\end{table}

\begin{figure}[h!]
\centering
\IfFileExists{figures/Causal_order.png}{%
  \includegraphics[width=0.98\linewidth]{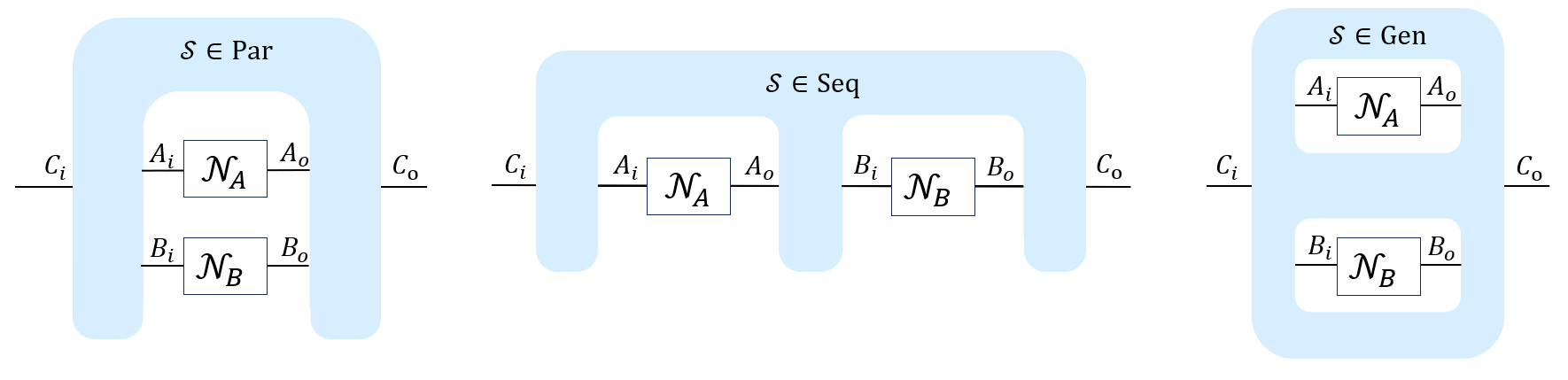}%
}{%
  \fbox{\parbox[c][5.0cm][c]{0.9\linewidth}{\centering
  Insert \texttt{figures/Causal\_order.png}: parallel, fixed-order sequential, and general higher-order architectures.}}%
}
\caption{Schematic causal architectures for two-use channel transformation. Parallel protocols use both resource channels in one layer, while fixed-order sequential protocols connect the two calls through an intermediate memory. The class $\gen$ denotes the general deterministic two-input higher-order feasible set used in this work; the schematic does not assert a particular circuit realization for an arbitrary element of $\gen$.}
\label{fig:causal-order}
\end{figure}

\subsection{Incoherent superchannels: MISC and DISC}
\label{subsec:misc-disc}

The free-operation constraint is defined relative to fixed incoherent bases on every input and output system. For a system $X$ with Hilbert space $\cH_X$, complete dephasing is the self-adjoint linear map
\begin{equation}
    \cD_X(M)
    :=
    \sum_{j=0}^{d_X-1}
    \ketbra{j}{j}
    M
    \ketbra{j}{j}.
    \label{eq:dephasing}
\end{equation}
For a channel from input system $X_i$ to output system $X_o$, channel dephasing simply removes the off-diagonal entries of its Choi operator in these fixed bases. We denote this linear supermap by $\Delta_X$:
\begin{equation}
    J_{\Delta_X(\cE)}
    :=
    \cD_X(J_{\cE}),
    \qquad
    \cD_X
    :=
    \cD_{X_i}\ox\cD_{X_o}.
    \label{eq:channel-dephasing-supermap}
\end{equation}
A channel $\cC:X_i\to X_o$ is classical when
\begin{equation}
    \Delta_X(\cC)=\cC,
    \label{eq:classical-channel-fixed-point}
\end{equation}
or equivalently when $J_{\cC}$ is diagonal in the fixed input--output basis. We denote the set of classical channels on $X$ by
$\mathscr C_X$.

For channel systems $X_1,\ldots,X_n$, write
\begin{equation}
    \boldsymbol{\Delta}_{\vect X}
    :=
    \left(
        \Delta_{X_1},\ldots,\Delta_{X_n}
    \right).
    \label{eq:tuple-channel-dephasing}
\end{equation} 
The one-input case reduces to the standard MISC and DISC definitions of dynamical coherence~\cite{Saxena2020DynamicalCoherence}. For several input channels, we impose the same relations simultaneously on all input dephasing maps.

\begin{definition}[Multi-input MISC and DISC]
\label{def:multi-input-misc-disc}
Let
\begin{equation}
    \cS:X_1\cdots X_n\to C
\end{equation}
be an $n$-input superchannel.

The map $\cS$ is a maximally incoherent superchannel, abbreviated MISC,
when
\begin{equation}
    \Delta_C
    \circ
    \cS
    \circ
    \boldsymbol{\Delta}_{\vect X}
    =
    \cS
    \circ
    \boldsymbol{\Delta}_{\vect X},
    \label{eq:misc-operational-definition}
\end{equation}
where the equality holds on the full linear space of input maps.

The map $\cS$ is a dephasing-covariant incoherent superchannel,
abbreviated DISC, when
\begin{equation}
    \Delta_C\circ\cS
    =
    \cS\circ\boldsymbol{\Delta}_{\vect X},
    \label{eq:disc-operational-definition}
\end{equation}
where the equality holds on the full linear space of input maps.
\end{definition}

These two conditions differ in where dephasing is required to act. Under MISC, dephasing every input channel is enough to make the output incoherent, so every tuple of classical channels is sent to a classical channel. DISC additionally requires output dephasing to commute with the higher-order transformation even for coherent inputs.

\begin{proposition}[Choi characterisation of multi-input MISC and DISC]
\label{prop:multi-input-choi-characterisation}
Let $\Theta^{\cS}$ be the Choi operator of an $n$-input higher-order
map
\begin{equation}
    \cS:X_1\cdots X_n\to C.
\end{equation}
Set
\begin{equation}
    \cD_{\vect X}
    :=
    \cD_{X_1}\ox\cdots\ox\cD_{X_n}.
    \label{eq:multi-input-dephasing}
\end{equation}
Then $\cS$ is MISC if and only if
\begin{equation}
    \left(
        \cD_{\vect X}\ox\cD_C
    \right)
    \left(
        \Theta^{\cS}
    \right)
    =
    \left(
        \cD_{\vect X}\ox\id_C
    \right)
    \left(
        \Theta^{\cS}
    \right).
    \label{eq:multi-input-misc-choi}
\end{equation}
It is DISC if and only if
\begin{equation}
    \left(
        \id_{\vect X}\ox\cD_C
    \right)
    \left(
        \Theta^{\cS}
    \right)
    =
    \left(
        \cD_{\vect X}\ox\id_C
    \right)
    \left(
        \Theta^{\cS}
    \right).
    \label{eq:multi-input-disc-choi}
\end{equation}
Consequently,
\begin{equation}
    \disc\subseteq\misc.
    \label{eq:disc-subset-misc}
\end{equation}
\end{proposition}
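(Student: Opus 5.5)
The plan is to translate each operational identity in \cref{def:multi-input-misc-disc} into Choi space through the link product \eqref{eq:superchannel-link}, and then use the fact that the link product with arbitrary inputs separates operators. The first step records how channel dephasing acts in Choi space. By \eqref{eq:channel-dephasing-supermap}, $J_{\Delta_{X_k}(\cE_k)}=\cD_{X_k}(J_{\cE_k})$, so for product inputs
\begin{equation}
J_{\cS\circ\boldsymbol\Delta_{\vect X}(\cE_1,\ldots,\cE_n)}
=\bigl(\cD_{\vect X}(J_{\cE_1}\ox\cdots\ox J_{\cE_n})\bigr)*\Theta^{\cS}
=(J_{\cE_1}\ox\cdots\ox J_{\cE_n})*\bigl((\cD_{\vect X}\ox\id_C)(\Theta^{\cS})\bigr).
\end{equation}
The last equality holds because $\cD_{\vect X}$ is a product of pinchings. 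Each pinching is self-adjoint with respect to the Hilbert--Schmidt inner product and commutes with partial transposition, since it only removes off-diagonal entries in a fixed basis. The link product is a partial trace of a product with a partial transpose, so the pinching can be moved from one factor to the other. By the same reasoning, the output dephasing gives $J_{\Delta_C(\cM)}=\cD_C(J_{\cM})$. Because $\cD_C$ acts only on the output systems $C$, which are left untouched by the contraction, it commutes with the link product. Hence $J_{\Delta_C\circ\cS(\cdots)}=(J_{\cE_1}\ox\cdots)*\bigl((\id_{\vect X}\ox\cD_C)(\Theta^{\cS})\bigr)$. Composing the two identities gives the Choi operator of $\Delta_C\circ\cS\circ\boldsymbol\Delta_{\vect X}$ as the link product with $(\cD_{\vect X}\ox\cD_C)(\Theta^{\cS})$.

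The second step is an injectivity argument. Both equalities in \cref{def:multi-input-misc-disc} are required on the full linear space of input maps. By multilinearity, it is therefore equivalent to require them on product inputs whose Choi operators run over a spanning set of $\mathsf L(X_{1,i}\ox X_{1,o})\ox\cdots$, for instance products of matrix units. Suppose $(J_1\ox\cdots\ox J_n)*\Omega=(J_1\ox\cdots\ox J_n)*\Omega'$ for all such products. Choosing matrix units extracts every block of $\Omega-\Omega'$ with respect to the input systems, so $\Omega=\Omega'$. The converse direction is immediate. Applying this with $\Omega,\Omega'$ as computed in the first step turns \eqref{eq:misc-operational-definition} into \eqref{eq:multi-input-misc-choi} and \eqref{eq:disc-operational-definition} into \eqref{eq:multi-input-disc-choi}. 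I expect this step to be the main technical point. One must check that the link product with all input Choi operators is a faithful pairing. The partial transpose convention of \cref{app:notation} only permutes matrix units, so faithfulness is preserved.

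The third step proves $\disc\subseteq\misc$. Assume \eqref{eq:multi-input-disc-choi} and apply $\cD_{\vect X}\ox\id_C$ to both sides. Because pinchings on different tensor factors commute and are idempotent, the left side becomes $(\cD_{\vect X}\ox\cD_C)(\Theta^{\cS})$ and the right side becomes $(\cD_{\vect X}\ox\id_C)(\Theta^{\cS})$. This is exactly \eqref{eq:multi-input-misc-choi}. Operationally, the same argument composes the DISC identity on the right with $\boldsymbol\Delta_{\vect X}$ and uses $\boldsymbol\Delta_{\vect X}^2=\boldsymbol\Delta_{\vect X}$.
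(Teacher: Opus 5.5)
Your proposal is correct and follows essentially the same route as the paper. Like the paper, you move the pinchings across the link product using self-adjointness and compatibility with partial transposition, identify the Choi operators of the three composite maps, and invoke injectivity of the Choi correspondence, which you spell out via matrix units where the paper simply asserts it. Your Choi-level derivation of $\disc\subseteq\misc$, obtained by applying $\cD_{\vect X}\ox\id_C$ to the DISC equation, is equivalent to the paper's operational argument from idempotence of $\boldsymbol{\Delta}_{\vect X}$, which you also mention.
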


\begin{proof} 
Complete dephasing is self-adjoint with respect to the Hilbert--Schmidt inner product and commutes with transposition in the fixed incoherent basis. It can therefore be transferred between the two factors joined by a link product.

The Choi operators of the relevant composite higher-order maps are
\begin{equation}
    \Theta^{
        \cS\circ\boldsymbol{\Delta}_{\vect X}
    }
    =
    \left(
        \cD_{\vect X}\ox\id_C
    \right)
    \left(
        \Theta^{\cS}
    \right),
    \label{eq:choi-input-dephasing-composition}
\end{equation}
\begin{equation}
    \Theta^{
        \Delta_C\circ\cS
    }
    =
    \left(
        \id_{\vect X}\ox\cD_C
    \right)
    \left(
        \Theta^{\cS}
    \right),
    \label{eq:choi-output-dephasing-composition}
\end{equation}
and
\begin{equation}
    \Theta^{
        \Delta_C
        \circ
        \cS
        \circ
        \boldsymbol{\Delta}_{\vect X}
    }
    =
    \left(
        \cD_{\vect X}\ox\cD_C
    \right)
    \left(
        \Theta^{\cS}
    \right).
    \label{eq:choi-input-output-dephasing-composition}
\end{equation}
Equality of superchannels is equivalent to equality of their Choi operators. Hence Eq.~\eqref{eq:misc-operational-definition} is equivalent to Eq.~\eqref{eq:multi-input-misc-choi}, while Eq.~\eqref{eq:disc-operational-definition} is equivalent to Eq.~\eqref{eq:multi-input-disc-choi}.

Finally, if $\cS$ is DISC, then
\begin{equation}
\begin{aligned}
    \Delta_C
    \circ
    \cS
    \circ
    \boldsymbol{\Delta}_{\vect X}
    &=
    \cS
    \circ
    \boldsymbol{\Delta}_{\vect X}
    \circ
    \boldsymbol{\Delta}_{\vect X}\\
    &=
    \cS
    \circ
    \boldsymbol{\Delta}_{\vect X},
\end{aligned}
\end{equation}
where the second equality follows from idempotence of complete dephasing. Thus every DISC map is MISC.
\end{proof}

For the two-input setting used throughout this work, Eqs.~\eqref{eq:multi-input-misc-choi} and \eqref{eq:multi-input-disc-choi} reduce respectively to
\begin{equation}
    \cD_{ABC}
    \left(
        \Theta^{\cS}
    \right)
    =
    \left(
        \cD_{AB}\ox\id_C
    \right)
    \left(
        \Theta^{\cS}
    \right)
    \label{eq:misc-condition}
\end{equation}
and
\begin{equation}
    \left(
        \id_{AB}\ox\cD_C
    \right)
    \left(
        \Theta^{\cS}
    \right)
    =
    \left(
        \cD_{AB}\ox\id_C
    \right)
    \left(
        \Theta^{\cS}
    \right).
    \label{eq:disc-condition}
\end{equation}

\begin{lemma}[Classicality preservation]
\label{lem:misc-classicality-preservation}
Every multi-input MISC higher-order map sends each tuple of classical input channels to a classical output channel.
\end{lemma}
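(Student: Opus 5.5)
The plan is to work directly from the operational form of the MISC condition, Eq.~\eqref{eq:misc-operational-definition}, using the fixed-point characterisation of classical channels in Eq.~\eqref{eq:classical-channel-fixed-point}. Let $\cC_1,\ldots,\cC_n$ be classical channels, $\cC_k\in\mathscr C_{X_k}$. By Eq.~\eqref{eq:classical-channel-fixed-point}, $\Delta_{X_k}(\cC_k)=\cC_k$ for every $k$. Hence the tuple satisfies
\[
\boldsymbol{\Delta}_{\vect X}(\cC_1,\ldots,\cC_n)=(\cC_1,\ldots,\cC_n).
\]

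Next I apply Eq.~\eqref{eq:misc-operational-definition} to this tuple, which gives
\[
\Delta_C\bigl(\cS(\cC_1,\ldots,\cC_n)\bigr)
=\Delta_C\circ\cS\circ\boldsymbol{\Delta}_{\vect X}(\cC_1,\ldots,\cC_n)
=\cS\circ\boldsymbol{\Delta}_{\vect X}(\cC_1,\ldots,\cC_n)
=\cS(\cC_1,\ldots,\cC_n).
\]
The output $\cS(\cC_1,\ldots,\cC_n)$ is a channel because $\cS$ is a deterministic superchannel. It is also a fixed point of $\Delta_C$, so it is classical by Eq.~\eqref{eq:classical-channel-fixed-point}.

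The only point needing care is interpretation: $\boldsymbol{\Delta}_{\vect X}$ acts componentwise on tuples, while $\cS$ is multilinear and is evaluated on the product Choi operator via the link product. To keep this consistent, I will phrase the argument at the Choi level as a cross-check. Classicality of each $\cC_k$ means $J_{\cC_k}$ is diagonal, so $\cD_{\vect X}$ fixes $\bigotimes_k J_{\cC_k}$. Using self-adjointness of dephasing under the link product, as in the proof of \cref{prop:multi-input-choi-characterisation}, together with Eq.~\eqref{eq:multi-input-misc-choi}, the output Choi operator $\bigl(\bigotimes_k J_{\cC_k}\bigr)*\Theta^{\cS}$ is then invariant under $\cD_C$. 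This is the same fixed-point statement, so no real obstacle remains.
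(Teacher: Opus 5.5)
Your proof is correct and follows the paper's own argument: each classical input is a fixed point of its dephasing map, so the MISC identity makes the output a fixed point of $\Delta_C$, which is exactly classicality. The Choi-level cross-check is a valid consistency remark, but the operational argument already suffices.
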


\begin{proof}
Let $\cC_k\in\mathscr C_{X_k}$ for every $k$. Then
\begin{equation}
    \Delta_{X_k}(\cC_k)=\cC_k.
    \label{eq:classical-input-fixed}
\end{equation}
Using the MISC condition,
\begin{equation}
\begin{aligned}
    \Delta_C
    \left[
        \cS(\cC_1,\ldots,\cC_n)
    \right]
    &=
    \left(
        \Delta_C
        \circ
        \cS
        \circ
        \boldsymbol{\Delta}_{\vect X}
    \right)
    (\cC_1,\ldots,\cC_n)\\
    &=
    \left(
        \cS
        \circ
        \boldsymbol{\Delta}_{\vect X}
    \right)
    (\cC_1,\ldots,\cC_n)\\
    &=
    \cS(\cC_1,\ldots,\cC_n).
\end{aligned}
\label{eq:misc-classicality-proof}
\end{equation}
Hence the output is fixed by $\Delta_C$ and is therefore classical.
\end{proof}

The converse of Lemma~\ref{lem:misc-classicality-preservation} is not used in this work. In particular, we do not infer the MISC Choi equation merely from preservation of normalised classical channels.

The program-state argument later composes a free higher-order map with free single-input simulators, so we record the corresponding closure property here.

\begin{lemma}[Closure under componentwise composition]
\label{lem:free-composition-closure}
Let
\begin{equation}
    \cS:A_1\cdots A_n\to C
\end{equation}
be an $n$-input higher-order map, and let
\begin{equation}
    \Phi_k:X_k\to A_k,
    \qquad
    k=1,\ldots,n,
\end{equation}
be single-input superchannels. Define their componentwise composition by
\begin{equation}
    \cP
    :=
    \cS
    \circ
    \left(
        \Phi_1,\ldots,\Phi_n
    \right).
    \label{eq:componentwise-free-composition}
\end{equation}
If $\cS$ and every $\Phi_k$ are MISC, then $\cP$ is MISC. If $\cS$ and every $\Phi_k$ are DISC, then $\cP$ is DISC.
\end{lemma}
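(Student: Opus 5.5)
The plan is to argue at the level of operational composition, not Choi operators. The only tools needed are the operational definitions in \cref{def:multi-input-misc-disc} and the associativity of composition of supermaps. First we note that $\boldsymbol{\Delta}_{\vect X}=(\Delta_{X_1},\ldots,\Delta_{X_n})$ acts componentwise. Hence
\begin{equation}
(\Phi_1,\ldots,\Phi_n)\circ\boldsymbol{\Delta}_{\vect X}=(\Phi_1\circ\Delta_{X_1},\ldots,\Phi_n\circ\Delta_{X_n}),
\end{equation}
and similarly $\boldsymbol{\Delta}_{\vect A}\circ(\Phi_1,\ldots,\Phi_n)=(\Delta_{A_1}\circ\Phi_1,\ldots,\Delta_{A_n}\circ\Phi_n)$. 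All equalities are understood on the full linear span of input maps. This is legitimate because each supermap is linear in each argument, and product maps $\cE_1\ox\cdots\ox\cE_n$ span the relevant operator space.

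For DISC we chain the covariance relations:
\begin{equation}
\Delta_C\circ\cP=\Delta_C\circ\cS\circ(\Phi_k)_k=\cS\circ\boldsymbol{\Delta}_{\vect A}\circ(\Phi_k)_k=\cS\circ(\Delta_{A_k}\circ\Phi_k)_k=\cS\circ(\Phi_k\circ\Delta_{X_k})_k=\cP\circ\boldsymbol{\Delta}_{\vect X}.
\end{equation}
The second equality uses that $\cS$ is DISC, and the fourth uses that each $\Phi_k$ is DISC.

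For MISC we first use that each $\Phi_k$ is MISC, so $\Phi_k\circ\Delta_{X_k}=\Delta_{A_k}\circ\Phi_k\circ\Delta_{X_k}$. This gives
\begin{equation}
\cP\circ\boldsymbol{\Delta}_{\vect X}=\cS\circ\boldsymbol{\Delta}_{\vect A}\circ(\Phi_k\circ\Delta_{X_k})_k.
\end{equation}
We then apply $\Delta_C$ and use the MISC property of $\cS$, $\Delta_C\circ\cS\circ\boldsymbol{\Delta}_{\vect A}=\cS\circ\boldsymbol{\Delta}_{\vect A}$. This yields $\Delta_C\circ\cP\circ\boldsymbol{\Delta}_{\vect X}=\cS\circ\boldsymbol{\Delta}_{\vect A}\circ(\Phi_k\circ\Delta_{X_k})_k=\cP\circ\boldsymbol{\Delta}_{\vect X}$, as required.

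The main obstacle is a rigor point. The multi-input identities for $\cS$ are stated on the full linear space of input maps, which includes non-product (entangled) elements. Meanwhile the componentwise action $(\Phi_1,\ldots,\Phi_n)$ is defined on tuples. We need the composite $\cP$ to be well defined as a multilinear map, with its Choi operator given by the link product of $\Theta^{\cS}$ with $\bigotimes_k\Theta^{\Phi_k}$. The identities derived on product inputs then extend by linearity, since both sides of each identity are linear maps agreeing on a spanning set. If needed, the same argument can be cross-checked in Choi form using \cref{prop:multi-input-choi-characterisation}. There one moves the dephasings $\cD_{A_k}$ across the link product, which is allowed because dephasing is self-adjoint and commutes with transposition. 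Since positivity and normalization of $\cP$ are not part of the claim, no causal-structure argument is needed.
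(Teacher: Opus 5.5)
Your proof is correct and matches the paper's argument essentially step for step. The DISC case is the same chain of covariance relations, and the MISC case likewise inserts $\boldsymbol{\Delta}_{\vect A}$ via $\Phi_k\circ\Delta_{X_k}=\Delta_{A_k}\circ\Phi_k\circ\Delta_{X_k}$ before invoking the MISC property of $\cS$; your remark on extending from product inputs to the full linear space by multilinearity is extra rigor that the paper leaves implicit.
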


\begin{proof}
Assume first that $\cS$ and every $\Phi_k$ are MISC. For every $k$, the one-input MISC condition gives
\begin{equation}
    \Phi_k\circ\Delta_{X_k}
    =
    \Delta_{A_k}
    \circ
    \Phi_k
    \circ
    \Delta_{X_k}.
    \label{eq:single-input-misc-identity}
\end{equation}
Therefore,
\begin{equation}
\begin{aligned}
    \cP
    \circ
    \boldsymbol{\Delta}_{\vect X}
    &=
    \cS
    \circ
    \left(
        \Phi_1\circ\Delta_{X_1},
        \ldots,
        \Phi_n\circ\Delta_{X_n}
    \right)\\
    &=
    \cS
    \circ
    \boldsymbol{\Delta}_{\vect A}
    \circ
    \left(
        \Phi_1\circ\Delta_{X_1},
        \ldots,
        \Phi_n\circ\Delta_{X_n}
    \right)\\
    &=
    \Delta_C
    \circ
    \cS
    \circ
    \boldsymbol{\Delta}_{\vect A}
    \circ
    \left(
        \Phi_1\circ\Delta_{X_1},
        \ldots,
        \Phi_n\circ\Delta_{X_n}
    \right)\\
    &=
    \Delta_C
    \circ
    \cP
    \circ
    \boldsymbol{\Delta}_{\vect X}.
\end{aligned}
\label{eq:misc-composition-closure-proof}
\end{equation}
Thus $\cP$ satisfies
Eq.~\eqref{eq:misc-operational-definition}.

Now assume that $\cS$ and every $\Phi_k$ are DISC. Then
\begin{equation}
    \Delta_{A_k}\circ\Phi_k
    =
    \Phi_k\circ\Delta_{X_k},
    \label{eq:single-input-disc-identity}
\end{equation}
and consequently
\begin{equation}
\begin{aligned}
    \Delta_C\circ\cP
    &=
    \Delta_C
    \circ
    \cS
    \circ
    \left(
        \Phi_1,\ldots,\Phi_n
    \right)\\
    &=
    \cS
    \circ
    \left(
        \Delta_{A_1}\circ\Phi_1,
        \ldots,
        \Delta_{A_n}\circ\Phi_n
    \right)\\
    &=
    \cS
    \circ
    \left(
        \Phi_1\circ\Delta_{X_1},
        \ldots,
        \Phi_n\circ\Delta_{X_n}
    \right)\\
    &=
    \cP
    \circ
    \boldsymbol{\Delta}_{\vect X}.
\end{aligned}
\label{eq:disc-composition-closure-proof}
\end{equation}
Hence $\cP$ is DISC.
\end{proof}

This lemma is the only composition rule needed for the program-state parallelization theorem below: each simulator must belong to the same free class as the higher-order map it replaces.

\subsection{Transformation error and unified SDP}
With the causal and free sets fixed, the remaining ingredient is the operational figure of merit. The channels entering the optimization are known and fixed, each resource interface is available once, and ancillas or intermediate memories may be used whenever the selected causal class allows them. MISC or DISC is imposed independently of that causal choice. Let $\cN_A$ and $\cN_B$ be two resource channels and let $\cM$ be the target channel. For a causal class $\mathbf H\in\{\para,\seq_{\fixAB},\seq_{\fixBA},\gen\}$ and a free class $\mathbf F\in\{\misc,\disc\}$, we define the optimal transformation error as 
\begin{equation} 
D_{\mathbf F}^{\mathbf H} \left( \cN_A,\cN_B;\cM \right) := \min_{\cS\in\mathbf H\cap\mathbf F} \frac12 \norm{ \cS(\cN_A,\cN_B)-\cM }_{\diamond}. 
\label{eq:transformation-error} 
\end{equation} 
The normalised diamond distance quantifies the worst-case distinguishability between the transformed channel and the target channel, including inputs entangled with an arbitrary reference system \cite{watrous2009semidefinite,watrous2018theory,regula2021operational}.

For the difference $\Phi:=\cE-\cF$ of two channels from $C_i$ to $C_o$,
we use the standard one-sided semidefinite representation of the
normalised diamond norm
\cite{watrous2009semidefinite,watrous2018theory},
\begin{equation}
    \frac12\norm{\Phi}_{\diamond}
    =
    \min_{Z,l}
    \left\{
        l:
        Z\succeq0,\quad
        J_{\Phi}\preceq Z,\quad
        \tr_{C_o}Z\preceq lI_{C_i}
    \right\}.
    \label{eq:one-sided-watrous}
\end{equation}

The optimization is an SDP because each ingredient is linear or semidefinite in Choi space: causal normalization, the MISC/DISC equations, and the one-sided Watrous representation of the diamond norm \cite{GutoskiWatrous2007,Chiribella2009Networks, watrous2009semidefinite,Bavaresco2021Hierarchy}. It can be written as
\begin{equation}
\begin{aligned}
    \textnormal{minimise}\quad
    &l\\
    \textnormal{subject to}\quad
    &\Theta^{\cS}\succeq0,\\
    &\Theta^{\cS}\in\mathrm{Comb}_{\mathbf H},\\
    &\Theta^{\cS}
    \text{ satisfies Eq.~\eqref{eq:misc-condition}
    or Eq.~\eqref{eq:disc-condition}},\\
    &Z\succeq0,\\
    &(J_{\cN_A}\ox J_{\cN_B})
    *
    \Theta^{\cS}
    -
    J_{\cM}
    \preceq Z,\\
    &\tr_{C_o}Z\preceq lI_{C_i}.
\end{aligned}
\label{eq:general-sdp}
\end{equation} 
For a fixed-order class, $\Theta^{\cS}$ obeys the comb equations of the chosen order. Thus the causal equations govern connectivity, while MISC or DISC governs coherence; neither constraint is used as a surrogate for the other. The explicit causal conditions and reduced primal--dual programs are collected in \cref{app:general-sdp,app:reduced-sdps}.

\section{Symmetry reduction and output coherence}
\label{sec:coherence-reduction}

For the analytic purification problem, both resource channels are qubits and the target is the identity. The output symmetries of this target collapse the relevant Choi space to three real directions. Define
\begin{equation}
\begin{aligned}
\Omega_0&:=\ketbra{00}{00}+\ketbra{11}{11},\\
\Omega_1&:=\ketbra{01}{01}+\ketbra{10}{10},\\
\Omega_c&:=\ketbra{00}{11}+\ketbra{11}{00},
\end{aligned}
\qquad
J_{\cI}=\Omega_0+\Omega_c.
\label{eq:omega-basis}
\end{equation}
\begin{lemma}[Identity-target output twirl]
\label{lem:identity-output-twirl}
Let
\begin{equation}
U_\phi:=\ketbra{0}{0}+e^{i\phi}\ketbra{1}{1},
\qquad
V_\phi:=\overline U_\phi{}_{C_i}\ox U_\phi{}_{C_o},
\qquad
V_X:=X_{C_i}\ox X_{C_o}.
\label{eq:output-twirl-generators}
\end{equation}
For the identity target, averaging a feasible pair $(\Theta,Z)$ over conjugation by $V_\phi$ for $\phi\in[0,2\pi)$ and by $V_X$ preserves feasibility and the objective for every causal and free class considered here. The averaged operators have the unique form
\begin{equation}
\Theta
=
T_0\ox\Omega_0+T_1\ox\Omega_1+T_c\ox\Omega_c,
\qquad
Z=z_0\Omega_0+z_1\Omega_1+z_c\Omega_c,
\label{eq:twirled-variables}
\end{equation}
with Hermitian $T_0,T_1,T_c$ and real $z_0,z_1,z_c$.
\end{lemma}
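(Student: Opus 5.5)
The plan is to show that the group $G$ generated by $\{V_\phi\}_{\phi}$ and $V_X$ acts on the output factor $C_iC_o$ of $\Theta$ and on $Z$ as a symmetry of the whole SDP~\eqref{eq:general-sdp}. Convexity then lets us average, and a commutant computation fixes the form~\eqref{eq:twirled-variables}. Concretely, for $g\in G$ we send $\Theta\mapsto (I_{AB}\ox V_g)\Theta(I_{AB}\ox V_g)^\dagger$ and $Z\mapsto V_gZV_g^\dagger$, leave $l$ unchanged, and define the averaged pair by integrating over $\phi$ with Haar measure and averaging over $\{I,V_X\}$.

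Feasibility is checked constraint by constraint.
\begin{enumerate}[label=(\roman*)]
\item Positivity of $\Theta$ and $Z$ is preserved under unitary conjugation.
\item The causal normalisations in $\mathrm{Comb}_{\mathbf H}$ involve only partial traces and identity factors on $C_i,C_o$. Conjugation by a product unitary $\overline U_{C_i}\ox U_{C_o}$ commutes with $\tr_{C_o}$, and it maps $I_{C_i}\ox(\cdot)$ to $I_{C_i}\ox(\cdot)$ after the $C_o$ trace, since $\overline U\,I\,\overline U^{\dagger}=I$. Hence every linear normalisation equation is covariant, and the right-hand sides, which are multiples of identities, are invariant.
\item For MISC and DISC, both $V_\phi$ and $V_X$ are, on each factor, a diagonal phase or a permutation of the incoherent basis. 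Such unitaries commute with $\cD_{C_i}\ox\cD_{C_o}$, so the conditions~\eqref{eq:misc-condition} and~\eqref{eq:disc-condition} are preserved. The $AB$ factor is untouched.
\item The link product with $J_{\cN_A}\ox J_{\cN_B}$ contracts only $AB$, so it commutes with conjugation on $C$.
\item The target satisfies $V_gJ_{\cI}V_g^\dagger=J_{\cI}$. This follows from $(\overline U\ox U)\sum_{ij}\ket{ii}\bra{jj}(\overline U\ox U)^\dagger=J_{\cI}$ for any $U$, and $X$ is real, so $\overline X=X$. Therefore $(\Theta*\ldots)-J_{\cI}\preceq Z$ is preserved.
\item $\tr_{C_o}V_gZV_g^\dagger=\overline U\,(\tr_{C_o}Z)\,\overline U^\dagger\preceq lI$.
\end{enumerate}
Since every constraint is convex in $(\Theta,Z,l)$ at fixed $l$, the group average is feasible with the same objective.

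For the form of the averaged operators, expand $\Theta=\sum_{a,b,a',b'}T_{ab,a'b'}\ox\ketbra{ab}{a'b'}$ in the $C_iC_o$ basis. Conjugation by $V_\phi$ multiplies $\ketbra{ab}{a'b'}$ by the phase $e^{i\phi(-a+b+a'-b')}$, so the $\phi$-average keeps exactly the terms with $b-a=b'-a'$. These are the diagonal entries together with $\ketbra{00}{11}$ and $\ketbra{11}{00}$. Averaging with $V_X$ then identifies the coefficient of $\ketbra{00}{00}$ with that of $\ketbra{11}{11}$, the coefficient of $\ketbra{01}{01}$ with that of $\ketbra{10}{10}$, and the coefficient of $\ketbra{00}{11}$ with that of $\ketbra{11}{00}$. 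This produces $T_0,T_1,T_c$.

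Hermiticity of $\Theta$ gives $T_c^\dagger=T_c$, because the coefficients of $\ketbra{00}{11}$ and $\ketbra{11}{00}$ are adjoints of one another and have been set equal. The same computation on the scalar-valued $Z$ gives real $z_0,z_1,z_c$. Uniqueness holds because $\Omega_0,\Omega_1,\Omega_c$ are linearly independent.

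The main obstacle is point (ii): making sure that every causal normalisation in \cref{app:general-sdp} for $\gen$ involves $C$ only through $\tr_{C_o}$ and identity replacements on $C_i,C_o$. For the general class the normalisation is often written as an affine projector condition, and such a projector commutes with local unitary conjugation on $C_i$ and $C_o$ separately, which settles that case.
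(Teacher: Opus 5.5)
Your proposal is correct and takes essentially the same route as the paper. Both show that conjugation by the incoherent output unitaries $V_\phi$ and $V_X$ preserves positivity, the causal normalisations, the MISC/DISC equations and $J_{\cI}$, and then average by convexity. Both then find the fixed-point space from the phase charges $0,1,-1,0$ together with the exchanges $\ket{00}\leftrightarrow\ket{11}$ and $\ket{01}\leftrightarrow\ket{10}$. The only small difference is that you verify the Watrous constraints on $(Z,l)$ directly, whereas the paper appeals to unitary invariance of the diamond norm for the objective.
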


\begin{proof}
Conjugation by $V_\phi$ or $V_X$ is the Choi action induced by $\cN\mapsto\mathcal U\circ\cN\circ\mathcal U^\dagger$ on the output channel. Hence $J_{\cI}$ is fixed because
\begin{equation}
(\overline U\ox U)J_{\cI}(\overline U\ox U)^\dagger=J_{\cI}
\label{eq:identity-choi-invariance}
\end{equation}
for every unitary $U$. Unitary invariance of the diamond norm leaves the objective unchanged. The causal normalisation constraints are also preserved because they involve only partial traces and identity operators on $C_i$ and $C_o$. The generators are incoherent unitaries, so their conjugations commute with complete dephasing and preserve both free Choi equations. Positivity is preserved by unitary conjugation, and convexity permits averaging~\cite{VollbrechtWerner2001Symmetry,GatermannParrilo2004Symmetry,Vallentin2009SymmetrySDP}.

It remains to compute the fixed-point space. In the ordered basis $\ket{00},\ket{01},\ket{10},\ket{11}$ of $C_iC_o$, the phase action has charges $0,1,-1,0$. Averaging over $\phi$ removes all off-diagonal matrix elements except those between $\ket{00}$ and $\ket{11}$. Conjugation by $V_X$ exchanges $\ket{00}\leftrightarrow\ket{11}$ and $\ket{01}\leftrightarrow\ket{10}$. It therefore identifies the corresponding diagonal coefficients and removes the antisymmetric imaginary combination $i(\ketbra{00}{11}-\ketbra{11}{00})$. The Hermitian fixed-point space is exactly
\begin{equation}
\operatorname{span}_{\mathbb R}\{\Omega_0,\Omega_1,\Omega_c\}.
\label{eq:output-fixed-space}
\end{equation}
Applying this computation to the $C_iC_o$ legs of $\Theta$ and to $Z$ proves \eqref{eq:twirled-variables}.
\end{proof} Set
\begin{equation}
x_\alpha
:=
(J_{\cN_A}\ox J_{\cN_B})*T_\alpha,
\qquad
\alpha\in\{0,1,c\}.
\label{eq:x-alpha}
\end{equation}
Positivity gives $T_1\succeq0$ and $T_0\pm T_c\succeq0$. The reduced incoherence constraints on the blocks $T_0$, $T_1$, and $T_c$ are summarised in Table~\ref{tab:misc-disc-constraints}.
\begin{table}[h!]
\centering
\caption{Reduced incoherence constraints on the blocks $T_0$, $T_1$, and $T_c$ under MISC and DISC.}
\label{tab:misc-disc-constraints}
\begin{tabular}{c@{\qquad}ccc}
\toprule
& $T_0$ & $T_1$ & $T_c$ \\
\midrule
$\misc$
& unrestricted
& unrestricted
& $\cD_{AB}(T_c)=0$ \\[1mm]
$\disc$
& $T_0=\cD_{AB}(T_0)$
& $T_1=\cD_{AB}(T_1)$
& $\cD_{AB}(T_c)=0$ \\
\bottomrule
\end{tabular}
\end{table}

Thus DISC strengthens MISC only by dephasing the population blocks $T_0$ and $T_1$.

The three blocks play different roles. $T_0$ and $T_1$ govern the two population sectors, whereas $T_c$ multiplies $\Omega_c$ and is the only block that carries the phase relation between $\ket{0}$ and $\ket{1}$. After contraction with the resource Choi operators, its coefficient $x_c$ satisfies $\cS(\cN_A,\cN_B)(\ketbra{0}{1})=x_c\ketbra{0}{1}$. The diamond-distance problem can therefore be recast as a coherence-transfer problem, as the following proposition makes precise.

\begin{proposition}[Output-coherence reduction]
\label{prop:output-coherence-main}
Let $\cN_A$ and $\cN_B$ be qubit channels, let $\mathbf H\in\{\para,\seq_{\fixAB},\seq_{\fixBA},\gen\}$, and let $\mathbf F\in\{\misc,\disc\}$. For the identity target, an optimal reduced solution may be chosen with $T_1=0$. Its output Choi operator is
\begin{equation}
J_{\cS(\cN_A,\cN_B)}
=
\Omega_0+x_c\Omega_c,
\label{eq:output-dephasing-channel}
\end{equation}
and the channel acts on off-diagonal matrix units as $\cS(\cN_A,\cN_B)(\ketbra{0}{1})=x_c\ketbra{0}{1}$. Defining the auxiliary optimal coherence
\begin{equation}
\mu_{\mathbf F,\mathbf H}(\cN_A,\cN_B)
:=
\max_{\cS\in\mathbf H\cap\mathbf F}x_c,
\label{eq:mu-definition}
\end{equation}
one has
\begin{equation}
D_{\mathbf F}^{\mathbf H}(\cN_A,\cN_B;\cI)
=
\frac{1-\mu_{\mathbf F,\mathbf H}(\cN_A,\cN_B)}{2}.
\label{eq:error-coherence-relation}
\end{equation}
\end{proposition}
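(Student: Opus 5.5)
By \cref{lem:identity-output-twirl}, the optimization may be restricted to twirled pairs $(\Theta,Z)$ of the form \eqref{eq:twirled-variables}. Contracting with the resource Choi operators then gives
\begin{equation}
J_{\cS(\cN_A,\cN_B)}=x_0\Omega_0+x_1\Omega_1+x_c\Omega_c,
\end{equation}
with real $x_\alpha$ defined in \eqref{eq:x-alpha}. Trace preservation, $\tr_{C_o}J=I_{C_i}$, forces $x_0+x_1=1$, and complete positivity forces $x_1\ge0$ and $|x_c|\le x_0$.

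The plan is to evaluate the diamond distance to the identity in closed form on this three-parameter family, and then to show that setting $T_1=0$ is without loss. Write $J_\Phi=J_{\cS(\cN_A,\cN_B)}-J_{\cI}=(x_0-1)\Omega_0+x_1\Omega_1+(x_c-1)\Omega_c$. Inside the twirled one-sided program \eqref{eq:one-sided-watrous}, $Z=z_0\Omega_0+z_1\Omega_1+z_c\Omega_c$ has partial trace $\tr_{C_o}Z=(z_0+z_1)I_{C_i}$, so $l=z_0+z_1$.

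The operator $Z-J_\Phi$ decomposes into two blocks:
\begin{itemize}[label={}]
\item a $1\times1$ block on $\operatorname{span}\{\ket{01},\ket{10}\}$ requiring $z_1\ge x_1$;
\item a $2\times2$ block on $\operatorname{span}\{\ket{00},\ket{11}\}$ requiring $z_0-(x_0-1)\ge|z_c-(x_c-1)|$.
\end{itemize}
Together with $Z\succeq0$, that is $z_1\ge0$ and $z_0\ge|z_c|$, minimizing $z_0+z_1$ gives $z_1=x_1$. Using $x_0-1=-x_1$, one then minimizes $z_0$ subject to $z_0\ge|z_c|$ and $z_0\ge -x_1+|z_c-x_c+1|$. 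Choosing $z_c=(1-x_c-x_1)/2$ balances the two constraints, and the minimum is $z_0=\max\{0,(1-x_c-x_1)/2\}$. Since $x_c\le x_0=1-x_1$, this equals $(1-x_c-x_1)/2$. Hence
\begin{equation}
\tfrac12\norm{\cS(\cN_A,\cN_B)-\cI}_\diamond=x_1+\frac{1-x_1-x_c}{2}=\frac{1+x_1-x_c}{2}.
\end{equation}
I would double-check this value against the known phase-damping case $x_1=0$, where it reduces to $(1-x_c)/2$.

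It remains to show $x_1=0$ can be imposed without loss. Replace $(T_0,T_1,T_c)$ by $(T_0+T_1,0,T_c)$. Then $x_0\mapsto1$ and $x_1\mapsto0$ while $x_c$ is unchanged, and the objective $(1+x_1-x_c)/2$ can only decrease. The following checks are needed.
\begin{itemize}[label={}]
\item \emph{Positivity.} $\Theta\succeq0$ is equivalent to $T_1\succeq0$ and $T_0\pm T_c\succeq0$, because $\Omega_0,\Omega_c$ live on $\{\ket{00},\ket{11}\}$ with eigenvectors $\ket{00}\pm\ket{11}$. The new blocks satisfy $T_0+T_1\pm T_c\succeq0$.
\item \emph{Free-class constraints.} By Table~\ref{tab:misc-disc-constraints}, MISC imposes nothing on $T_0,T_1$, and DISC requires them to be diagonal, a property preserved by summing.
\item \emph{Causal normalization.} The constraints on $C_iC_o$ enter only through partial traces over $C_o$ and $C_i$. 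Because $\tr_{C_o}\Omega_0=\tr_{C_o}\Omega_1=I_{C_i}$ and $\tr_{C_o}\Omega_c=0$, only the combination $T_0+T_1$ (tensored with identities) appears, so every comb or general-process normalization involving a trace over $C_o$ is unchanged.
\end{itemize}

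The main obstacle is this normalization check. It requires confirming, from \cref{app:general-sdp}, that every constraint of each class $\mathbf H$ is expressed through $\tr_{C_o}\Theta$ or further reductions of it, as is standard for superchannels whose output is the final system. Constraints that retain $C_o$ untraced would need a separate argument. Granting this, the output is $\Omega_0+x_c\Omega_c$, giving \eqref{eq:output-dephasing-channel}. Its action on $\ketbra{0}{1}$ follows by reading off the $\ketbra{0}{1}\ox\ketbra{0}{1}$ entry of the Choi operator.

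Finally, minimizing $(1-x_c)/2$ over the feasible set amounts to maximizing $x_c$, which yields \eqref{eq:error-coherence-relation}. Twirling guarantees that the maximum of $x_c$ over all of $\mathbf H\cap\mathbf F$ is attained by a twirled element, since $x_c$ is invariant under the twirl.
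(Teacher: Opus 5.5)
Your proof is correct and follows the paper's skeleton: twirl, show the error on the three-parameter family is $(1+x_1-x_c)/2$, then replace $(T_0,T_1,T_c)$ by $(T_0+T_1,0,T_c)$. The genuine difference is how you evaluate the diamond distance.

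The paper notes that the twirled output is a Pauli channel with probabilities \eqref{eq:pauli-probabilities}. It then imports the known fact that the normalised diamond distance between Pauli channels is the total-variation distance of their probability vectors, which gives $1-(x_0+x_c)/2$ in one line.

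You instead solve the one-sided Watrous program \eqref{eq:one-sided-watrous} in closed form on the twirled ansatz for $Z$. This is legitimate, because averaging $(\Theta,Z)$ with $\Theta$ already invariant only twirls $Z$. Your route stays inside the paper's own SDP framework, needs no external covariance result, and produces the optimal Watrous variable explicitly, consistent with the reduced constraints \eqref{eq:common-reduced-constraints}. The paper's route is shorter and more conceptual but leans on a cited Pauli-channel identity.

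Your minimisation needs two small repairs.
\begin{itemize}
\item The balancing choice is $z_c=-(1-x_c-x_1)/2$, not $+(1-x_c-x_1)/2$. With your sign, the constraint $z_0\ge -x_1+|z_c-x_c+1|$ evaluates to $\tfrac32(1-x_c-x_1)$, which is larger than the claimed minimum.
\item The lower bound should be stated rather than inferred from balancing. Adding $z_0\ge -z_c$ and $z_0\ge |z_c+1-x_c|-x_1\ge z_c+1-x_c-x_1$ gives $z_0\ge(1-x_c-x_1)/2$. Your value of the distance is then correct.
\end{itemize}

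The normalisation check you flagged is settled by inspection of \cref{app:general-sdp}. Every constraint in \eqref{eq:parallel-full-constraints}, \eqref{eq:sequential-full-constraints} and \eqref{eq:ico-full-constraints} is a function of $\tr_{C_o}\Theta=(T_0+T_1)\ox I_{C_i}$. This is exactly what the paper asserts when it says the causal constraints depend only on $S=T_0+T_1$, so no separate argument is needed.
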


\begin{proof}
The twirled output is $J_{\mathrm{out}}=x_0\Omega_0+x_1\Omega_1+x_c\Omega_c$. Trace preservation gives $x_0+x_1=1$, while complete positivity gives $x_1\ge0$ and $x_0\pm x_c\ge0$. Hence the output is the Pauli channel with probabilities
\begin{equation}
\left(
\frac{x_0+x_c}{2},
\frac{x_1}{2},
\frac{x_1}{2},
\frac{x_0-x_c}{2}
\right)
\label{eq:pauli-probabilities}
\end{equation}
for $I,X,Y,Z$, respectively. Pauli channels are jointly teleportation covariant, and their diamond distance equals the $\ell_1$ distance of their Pauli probability vectors~\cite{sacchi2005optimal,watrous2018theory}. Therefore
\begin{equation}
\frac12\norm{\cS(\cN_A,\cN_B)-\cI}_\diamond
=
1-\frac{x_0+x_c}{2}
=
\frac{1-(x_c-x_1)}{2}.
\label{eq:distance-before-t1}
\end{equation}

For any feasible $(T_0,T_1,T_c)$, replace it by
\begin{equation}
(\widetilde T_0,\widetilde T_1,\widetilde T_c)
:=(T_0+T_1,0,T_c).
\label{eq:t1-removal}
\end{equation}
Because $T_1\succeq0$ and $T_0\pm T_c\succeq0$, one has
$\widetilde T_0\pm\widetilde T_c\succeq0$. The causal normalisation constraints depend only on $S:=T_0+T_1$ and therefore remain unchanged. The MISC condition is unchanged, and under DISC the sum $T_0+T_1$ remains diagonal. The replacement preserves $x_c$, sets $\widetilde x_1=0$, and gives $\widetilde x_0=1$, so it cannot increase \cref{eq:distance-before-t1}. The optimum is therefore $(1-x_c)/2$, proving \cref{eq:error-coherence-relation}.
\end{proof}

For an input qubit
\begin{equation}
\rho=
\begin{pmatrix}
p&c\\
c^*&1-p
\end{pmatrix},
\end{equation}
the optimal twirled output in \cref{eq:output-dephasing-channel} acts as
\begin{equation}
\rho\longmapsto
\begin{pmatrix}
p&x_c c\\
x_c c^*&1-p
\end{pmatrix}.
\label{eq:xc-bloch-action}
\end{equation}
The coefficient $x_c$ therefore has an immediate operational meaning: it rescales the off-diagonal matrix elements, or equivalently the transverse Bloch-vector components. The endpoints are $x_c=1$ for the identity and $x_c=0$ for complete dephasing. By \cref{eq:error-coherence-relation}, comparing causal classes is exactly the same as comparing how much of this coherence each admissible higher-order map can preserve.

For identical resource channels, exchanging $A\leftrightarrow B$ maps $\seq_{\fixAB}\cap\mathbf F$ bijectively onto $\seq_{\fixBA}\cap\mathbf F$ and preserves the target and objective. Hence the two fixed-order optima coincide. Throughout the amplitude-damping analysis, $D_{\mathbf F}^{\seq}$ and $\mu_{\mathbf F,\seq}$ denote this common fixed-order value.

\section{Strict hierarchy for amplitude-damping purification}
\label{sec:amplitude-damping}

Amplitude damping provides a simple nonunital test in which population loss and phase survival are tied to different Kraus branches. For $0\le\varepsilon\le1$, the qubit channel is described by~\cite{nielsen2010quantum}
\begin{equation}
K_0=\ketbra{0}{0}+\sqrt{1-\varepsilon}\ketbra{1}{1},
\qquad
K_1=\sqrt{\varepsilon}\ketbra{0}{1}.
\label{eq:ad-kraus}
\end{equation}
Here $K_0$ is the no-jump branch, which preserves the logical basis while attenuating coherence by $s=\sqrt{1-\varepsilon}$; $K_1$ transfers excited-state population to the ground state and carries no phase on its own. Two channel uses therefore provide distinct population and coherence pathways that a higher-order process can correlate. This makes amplitude damping a useful complement to existing channel-purification settings based on symmetry filters, virtual multi-copy mitigation, and higher-order processing~\cite{Leung1997ApproximateQEC,Fletcher2008StructuredQEC,Das2024QuantumFilters,Liu2025VirtualChannelPurification,Tsubouchi2025SymmetricChannelVerification,Niwa2026ScalingOptimalPurification,zhao2025communication}. We consider the two-use task $(\cN_{\rm AD}^{\varepsilon})^{\ox2}\to\cI$ and write
\begin{equation}
s:=\sqrt{1-\varepsilon},
\qquad
R_\varepsilon:=\sqrt{\varepsilon^2+1-\varepsilon}.
\label{eq:s-R-definitions}
\end{equation}
The analytic work is carried out under MISC; \cref{sec:disc-robustness} then shows that every optimum constructed here is also attainable under DISC.

\subsection{Exact fixed-order sequential error}

After the phase twirl, most fixed-order comb variables are irrelevant to the objective. The diagonal block $S$ distributes weight among three coherence-carrying pathways, while $T_c$ couples histories whose relative phase survives to the output. Positivity limits each coupling by the geometric mean of the corresponding populations. At saturation, the fixed-order SDP reduces to three nonnegative amplitudes $x,y,z$ on the unit sphere.

\begin{proposition}[Exact fixed-order sequential error]
\label{prop:sequential-main}
For every $0<\varepsilon<1$,
\begin{equation}
D_{\misc}^{\seq}
\left(
\cN_{\rm AD}^{\varepsilon},
\cN_{\rm AD}^{\varepsilon};
\cI
\right)
=
\frac12
\left[
1-
\max_{\substack{x,y,z\ge0\\x^2+y^2+z^2=1}}
F_\varepsilon(x,y,z)
\right],
\label{eq:sequential-variational}
\end{equation}
where
\begin{equation}
F_\varepsilon(x,y,z)
:=
2sx\sqrt{y^2+s^2z^2}
+2\varepsilon s yz
+\varepsilon s z^2.
\label{eq:F-epsilon}
\end{equation}
The same optimal error is obtained for the two fixed orders
$A\prec B$ and $B\prec A$.
\end{proposition}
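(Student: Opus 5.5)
By \cref{prop:output-coherence-main} it suffices to show that $\mu_{\misc,\seq}=\max F_\varepsilon$ over the nonnegative unit sphere. We work with the order $A\prec B$; the reverse order then follows from the exchange-symmetry remark after \cref{prop:output-coherence-main}.

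We take the reduced twirled variables with $T_1=0$. The comb $\Theta=T_0\ox\Omega_0+T_c\ox\Omega_c$ is on systems $A_iA_oB_iB_oC_iC_o$ with order $C_i\to A_i\to A_o\to B_i\to B_o\to C_o$. The comb normalisation constrains only $S=T_0$ through its marginals. Positivity requires $T_0\pm T_c\succeq0$, and the MISC condition requires $\cD_{AB}(T_c)=0$. The objective is the contraction $x_c=(J_{\cN}\ox J_{\cN})*T_c$.

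\textbf{Step 1: compress the comb.} The amplitude-damping Choi operator is $J=\ketbra{\Phi}{\Phi}'+\varepsilon\ketbra{10}{10}$ with $\ket{\Phi}'=\ket{00}+s\ket{11}$. It is invariant under the input/output phase action $\overline U_\phi\ox U_\phi$ on each channel. We will twirl $\Theta$ over these local phase symmetries of the resources (they fix $J_{\cN}$, are incoherent, and preserve the comb equations) and also over the phase group of the memory. This block-diagonalises $T_0$ by charge and forces $T_c$ to connect only histories whose total charges differ by the output coherence.

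Contracting with $J\ox J$ then shows that $x_c$ depends on $T_c$ only through a small number of off-diagonal entries. Each such entry couples a history ending in output $\ket{0}$ with one ending in $\ket{1}$. There are three coherence-carrying pathway pairs:
\begin{itemize}[label={}]
\item the input coherence is carried by the no-jump branches of both uses, giving weight $s\cdot s$ or $s$ times a jump-dependent amplitude;
\item it is routed through the first use with the second use's jump as a heralded classical flag, giving the $\varepsilon s\,yz$ cross term;
\item it is stored in memory while a use is spent on the $\varepsilon$ population, giving the $\varepsilon s z^2$ term.
\end{itemize}
MISC forbids diagonal entries of $T_c$ in $AB$, which is why only genuinely coherence-transferring entries survive.

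\textbf{Step 2: positivity and saturation.} Positivity bounds each coupling by the geometric mean of two population weights, $|t_{jk}|\le\sqrt{p_jp_k}$. The comb normalisation fixes the total population weights. Writing the populations as squared amplitudes $x^2,y^2,z^2$ with $x^2+y^2+z^2=1$ (normalisation at the first comb tooth), the saturated contraction gives exactly $F_\varepsilon$.

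The term $2sx\sqrt{y^2+s^2z^2}$ arises because the $x$-pathway couples to a superposition of the $y$- and $z$-pathways. Optimising that coupling direction by Cauchy--Schwarz yields the norm $\sqrt{y^2+s^2z^2}$.

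\textbf{Step 3: matching bound and achievability.} For the upper bound we will produce this reduction as an inequality, $x_c\le F_\varepsilon(x,y,z)$, for amplitudes defined from the populations of any feasible comb. For achievability we will write an explicit rank-one-per-block comb (an isometric encoder into memory, a conditional re-encoding after the first use, and a decoder) that saturates each Cauchy--Schwarz step for a given $(x,y,z)$. We then verify the comb equations and $\cD_{AB}(T_c)=0$ directly.

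\textbf{Main obstacle.} The main obstacle is Step 1/2: identifying, inside a comb with an unbounded memory, which matrix entries survive the contraction, and showing that the comb's nested normalisation constraints collapse to the single sphere constraint rather than something looser. I would first attempt it via a dual certificate. For fixed $\varepsilon$, the maximiser of $F_\varepsilon$ gives a Lagrange multiplier $\lambda$ (the top eigenvalue of the associated $4\times4$ symmetric form in $(x,\sqrt{y^2+s^2z^2})$-type variables). I would construct dual comb operators $\lambda\,(\text{identity-type normalisers})-(J\ox J)^{T}\ox\Omega_c$-contributions that are positive semidefinite. This avoids explicitly parametrising the memory.
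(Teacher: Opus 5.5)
\textbf{The gap is the converse (upper bound).} Your reduction to $\mu_{\misc,\seq}$, the resource phase twirl, and the exchange argument for $B\prec A$ are fine. The achievability half is plausible, though you would still have to exhibit the comb; the paper writes it directly in Choi space as a diagonal $S$ with a four-term $T_c$.

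Steps 2--3 propose a pointwise bound $x_c\le F_\varepsilon(x,y,z)$, with $x^2,y^2,z^2$ read off from the populations of the given comb, via $|t_{jk}|\le\sqrt{p_jp_k}$ and one Cauchy--Schwarz step. After the charge-block reduction, however, the relevant variables are:
\begin{itemize}
\item a $4\times4$ block $S_{00}$ on $(\ket1,\ket4,\ket{13},\ket{16})$, with four populations $a,b,g,h$ and a comb-allowed off-diagonal entry $u$;
\item the blocks $S_{0,-1}$ and $S_{-1,0}$, the latter with populations $m_0,n_0$;
\item a coherence block $C_{00}$ with six free off-diagonal entries.
\end{itemize}
These are linked only by $g+h=m_0+n_0$ and $a+b+m_0+n_0=1$.

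Entrywise geometric-mean bounds overcount: the $(0,0)$ block alone would allow $2s\sqrt{ab}+2s^2\sqrt{ah}+2s^3\sqrt{bh}$. The star value $2s\sqrt{a(b+s^2h)}$ from your Cauchy--Schwarz step is not an upper bound either. Joint positivity of $S_{00}\pm C_{00}$ also permits the $\ket4$--$\ket{16}$ coupling (weight $s^3$). Whenever none of $a,s^2b,s^4h$ exceeds the sum of the other two, one can reach $v^{\mathsf T}C_{00}v=a+s^2b+s^4h$.

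Here is a concrete counterexample at $\varepsilon=1/2$. Take the diagonal comb with $(a,b,g,h)=(0.1,0.4,0,0.5)$ and $m_0=n_0=1/4$; this is the paper's achievability comb for $(x^2,y^2,z^2)=(0.1,0.4,0.5)$. Keep $r=yz$ and $t=z^2/2$, but replace the star $C_{00}$ by the balanced zero-diagonal coupling on $(\ket1,\ket4,\ket{16})$. This point is $A\prec B$ and MISC feasible with $x_c\approx0.918$, whereas $F_{1/2}(x,y,z)\approx0.854$.

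The population $g$ of $\ket{1100}$ is also not among your three pathways. Yet it carries weight $s$ in $C_{00}$ and feeds $t$ through $m_0+n_0=g+h$. So the pointwise inequality is false as stated. Only the global bound $x_c\le\max F_\varepsilon$ holds (here $\mu\approx0.934$), and it cannot be obtained by saturating positivity pathway by pathway.

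\textbf{What is missing is the certificate you left as a fallback.} It has to be built from the optimiser of $F_\varepsilon$, not from the given comb. The paper proceeds in three steps:
\begin{itemize}
\item It shows the maximiser is interior with $\mu>s$.
\item From the Lagrange equations it derives $z>y$, $sz/y\le1$ (using $\mu\le1$), and $\Delta=\mu-\varepsilon s-s^2/\mu>0$.
\item It defines explicit $P_\pm$ paired with $S_{00}\pm C_{00}$, $Q_\pm$ paired with $S_{0,-1}\pm K(r)$, and $R_\pm$ paired with $S_{-1,0}\pm K(t)$, such that $\mu-x_c$ equals the sum of the six overlaps identically on the feasible set.
\end{itemize}
In that identity the $u$ coefficients cancel between the $P$ and $Q$ terms by stationarity. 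Positivity of $P_-$ and $Q_+$ requires exactly $\Delta\ge0$ and $sz/y\le1$. None of these ingredients appears in your proposal, nor does the charge-block form of $S$ with its relations, so as written it proves only the lower bound. The memory dimension is also not the obstacle you describe: in Choi space the comb lives on the fixed space $A_iA_oB_iB_oC_iC_o$. The real difficulty is the surplus population and coupling variables.
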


\begin{proof}
By \cref{prop:output-coherence-main}, it is enough to determine the
maximum output-coherence coefficient
\begin{equation}
\mu_{\misc,\seq}
\left(
\cN_{\rm AD}^{\varepsilon},
\cN_{\rm AD}^{\varepsilon}
\right),
\label{eq:seq-mu-proof-definition}
\end{equation}
since
\begin{equation}
D_{\misc}^{\seq}
\left(
\cN_{\rm AD}^{\varepsilon},
\cN_{\rm AD}^{\varepsilon};
\cI
\right)
=
\frac{
1-
\mu_{\misc,\seq}
\left(
\cN_{\rm AD}^{\varepsilon},
\cN_{\rm AD}^{\varepsilon}
\right)
}{2}.
\label{eq:seq-error-mu-proof}
\end{equation}

For achievability, \cref{app:explicit-achievability} gives an explicit diagonal fixed-order comb and coherence block for every $x,y,z\ge0$ with $x^2+y^2+z^2=1$. The construction is MISC feasible, satisfies the $A\prec B$ comb constraints, and produces
\begin{equation}
    x_c=F_\varepsilon(x,y,z).
    \label{eq:seq-achievable-xc}
\end{equation}
Hence
\begin{equation}
\mu_{\misc,\seq}
\left(
\cN_{\rm AD}^{\varepsilon},
\cN_{\rm AD}^{\varepsilon}
\right)
\ge
\max_{\substack{x,y,z\ge0\\x^2+y^2+z^2=1}}
F_\varepsilon(x,y,z).
\label{eq:seq-achievability-bound}
\end{equation}

For the converse, phase twirling and complex conjugation reduce an
arbitrary fixed-order feasible point to the three real charge blocks
described in \cref{app:sequential-certificate}. Let $(x,y,z)$ maximise
$F_\varepsilon$ over $x,y,z\ge0$ with $x^2+y^2+z^2=1$, and define
\begin{equation}
\mu:=F_\varepsilon(x,y,z).
\label{eq:seq-proof-mu}
\end{equation}
The stationarity equations of this constrained maximisation yield the
positive certificate operators $P_\pm,Q_\pm,R_\pm$ constructed in
\cref{eq:seq-certificate-operators}. For every reduced fixed-order
feasible point, they satisfy
\begin{equation}
\begin{aligned}
\mu-x_c={}&
\langle P_+,S_{00}+C_{00}\rangle
+
\langle P_-,S_{00}-C_{00}\rangle\\
&+
\langle Q_+,S_{0,-1}+K(r)\rangle
+
\langle Q_-,S_{0,-1}-K(r)\rangle\\
&+
\langle R_+,S_{-1,0}+K(t)\rangle
+
\langle R_-,S_{-1,0}-K(t)\rangle.
\end{aligned}
\label{eq:seq-certificate-main}
\end{equation}
Every matrix appearing as the second argument of an inner product on the
right-hand side is positive semidefinite by primal feasibility, while
\cref{app:sequential-certificate} proves
\begin{equation}
P_\pm\succeq0,
\qquad
Q_\pm\succeq0,
\qquad
R_\pm\succeq0
\label{eq:seq-certificate-positive-main}
\end{equation}
for $0<\varepsilon<1$. Therefore
\begin{equation}
x_c\le\mu
=
\max_{\substack{x,y,z\ge0\\x^2+y^2+z^2=1}}
F_\varepsilon(x,y,z).
\label{eq:seq-converse-bound}
\end{equation}
Combining the achievable and converse bounds gives
\begin{equation}
\mu_{\misc,\seq}
\left(
\cN_{\rm AD}^{\varepsilon},
\cN_{\rm AD}^{\varepsilon}
\right)
=
\max_{\substack{x,y,z\ge0\\x^2+y^2+z^2=1}}
F_\varepsilon(x,y,z).
\label{eq:seq-mu-variational-proof}
\end{equation}
Substituting this identity into
\cref{eq:seq-error-mu-proof} proves
\cref{eq:sequential-variational}.

Finally, exchanging $A$ and $B$ maps the feasible set for the order
$A\prec B$ bijectively onto that for $B\prec A$ and leaves the two
identical resource channels and the identity target invariant.
Therefore the two fixed orders have the same optimal transformation
error.
\end{proof}

\begin{corollary}[Spectral expression for the fixed-order optimum]
\label{cor:sequential-spectral}
For $0<\varepsilon<1$, let $s=\sqrt{1-\varepsilon}$ and define the real symmetric matrix
\begin{equation}
H_\varepsilon
:=
\begin{pmatrix}
0 & 0 & s & 0\\
0 & 0 & 0 & s^2\\
s & 0 & 0 & \varepsilon s\\
0 & s^2 & \varepsilon s & \varepsilon s
\end{pmatrix}.
\label{eq:sequential-H-epsilon}
\end{equation}
Then
\begin{equation}
\mu_{\misc,\seq}
\left(
\cN_{\rm AD}^{\varepsilon},
\cN_{\rm AD}^{\varepsilon}
\right)
=
\lambda_{\max}(H_\varepsilon),
\label{eq:sequential-mu-eigenvalue}
\end{equation}
and hence
\begin{equation}
D_{\misc}^{\seq}
\left(
\cN_{\rm AD}^{\varepsilon},
\cN_{\rm AD}^{\varepsilon};
\cI
\right)
=
\frac{1-\lambda_{\max}(H_\varepsilon)}{2}.
\label{eq:sequential-error-eigenvalue}
\end{equation}
Equivalently, $\mu_{\misc,\seq}$ is the largest real root of
\begin{equation}
p_\varepsilon(\mu)
:=
(\mu^2-s^2)
(\mu^2-\varepsilon s\mu-s^4)
-
\varepsilon^2s^2\mu^2
=
0.
\label{eq:sequential-quartic}
\end{equation}
By \cref{prop:misc-disc-equality-main}, the same expressions also give the DISC fixed-order optimum.
\end{corollary}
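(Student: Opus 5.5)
The plan is to turn the variational formula of \cref{prop:sequential-main} into a Rayleigh quotient, by removing the square root with a Cauchy--Schwarz relaxation. For fixed $x\ge0$ and $y,z\ge0$,
\begin{equation*}
x\sqrt{y^2+s^2z^2}=\max_{\substack{a,b\ge0\\ a^2+b^2=x^2}}\left(a y+ b\, s z\right),
\end{equation*}
with the maximum attained at $(a,b)\propto(y,sz)$. Substituting this into \cref{eq:F-epsilon} and writing $v=(a,b,c,d)^{\mathsf T}:=(a,b,y,z)^{\mathsf T}$ gives a quadratic form. The constraint $x^2+y^2+z^2=1$ becomes $\norm{v}=1$, and the objective becomes
\begin{equation*}
2s\,ac+2s^2\,bd+2\varepsilon s\,cd+\varepsilon s\,d^2=v^{\mathsf T}H_\varepsilon v .
\end{equation*}
Hence $\max F_\varepsilon$ equals the maximum of $v^{\mathsf T}H_\varepsilon v$ over unit vectors with nonnegative entries.

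Next I drop the sign constraint. All entries of $H_\varepsilon$ are nonnegative for $0<\varepsilon<1$. Replacing $v$ by $|v|$ (entrywise) therefore keeps the norm fixed and cannot decrease $v^{\mathsf T}H_\varepsilon v$. So the sign-constrained maximum equals the unconstrained one, which is $\lambda_{\max}(H_\varepsilon)$ by the Rayleigh--Ritz principle.

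Conversely, any nonnegative unit $v$ maps back to a feasible triple by setting $x=\sqrt{a^2+b^2}$, $y=c$, $z=d$. For that triple $F_\varepsilon(x,y,z)\ge v^{\mathsf T}H_\varepsilon v$, by Cauchy--Schwarz. Thus the two maxima coincide, and \cref{prop:sequential-main} gives \eqref{eq:sequential-mu-eigenvalue}. The error formula \eqref{eq:sequential-error-eigenvalue} then follows from \cref{prop:output-coherence-main}.

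For the quartic, I compute $\det(\mu I-H_\varepsilon)$ by expanding along the first two rows, which are sparse. Eliminate the coordinates $a$ and $b$ through $\mu a=sc$ and $\mu b=s^2 d$, valid for $\mu\neq0$. This reduces the eigen-equation to the $2\times2$ system
\begin{equation*}
(\mu^2-s^2)c=\varepsilon s\mu\, d,\qquad (\mu^2-\varepsilon s\mu-s^4)d=\varepsilon s\mu\, c .
\end{equation*}
Its determinant condition is exactly $p_\varepsilon(\mu)=0$. This identifies $p_\varepsilon$ with the characteristic polynomial: both are monic quartics, and they agree on all $\mu\neq0$.

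The largest real root is then $\lambda_{\max}(H_\varepsilon)$, since $H_\varepsilon$ is real symmetric and all its eigenvalues are real roots of $p_\varepsilon$. The DISC statement is simply the cited \cref{prop:misc-disc-equality-main}.

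The main obstacle I anticipate is bookkeeping rather than conceptual. Specifically, I must check that the quadratic form matches $H_\varepsilon$ entry by entry, with the factor $2$ for off-diagonal terms and the diagonal $\varepsilon s$ term. I must also check that the nonnegativity reduction is legitimate in both directions. The determinant expansion should be verified by a direct cofactor computation rather than relying only on the elimination argument.
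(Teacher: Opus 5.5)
Your proposal is correct and follows the paper's proof essentially step for step: the same Cauchy--Schwarz lift of $x\sqrt{y^2+s^2z^2}$ to a four-variable problem, the same identification of the lifted objective with the quadratic form of $H_\varepsilon$, and the same entrywise-absolute-value argument that drops the sign constraint before applying Rayleigh--Ritz. For the quartic, the paper uses the block Schur complement $\det(\mu I_4-H_\varepsilon)=\det(\mu^2I_2-\mu B-D^2)$, with $D=\diag(s,s^2)$ and $B$ the lower-right $2\times2$ block of $H_\varepsilon$; this is the determinant-level form of your elimination and directly gives the value identity you assert for $\mu\neq0$. Your elimination on its own only matches the nonzero root sets, but that already suffices for the largest-root claim because $\lambda_{\max}(H_\varepsilon)>0$.
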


\begin{proof}
For any $x,y,z\ge0$, Cauchy--Schwarz gives
\begin{equation}
x\sqrt{y^2+s^2z^2}
=
\max_{\substack{u,v\ge0\\u^2+v^2=x^2}}
\left(uy+svz\right).
\label{eq:sequential-cauchy-lift}
\end{equation}
If $w:=\sqrt{y^2+s^2z^2}>0$, equality is attained at
\begin{equation}
u=\frac{xy}{w},
\qquad
v=\frac{xsz}{w},
\label{eq:sequential-cauchy-equality}
\end{equation}
while for $w=0$ both sides of \cref{eq:sequential-cauchy-lift} vanish. Therefore the variational objective in \cref{eq:sequential-variational} satisfies
\begin{equation}
\begin{aligned}
&\max_{\substack{x,y,z\ge0\\x^2+y^2+z^2=1}}
\left[
2sx\sqrt{y^2+s^2z^2}
+2\varepsilon syz
+\varepsilon sz^2
\right]
\\
&\qquad=
\max_{\substack{u,v,y,z\ge0\\u^2+v^2+y^2+z^2=1}}
\left[
2suy
+2s^2vz
+2\varepsilon syz
+\varepsilon sz^2
\right].
\end{aligned}
\label{eq:sequential-lifted-rayleigh}
\end{equation}
Writing
\begin{equation}
q:=(u,v,y,z)^{\mathsf T},
\end{equation}
the last objective is $q^{\mathsf T}H_\varepsilon q$. Since $H_\varepsilon$ is real symmetric and entrywise nonnegative, replacing any real vector $q$ by its entrywise absolute value preserves its Euclidean norm and cannot decrease the quadratic form. Hence the nonnegativity constraint does not change the largest Rayleigh quotient, and
\begin{equation}
\max_{\substack{q\ge0\\q^{\mathsf T}q=1}}
q^{\mathsf T}H_\varepsilon q
=
\lambda_{\max}(H_\varepsilon).
\label{eq:sequential-rayleigh-max}
\end{equation}
Combining this equality with \cref{prop:sequential-main,eq:error-coherence-relation} proves \cref{eq:sequential-mu-eigenvalue,eq:sequential-error-eigenvalue}.

To derive the polynomial, write
\begin{equation}
H_\varepsilon
=
\begin{pmatrix}
0 & D\\
D & B
\end{pmatrix},
\qquad
D:=\diag(s,s^2),
\qquad
B:=
\begin{pmatrix}
0 & \varepsilon s\\
\varepsilon s & \varepsilon s
\end{pmatrix}.
\label{eq:sequential-H-blocks}
\end{equation}
For $\mu\neq0$, the Schur complement gives
\begin{equation}
\begin{aligned}
\det(\mu I_4-H_\varepsilon)
&=
\det\!\left(\mu^2I_2-\mu B-D^2\right)
\\
&=
(\mu^2-s^2)
(\mu^2-\varepsilon s\mu-s^4)
-\varepsilon^2s^2\mu^2.
\end{aligned}
\label{eq:sequential-characteristic-polynomial}
\end{equation}
Both sides are polynomials in $\mu$, so the identity extends to $\mu=0$. Because $H_\varepsilon$ is real symmetric, all of its eigenvalues are real; therefore $\lambda_{\max}(H_\varepsilon)$ is the largest real root of \cref{eq:sequential-quartic}.
\end{proof}

The variational formula also shows where the fixed-order gain comes from. The weights $x^2,y^2,z^2$ label three relevant pathways, and the square roots are the geometric-mean bounds imposed by positivity. The factor $s$ records the no-jump coherence that survives a channel use. The first term in $F_\varepsilon$ is memory-assisted coherence transfer; the terms proportional to $\varepsilon s$ use relaxation information available before the second call. The intermediate memory therefore enlarges the set of coherence pairings rather than merely changing the preparation or decoding.

\subsection{Parallel strategies are strictly weaker}

\begin{lemma}[Strict parallel-to-sequential separation]
\label{lem:no-parallel-saturation-main}
For every $0<\varepsilon<1$,
\begin{equation}
D_{\misc}^{\para}
\left(
\cN_{\rm AD}^{\varepsilon},
\cN_{\rm AD}^{\varepsilon};
\cI
\right)
>
D_{\misc}^{\seq}
\left(
\cN_{\rm AD}^{\varepsilon},
\cN_{\rm AD}^{\varepsilon};
\cI
\right).
\label{eq:strict-par-seq-error}
\end{equation}
\end{lemma}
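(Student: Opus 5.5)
By \cref{prop:output-coherence-main}, the claim is equivalent to $\mu_{\misc,\para}<\mu_{\misc,\seq}=\lambda_{\max}(H_\varepsilon)$. The plan has three parts: an upper bound on the parallel coherence coefficient, a comparison with the sequential optimum from \cref{cor:sequential-spectral}, and a strictness check for every $0<\varepsilon<1$.

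\textbf{Reducing the parallel problem.} A parallel superchannel is a fixed-order comb with no memory passed between the two calls. Equivalently, its Choi operator satisfies the comb normalization in which the input legs $A_iB_i$ are prepared jointly, without dependence on $A_o$ or $B_o$. After the twirl of \cref{lem:identity-output-twirl} and the $T_1$ removal of \cref{prop:output-coherence-main}, the variables are $S=T_0$ and $T_c$ with $S\pm T_c\succeq0$ and $\cD_{AB}(T_c)=0$.

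Two further symmetries simplify the problem:
\begin{itemize}
\item the phase covariance of amplitude damping, $\cN(U_\phi\cdot U_\phi^\dagger)=U_\phi\cN(\cdot)U_\phi^\dagger$;
\item the $A\leftrightarrow B$ exchange.
\end{itemize}
Using them, I would restrict $T_c$ to the charge sectors that contract nontrivially with $J_{\cN_{\rm AD}}^{\ox2}$. The amplitude-damping Choi operator is $J=\ketbra{00}{00}+\varepsilon\ketbra{10}{10}+(1-\varepsilon)\ketbra{11}{11}+s(\ketbra{00}{11}+\ketbra{11}{00})$. Only off-diagonal terms with a net phase charge survive, so $x_c$ becomes a sum of coherence couplings weighted by $s$ or $s^2$.

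\textbf{Parallel upper bound.} As in the sequential case, positivity bounds each coupling by the geometric mean of the diagonal populations of $S$. The difference is that the parallel normalization forces the populations on the input legs to come from a single state $\rho_{A_iB_i}$ and to be independent of the outputs. The term $2\varepsilon s yz$, which uses relaxation information before the second call, should then disappear or become decoupled. What remains should be an analogous quadratic form $q^{\mathsf T}H^{\para}_\varepsilon q$ with $H^{\para}_\varepsilon\le H_\varepsilon$ entrywise, with the $\varepsilon s$ cross entry removed. This gives $\mu_{\misc,\para}\le\lambda_{\max}(H^{\para}_\varepsilon)$.

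I would prove this bound with a dual certificate in the style of \cref{eq:seq-certificate-main}. I would build positive operators from the Perron eigenvector of $H^{\para}_\varepsilon$ and verify $\mu^{\para}-x_c=\sum\langle\text{certificate},\text{feasible PSD block}\rangle$.

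\textbf{Strictness.} I would apply Perron--Frobenius. The matrix $H_\varepsilon$ is nonnegative and, for $0<\varepsilon<1$, irreducible: its graph connects $1$--$3$, $3$--$4$ and $4$--$2$. Its Perron vector is therefore strictly positive. Consequently, any nonnegative $H^{\para}_\varepsilon\le H_\varepsilon$ with strictly smaller entries somewhere has strictly smaller spectral radius.

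If the parallel reduced form is not literally entrywise dominated, I would fall back to a direct comparison. I would take the parallel optimum, which may come from a simpler closed form such as a two-parameter Rayleigh problem, compute it explicitly, and check that it lies strictly below the largest root of $p_\varepsilon$. I would do this by evaluating $p_\varepsilon$ at the parallel value and showing it is negative, with $p_\varepsilon\to+\infty$ beyond.

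\textbf{Main obstacle.} The hard step is the parallel converse: identifying exactly which sequential coupling the no-memory constraint kills, and producing a certificate valid for all $\varepsilon$. I would first obtain the parallel optimum numerically from the reduced SDP to guess the closed form. I would then extract the dual operators from the stationarity equations, as was done for $P_\pm,Q_\pm,R_\pm$.
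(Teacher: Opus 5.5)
\textbf{Gap.} The step that carries the whole lemma, an upper bound on $\mu_{\misc,\para}$ strictly below $\lambda_{\max}(H_\varepsilon)$, is only planned. You give no parallel certificate, no closed form and no comparison. Worse, the structural guess behind your Perron--Frobenius step is false, so that route cannot be completed as described.

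Here is a parallel point that breaks it. Take $x^2+y^2+z^2=1$, $T_1=0$, and $T_0=\rho_{A_iB_i}\ox I_{A_oB_o}$ with $\rho_{A_iB_i}=\diag(x^2,y^2/2,y^2/2,z^2)$. In the paper's charge-block notation this gives $S_{00}=\diag(a,b,g,h)=\diag(x^2,y^2/2,y^2/2,z^2)$, $S_{0,-1}=\diag(b,h)$ and $S_{-1,0}=\diag(g,h)$. On the three contributing sectors, let $T_c$ be as follows, and zero elsewhere:
\begin{itemize}
\item on the $S_{00}$ sector, $C_{00}=D^{1/2}(e_1\hat w^{\mathsf T}+\hat w e_1^{\mathsf T})D^{1/2}$, where $D=S_{00}$ and $\hat w$ is the unit vector along $(0,s\sqrt b,s\sqrt g,s^2\sqrt h)$;
\item on the $S_{0,-1}$ and $S_{-1,0}$ sectors, $K(r)$ and $K(t)$ with $r=t=yz/\sqrt2$.
\end{itemize}
This point is parallel- and MISC-feasible (even DISC-feasible). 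It has
\[
x_c=2sx\sqrt{y^2+s^2z^2}+2\sqrt2\,\varepsilon s\,yz.
\]
So the no-memory constraint does not remove the relaxation coupling; it enhances it from $2\varepsilon s\,yz$ to $2\sqrt2\,\varepsilon s\,yz$. What parallel processing loses is the $\varepsilon s z^2$ term. In the sequential construction that term comes from choosing $S_{-1,0}=\diag(z^2/2,z^2/2)$ independently of $(g,h)=(0,z^2)$.

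After the same Cauchy--Schwarz lift, this parallel family gives a matrix with $(3,4)$ entry $\sqrt2\,\varepsilon s>\varepsilon s$ and $(4,4)$ entry $0$, so it is not entrywise below $H_\varepsilon$. At $\varepsilon=1/2$ it reaches $x_c=\cos(\pi/8)\approx0.924$. That is still below the sequential value $\approx0.934$, consistent with the lemma. But $H_\varepsilon$ with its $\varepsilon s$ cross entry deleted has $\lambda_{\max}=1/\sqrt2$, so your conjectured bound is violated. Your fallback would need the exact parallel optimum plus a new converse certificate valid for every $\varepsilon$, and you supply neither.

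\textbf{What closes the gap.} You do not need the parallel optimum at all. Because $\para\subseteq\seq$, the identity \eqref{eq:seq-certificate-main} holds at every parallel point. A parallel point attaining $\mu$ would therefore make all six nonnegative overlaps vanish. Parallel normalization ties the blocks together as $S_{00}=\diag(a,b,g,h)$, $S_{0,-1}=\diag(b,h)$, $S_{-1,0}=\diag(g,h)$; this is exactly the freedom the intermediate memory provides. The strictly positive parts of the existing certificate then cascade:
\begin{itemize}
\item $P_-\succeq\Delta e_3e_3^{\mathsf T}$ with $\Delta>0$ (\cref{lem:seq-interior-properties,lem:seq-certificate-positivity}) forces $g=0$, and positivity of $\diag(0,h)\pm K(t)$ then gives $t=0$;
\item the $R_+$ term then forces $h=0$, and positivity of $\diag(b,0)\pm K(r)$ gives $r=0$;
\item the $Q_-$ term, with weight $\beta z^2>0$, forces $b=0$.
\end{itemize}
Normalization then gives $a=1$, positivity of $S_{00}\pm C_{00}$ gives $C_{00}=0$, and so $x_c=0<s\le\mu$, a contradiction. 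Finally, the parallel feasible set is compact, so its maximum is attained and lies strictly below $\mu$. This is the paper's argument: it reuses the complementary-slackness conditions of the sequential dual solution instead of solving a second SDP.
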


\begin{proof}
By \cref{prop:output-coherence-main}, the claimed inequality is equivalent to
\begin{equation}
\mu_{\misc,\para}
\left(
\cN_{\rm AD}^{\varepsilon},
\cN_{\rm AD}^{\varepsilon}
\right)
<
\mu_{\misc,\seq}
\left(
\cN_{\rm AD}^{\varepsilon},
\cN_{\rm AD}^{\varepsilon}
\right).
\label{eq:strict-par-seq-mu}
\end{equation}
Suppose that a parallel feasible point attains the sequential optimum $\mu$. Since every parallel point is sequentially feasible, equality must hold in each nonnegative term of \cref{eq:seq-certificate-main}. Under parallel normalisation the relevant charge blocks reduce to
\begin{equation}
S_{00}=\diag(a,b,g,h),
\quad
S_{0,-1}=\diag(b,h),
\quad
S_{-1,0}=\diag(g,h),
\quad
a+b+g+h=1.
\label{eq:parallel-charge-blocks}
\end{equation}
The certificate has $P_-\succeq\Delta\ketbra{e_3}{e_3}$ with $\Delta>0$, so equality forces $g=0$. Positivity of $S_{-1,0}\pm K(t)$ then gives $t=0$, and the vanishing $R_+$ term gives $h=0$. Positivity of $S_{0,-1}\pm K(r)$ gives $r=0$, while the vanishing $Q_-$ term and its strictly positive coefficient force $b=0$. Hence $a=1$. Finally, $S_{00}\pm C_{00}\succeq0$ together with $\diag(C_{00})=0$ forces $C_{00}=0$, and therefore $x_c=0$.

This is impossible because the feasible point $x=y=1/\sqrt2$, $z=0$ in \cref{eq:sequential-variational} gives
\begin{equation}
\mu_{\misc,\seq}(\cN_{\rm AD}^{\varepsilon},\cN_{\rm AD}^{\varepsilon})
\ge s>0.
\label{eq:seq-positive-lower-bound}
\end{equation}
Thus no parallel point can attain the sequential optimum. The parallel feasible set is closed and trace normalised in a finite-dimensional space, so it is compact and the inequality is strict.
\end{proof}

The separation from parallel processing is encoded directly in the equality conditions of the dual certificate. Matching the sequential optimum would force every positive overlap in \cref{eq:seq-certificate-main} to vanish. Parallel normalization then propagates zero weights through the coherence-supporting blocks until $x_c=0$, contradicting the positive sequential value. In this precise sense, the advantage requires information carried through the intermediate memory from the first call to the second.

\subsection{Exact general-process transformation error}

\begin{proposition}[Exact general-process transformation error]
\label{prop:ico-main}
For every $0<\varepsilon<1$ with $R_\varepsilon:=\sqrt{\varepsilon^2+1-\varepsilon}$, 
\begin{equation}
D_{\misc}^{\gen}
\left(
\cN_{\rm AD}^{\varepsilon},
\cN_{\rm AD}^{\varepsilon};
\cI
\right)
=
\frac12
\left[
1-
s\left(
\varepsilon+R_\varepsilon
\right)
\right].
\label{eq:exact-ico-error}
\end{equation}
\end{proposition}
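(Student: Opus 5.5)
By \cref{prop:output-coherence-main}, it suffices to show
\[
\mu_{\misc,\gen}\left(\cN_{\rm AD}^{\varepsilon},\cN_{\rm AD}^{\varepsilon}\right)=s(\varepsilon+R_\varepsilon).
\]
The plan follows the same template as \cref{prop:sequential-main}: an explicit feasible construction for the lower bound, and a dual certificate for the upper bound.

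\textbf{Reduction.} We twirl the general-process SDP with the output phase group, as in \cref{lem:identity-output-twirl}. We also twirl with the joint input phase symmetries of the two amplitude-damping channels. Each $J_{\cN_{\rm AD}^\varepsilon}$ is invariant under $\overline U_\phi\ox U_\phi$ on its input and output legs, and these unitaries are incoherent, so MISC and the general normalisation are preserved. In addition, we apply the swap $A\leftrightarrow B$ and complex conjugation. After this, $S=T_0$ (with $T_1=0$) is block-diagonal by total charge and $T_c$ lives only in the charge sector that couples to $\Omega_c$. The general-process constraints involve only $\tr_{C_o}$ and the reduced traces of $S$. They should collapse to a few linear conditions on diagonal populations, much weaker than the comb equations of \cref{app:sequential-certificate}: each party individually sees a valid channel environment, with no ordering constraint between the parties. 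Contracting with $J_{\cN_{\rm AD}^\varepsilon}^{\ox2}$, only histories in which the relative phase survives contribute to $x_c$. Each such pairing contributes with weight $s$ (a no-jump branch on one use) or $\varepsilon s$ (a jump whose population is redirected and then passes a no-jump branch).

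\textbf{Achievability.} The target value $s(\varepsilon+R_\varepsilon)$ is the largest eigenvalue of the $2\times2$ matrix
\[
s\begin{pmatrix}\varepsilon & \sqrt{1-\varepsilon}\\ \sqrt{1-\varepsilon} & \varepsilon\end{pmatrix}
\]
only approximately, so instead we use the form $\lambda=s\varepsilon+s\sqrt{\varepsilon^2+(1-\varepsilon)}$. This is $s\,\lambda_{\max}$ of
\[
\begin{pmatrix}2\varepsilon & 1\\ 1-\varepsilon & 0\end{pmatrix}
\]
after symmetrisation, i.e.\ of $\begin{pmatrix}\varepsilon & \sqrt{1-\varepsilon}\\ \sqrt{1-\varepsilon}&\varepsilon\end{pmatrix}$ shifted. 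Indeed the eigenvalues of $\begin{pmatrix} 2\varepsilon & s\\ s & 0\end{pmatrix}$ are $\varepsilon\pm R_\varepsilon$, so the value is $s\,\lambda_{\max}\begin{pmatrix}2\varepsilon & s\\ s & 0\end{pmatrix}$. The plan is to mimic \cref{cor:sequential-spectral}. Taking the exchange-symmetric ansatz $y=y_A=y_B$ etc., we build a diagonal $S$ and a coherence block $T_c$ giving
\[
x_c=q^{\mathsf T}\begin{pmatrix}2\varepsilon s & s^2\\ s^2&0\end{pmatrix}q
\]
over unit $q\ge0$. The construction is a symmetric superposition of the two orders, an analogue of the sequential construction in \cref{app:explicit-achievability} averaged over $A\prec B$ and $B\prec A$. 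Its coherence block is chosen to pair the relaxation pathways across the two orders, which is why the $\varepsilon s$ terms double. We will then check positivity via $S\pm T_c\succeq0$ using geometric-mean saturation, the general normalisation, and $\cD_{AB}(T_c)=0$ directly. The resulting element is not order-separable, but that does not matter here.

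\textbf{Converse.} We will write the reduced dual SDP and guess its solution from the stationarity conditions at the optimal $q\propto(s,\,R_\varepsilon-\varepsilon)$. The goal is a decomposition
\[
\mu-x_c=\sum\langle P_\pm,\text{(feasible PSD block)}\rangle+\langle\text{multipliers},\text{(normalisation equalities)}\rangle,
\]
with $\mu=s(\varepsilon+R_\varepsilon)$, exactly analogous to \cref{eq:seq-certificate-main}, followed by a proof that every $P_\pm\succeq0$ for $0<\varepsilon<1$. The main obstacle is this certificate. The general normalisation leaves more freedom in the charge blocks than the comb constraints do, so the multipliers must be chosen so that the residual $2\times2$ or $3\times3$ matrices have nonnegative determinants. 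We expect these determinants to reduce to identities in $R_\varepsilon^2=\varepsilon^2+1-\varepsilon$. If a direct guess fails, the first fallback is to solve the reduced dual numerically at several $\varepsilon$ and fit rational-in-$(s,R_\varepsilon)$ expressions before verifying them symbolically.
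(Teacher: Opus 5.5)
\textbf{Verdict: genuine gap.} Your template is the paper's: output-coherence reduction, then an explicit feasible family, then a reduced dual certificate. Your quadratic form is also the right one. With $q=(\sqrt t,\sqrt{1-t})$, the expression $q^{\mathsf T}\begin{pmatrix}2\varepsilon s&s^2\\ s^2&0\end{pmatrix}q=2s[\varepsilon t+s\sqrt{t(1-t)}]$ is exactly the paper's family \eqref{eq:ico-family-xc}. Your optimal direction $q\propto(s,R_\varepsilon-\varepsilon)$ corresponds to $t_\star=\frac12(1+\varepsilon/R_\varepsilon)$. However, neither half of the proof is actually carried out.

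\textbf{Achievability.} You state the resulting $x_c$ but never exhibit $S$ and $T_c$, nor check them against the general-process equations. Describing the point as an average of the two sequential constructions plus cross-order coherences is only a heuristic. Any genuine mixture of MISC fixed-order points has $x_c\le\mu_{\misc,\seq}<s(\varepsilon+R_\varepsilon)$ by \cref{lem:strict-seq-ico-main}, so everything rests on the explicit coherences, which you leave unspecified. The paper takes the diagonal $S(t)$ of \eqref{eq:ico-family-S} together with
$T_c(t)=\sqrt{t(1-t)}(E_{1,16}+E_{16,1})+\frac t2(E_{3,15}+E_{15,3}+E_{9,12}+E_{12,9})$.
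This puts weights $1-t$ and $t$ on $\ket{0000}$ and $\ket{1111}$, and weight $t/2$ on each endpoint of the two swap-related relaxation pathways. There is no single-use $E_{1,4}$ term. Positivity $S\pm T_c\succeq0$ follows from geometric-mean saturation, and the four general-process equations must be checked entry by entry.

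\textbf{Converse (the decisive gap).} You leave this as ``guess the dual, or fit it numerically'', which is not a proof. The simplification you anticipate is also false: after the independent phase twirls, conjugation and swap, the constraints do not reduce to conditions on diagonal populations. MISC does not restrict $T_0$, and $G_\varepsilon=(J_{\cN_{\rm AD}^\varepsilon})^{\ox2}$ is itself coherent in the contributing sectors. Hence the blocks of $S$ on $(\ket1,\ket4,\ket{13},\ket{16})$, $(\ket3,\ket{15})$ and $(\ket9,\ket{12})$ keep a free off-diagonal parameter ($r_2$ in \eqref{eq:ico-blocks-app}). A certificate balanced only on populations would therefore not bound every feasible point.

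What you need are $P_\varepsilon,Q_\varepsilon\succeq0$ with two properties:
(i) $Q_\varepsilon-P_\varepsilon$ agrees with $G_\varepsilon$ off the diagonal, so that $\cD_{AB}(T_c)=0$ gives $\langle Q_\varepsilon-P_\varepsilon,T_c\rangle=x_c$;
(ii) $\langle P_\varepsilon+Q_\varepsilon,S\rangle=s(\varepsilon+R_\varepsilon)$ identically on the reduced feasible set. This requires the $r_1,r_3$ coefficients to vanish and the $r_2$ coefficient to cancel; in the paper the latter comes from $R_\varepsilon-\varepsilon-s^2(R_\varepsilon+\varepsilon)=-\kappa$.

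Together these give $s(\varepsilon+R_\varepsilon)-x_c=\langle P_\varepsilon,S+T_c\rangle+\langle Q_\varepsilon,S-T_c\rangle\ge0$. The paper builds $P_\varepsilon,Q_\varepsilon$ from rank-one terms along $\mp\sqrt{t_\star}\ket1+\sqrt{1-t_\star}\ket{16}+\lambda\ket{e_+}$, plus projector terms on $\ket{e_-}$ and $\ket{\phi^{A,B}_\mp}$. Their positivity reduces to $R_\varepsilon\le1$, $s(\varepsilon+R_\varepsilon)\le1$ and $0\le\kappa\le\varepsilon s$. Until such an identity is written down and verified, your proposal establishes neither the upper bound nor the claimed equality.
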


\begin{proof}
By \cref{prop:output-coherence-main}, it is enough to prove
\begin{equation}
\mu_{\misc,\gen}
\left(
\cN_{\rm AD}^{\varepsilon},
\cN_{\rm AD}^{\varepsilon}
\right)
=
s\left(
\varepsilon+R_\varepsilon
\right).
\label{eq:exact-ico-mu}
\end{equation}
For achievability, \cref{app:explicit-achievability} constructs a one-parameter family of general-process feasible points with $t\in[0,1]$, $T_1=0$, and diagonal population block. The family satisfies the general-process normalization and MISC constraints and is positive semidefinite. It achieves
\begin{equation}
x_c(t)
=2s\left[\varepsilon t+s\sqrt{t(1-t)}\right].
\label{eq:ico-family-xc}
\end{equation}
The parameter $t$ controls how the process distributes weight between the two order-sensitive sectors. The term $\varepsilon t$ in \cref{eq:ico-family-xc} is supported by relaxation-compatible histories, whereas $s\sqrt{t(1-t)}$ is an interference term between the two sectors. The square root is the characteristic coherence allowed between weights $t$ and $1-t$ by positivity.

Writing $u=2t-1$ turns the optimisation into the alignment of the unit vector $(u,\sqrt{1-u^2})$ with the channel vector $(\varepsilon,s)$. Cauchy--Schwarz gives
\begin{equation}
\varepsilon u+s\sqrt{1-u^2}
\le
\sqrt{\varepsilon^2+s^2}=R_\varepsilon.
\label{eq:ico-cauchy}
\end{equation}
Equality holds when the process weight is matched to the relative strengths of relaxation and coherent survival, at
\begin{equation}
t_\star
=
\frac12\left(1+\frac{\varepsilon}{R_\varepsilon}\right),
\label{eq:t-star}
\end{equation}
which gives $x_c(t_\star)=s(\varepsilon+R_\varepsilon)$.

For optimality, \cref{app:ico-certificate} constructs positive semidefinite operators $P_\varepsilon,Q_\varepsilon$ satisfying, for every reduced general-process feasible point with $T_1=0$,
\begin{equation}
s(\varepsilon+R_\varepsilon)-x_c
=
\langle P_\varepsilon,S+T_c\rangle
+
\langle Q_\varepsilon,S-T_c\rangle
\ge0.
\label{eq:ico-certificate-main}
\end{equation}
Thus no general-process strategy exceeds the value attained at $t_\star$.
\end{proof}

The closed form exposes a different balance in the general class. Relaxation enters through $\varepsilon$, coherent survival through $s$, and $R_\varepsilon=\sqrt{\varepsilon^2+s^2}$ combines the two. The optimal parameter $t_\star$ matches the process weights to this channel-dependent direction. Fixed-order comb normalization prevents the same simultaneous balance, which is exactly what the strict comparison below detects.

\subsection{Strict separation from fixed-order sequential strategies}

\begin{lemma}[Strict sequential-to-general-process separation]
\label{lem:strict-seq-ico-main}
For every $0<\varepsilon<1$,
\begin{equation}
D_{\misc}^{\seq}
\left(
\cN_{\rm AD}^{\varepsilon},
\cN_{\rm AD}^{\varepsilon};
\cI
\right)
>
D_{\misc}^{\gen}
\left(
\cN_{\rm AD}^{\varepsilon},
\cN_{\rm AD}^{\varepsilon};
\cI
\right).
\label{eq:strict-seq-ico-error}
\end{equation}
\end{lemma}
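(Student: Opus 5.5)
By \cref{prop:output-coherence-main}, the claimed strict inequality of errors is equivalent to the strict inequality of optimal output coherences
\begin{equation*}
\mu_{\misc,\seq}\left(\cN_{\rm AD}^{\varepsilon},\cN_{\rm AD}^{\varepsilon}\right)
<
\mu_{\misc,\gen}\left(\cN_{\rm AD}^{\varepsilon},\cN_{\rm AD}^{\varepsilon}\right).
\end{equation*}
The plan avoids comparing the two optima numerically. It combines the inclusion $\seq_{\fixAB}\subseteq\gen$ with the spectral characterisation of \cref{cor:sequential-spectral}. The inclusion \eqref{eq:causal-inclusions} immediately gives the weak inequality
\begin{equation*}
\lambda_{\max}(H_\varepsilon)=\mu_{\misc,\seq}\le\mu_{\misc,\gen}=s(\varepsilon+R_\varepsilon)=:\mu_\star,
\end{equation*}
using \cref{prop:ico-main} for the right-hand value. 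Strictness then follows if $\mu_\star$ is not an eigenvalue of $H_\varepsilon$. Since every eigenvalue of $H_\varepsilon$ is a root of the quartic $p_\varepsilon$ in \eqref{eq:sequential-quartic}, it therefore suffices to show $p_\varepsilon(\mu_\star)\neq0$.

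To evaluate $p_\varepsilon(\mu_\star)$, I would write $\mu=sm$ with $m:=\varepsilon+R_\varepsilon$. Using $s^2=1-\varepsilon$, the polynomial factors as
\begin{equation*}
p_\varepsilon(sm)=s^4\left[(m^2-1)(m^2-\varepsilon m-s^2)-\varepsilon^2m^2\right].
\end{equation*}
The key algebraic identity is
\begin{equation*}
m^2-2\varepsilon m=R_\varepsilon^2-\varepsilon^2=s^2,
\end{equation*}
which says that $m$ solves the quadratic naturally associated with the Cauchy--Schwarz alignment in \eqref{eq:ico-cauchy}. This identity gives $m^2-\varepsilon m-s^2=\varepsilon m$, so the bracket becomes
\begin{equation*}
\varepsilon m\left(m^2-1-\varepsilon m\right).
\end{equation*}
Applying the same identity again gives
\begin{equation*}
m^2-\varepsilon m-1=s^2+\varepsilon m-1=\varepsilon(m-1).
\end{equation*}
Hence
\begin{equation*}
p_\varepsilon(\mu_\star)=s^4\varepsilon^2 m\,(\varepsilon+R_\varepsilon-1).
\end{equation*}

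It remains to check the sign of the last factor. For $0<\varepsilon<1$ we have $R_\varepsilon^2=\varepsilon^2+1-\varepsilon>(1-\varepsilon)^2$, so $R_\varepsilon>1-\varepsilon$. Together with $s>0$ and $m>0$, this shows $p_\varepsilon(\mu_\star)>0$. Consequently $\mu_\star$ is not an eigenvalue of $H_\varepsilon$, and combined with the weak inequality we obtain $\lambda_{\max}(H_\varepsilon)<\mu_\star$. Translating back through \eqref{eq:error-coherence-relation} yields \eqref{eq:strict-seq-ico-error}.

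The only delicate step is the algebraic simplification of $p_\varepsilon(\mu_\star)$. If the factorisation via $m^2-2\varepsilon m=s^2$ failed, my fallback would be to show directly that $\mu_\star I-H_\varepsilon\succ0$. I would do this through the Schur complement $\mu_\star^2I_2-\mu_\star B-D^2\succ0$ from \eqref{eq:sequential-H-blocks}, which reduces to checking the same determinant sign together with a positive diagonal entry.
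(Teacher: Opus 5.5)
Your proof is correct, but it reaches strictness by a different logical route from the paper.

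\textbf{The paper's route.} The paper never uses the inclusion $\seq_{\fixAB}\subseteq\gen$ here. It proves the pointwise bound $F_\varepsilon(x,y,z)<m:=s(\varepsilon+R_\varepsilon)$ on the whole feasible sphere, using the sum-of-squares identity \eqref{eq:strict-decomposition}. It then certifies $M_\varepsilon\succ0$ by Sylvester's criterion: a leading-entry check plus a determinant check, the latter handled by a case split at $\varepsilon=1/2$.

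\textbf{Shared algebraic core.} $M_\varepsilon$ is exactly your Schur complement $\mu_\star^2I_2-\mu_\star B-D^2$ from \eqref{eq:sequential-H-blocks}, so $\det M_\varepsilon=p_\varepsilon(\mu_\star)$. Your factorisation $s^4\varepsilon^2(\varepsilon+R_\varepsilon)(\varepsilon+R_\varepsilon-1)$ reproduces the bracket in \eqref{eq:M-determinant}, since $(\varepsilon+R_\varepsilon)(\varepsilon+R_\varepsilon-1)=(2\varepsilon-1)R_\varepsilon+2\varepsilon^2-2\varepsilon+1$. It makes the sign immediate without the case split. Your fallback is therefore essentially the paper's argument written in spectral language.

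\textbf{What your version buys.} The inclusion tells you for free that $\mu_\star$ lies at or above the top of the spectrum of $H_\varepsilon$, rather than possibly between eigenvalues. So the determinant alone settles strictness, and you can drop the leading-minor check.

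\textbf{What it costs.} Your argument uses both directions of \cref{prop:sequential-main}, through \cref{cor:sequential-spectral}. It also uses both directions of \cref{prop:ico-main}, including the general-process optimality certificate of \cref{app:ico-certificate}. The paper's argument logically needs only the sequential converse and the general-process achievability family. It also gives an explicit positive expression for the gap $m-F_\varepsilon$ at every sequential routing vector.
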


\begin{proof}
By \cref{prop:output-coherence-main}, it is enough to prove
\begin{equation}
\mu_{\misc,\seq}
\left(
\cN_{\rm AD}^{\varepsilon},
\cN_{\rm AD}^{\varepsilon}
\right)
<
\mu_{\misc,\gen}
\left(
\cN_{\rm AD}^{\varepsilon},
\cN_{\rm AD}^{\varepsilon}
\right).
\label{eq:strict-seq-ico-mu}
\end{equation}
Let
\begin{equation}
m
:=
\mu_{\misc,\gen}
\left(
\cN_{\rm AD}^{\varepsilon},
\cN_{\rm AD}^{\varepsilon}
\right)
=
s(\varepsilon+R_\varepsilon).
\end{equation}
Since $0<\varepsilon<1$, one has $s>0$ and therefore $m>0$. For every feasible $x,y,z$ in \cref{eq:sequential-variational}, direct expansion gives
\begin{equation}
\begin{aligned}
m-F_\varepsilon(x,y,z)
={}&
\left(
\sqrt m\,x
-
\frac{s}{\sqrt m}\sqrt{y^2+s^2z^2}
\right)^2\\
&+
\frac1m
\begin{pmatrix}y&z\end{pmatrix}
M_\varepsilon
\begin{pmatrix}y\\z\end{pmatrix},
\end{aligned}
\label{eq:strict-decomposition}
\end{equation}
where
\begin{equation}
M_\varepsilon
=
\begin{pmatrix}
m^2-s^2&-m\varepsilon s\\
-m\varepsilon s&m^2-s^4-m\varepsilon s
\end{pmatrix}.
\label{eq:M-epsilon}
\end{equation}
Its leading principal minor is
\begin{equation}
(M_\varepsilon)_{11}
=
\varepsilon(1-\varepsilon)
(2\varepsilon+2R_\varepsilon-1)>0,
\label{eq:M-leading-minor}
\end{equation}
and
\begin{equation}
\det M_\varepsilon
=
\varepsilon^2(1-\varepsilon)^2
\left[
(2\varepsilon-1)R_\varepsilon
+2\varepsilon^2-2\varepsilon+1
\right].
\label{eq:M-determinant}
\end{equation}
The bracket is positive for $\varepsilon\ge1/2$. For $0<\varepsilon<1/2$,
\begin{equation}
(2\varepsilon^2-2\varepsilon+1)^2
-(1-2\varepsilon)^2R_\varepsilon^2
=
\varepsilon(1-\varepsilon)>0,
\label{eq:M-det-small-epsilon}
\end{equation}
so the bracket remains positive. Sylvester's criterion gives $M_\varepsilon\succ0$. If $(y,z)\ne(0,0)$, the quadratic term in \cref{eq:strict-decomposition} is strictly positive. If $y=z=0$, then $x=1$ and the squared term is $m>0$. Hence $F_\varepsilon(x,y,z)<m$ at every feasible point. Compactness and \cref{prop:sequential-main} prove the claim.
\end{proof}

Equation~\eqref{eq:strict-decomposition} separates the gap into a squared mismatch and a positive-definite residual form. The square tests whether the direct coherent branch can be perfectly matched to the memory-assisted branch. Even if that matching condition is imposed, $M_\varepsilon\succ0$ leaves a strictly positive penalty for every nonzero sequential routing vector $(y,z)$. Hence the general-process gain cannot be removed by reoptimising either fixed-order memory. It is tied to correlations that cannot be reproduced by a definite-order sequential comb.

\begin{theorem}[Strict MISC causal hierarchy]
\label{thm:misc-hierarchy-main}
For every $0<\varepsilon<1$,
\begin{equation}
D_{\misc}^{\para}
\left(\cN_{\rm AD}^{\varepsilon},\cN_{\rm AD}^{\varepsilon};\cI\right)
>
D_{\misc}^{\seq}
\left(\cN_{\rm AD}^{\varepsilon},\cN_{\rm AD}^{\varepsilon};\cI\right)
>
D_{\misc}^{\gen}
\left(\cN_{\rm AD}^{\varepsilon},\cN_{\rm AD}^{\varepsilon};\cI\right).
\label{eq:misc-hierarchy-main}
\end{equation}
The same sequential value is obtained for the two fixed orders $A\prec B$ and $B\prec A$.
\end{theorem}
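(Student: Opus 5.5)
The theorem follows by chaining the two strict separations already established, so the plan is mainly assembly. The first inequality in \cref{eq:misc-hierarchy-main} is exactly \cref{lem:no-parallel-saturation-main}, and the second is exactly \cref{lem:strict-seq-ico-main}. The closing assertion about the two fixed orders is the exchange-symmetry statement at the end of \cref{prop:sequential-main}. The order-independence of the symbol $D_{\misc}^{\seq}$ is therefore already justified. No new inequality has to be proved; what remains is to check that the quantities compared in the two lemmas are the same objects.

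Concretely, I would first invoke \cref{prop:output-coherence-main} to translate all three errors into the coherence coefficients $\mu_{\misc,\para}$, $\mu_{\misc,\seq}$ and $\mu_{\misc,\gen}$. The map $\mu\mapsto(1-\mu)/2$ is strictly decreasing, so the error hierarchy is equivalent to
\begin{equation}
\mu_{\misc,\para}<\mu_{\misc,\seq}<\mu_{\misc,\gen}.
\end{equation}

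The left inequality is \cref{eq:strict-par-seq-mu}. It holds because no parallel feasible point can saturate the sequential certificate \cref{eq:seq-certificate-main} without forcing $x_c=0$, whereas $\mu_{\misc,\seq}\ge s>0$. Attainment of the parallel optimum, from compactness of the parallel feasible set, then turns this into a strict gap.

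The right inequality is \cref{eq:strict-seq-ico-mu}. It follows from the sum-of-squares decomposition \cref{eq:strict-decomposition} with $M_\varepsilon\succ0$. That decomposition shows $F_\varepsilon<m=s(\varepsilon+R_\varepsilon)$ pointwise on the sphere, and compactness together with \cref{prop:sequential-main} makes the maximum strictly below $m$.

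The only point needing care is consistency of the sequential value across the two lemmas. Lemma~\ref{lem:no-parallel-saturation-main} compares parallel against a fixed order, and Lemma~\ref{lem:strict-seq-ico-main} compares the variational expression for that fixed order against $\gen$. Both use the same common fixed-order value fixed after \cref{prop:output-coherence-main}. I would state explicitly that $\seq_{\fixAB}$ and $\seq_{\fixBA}$ give the same $\mu$ by the $A\leftrightarrow B$ bijection, so the chain holds for either order.

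I do not anticipate a genuine obstacle: all analytic content (the certificates $P_\pm,Q_\pm,R_\pm$, $P_\varepsilon,Q_\varepsilon$ and the positivity of $M_\varepsilon$) has already been established. The proof is a two-line concatenation plus the remark on order symmetry.
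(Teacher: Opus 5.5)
Your proposal is correct and matches the paper's proof. The paper also takes the two strict inequalities directly from Lemma~\ref{lem:no-parallel-saturation-main} and Lemma~\ref{lem:strict-seq-ico-main}, and obtains equality of the two fixed-order values from the $A\leftrightarrow B$ exchange symmetry of the identical resources; your remarks on the strictly decreasing map $\mu\mapsto(1-\mu)/2$ and on compactness just unpack what those lemmas already contain.
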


\begin{proof}
The two strict inequalities follow directly from
\cref{lem:no-parallel-saturation-main,lem:strict-seq-ico-main}. The two fixed-order sequential optima coincide by exchange symmetry of the identical resource channels.
\end{proof}

\section{Free program-state parallelization and mixed-Pauli collapse}
\label{sec:collapse}

The amplitude-damping hierarchy depends on channel structure; it disappears when the resource calls can be compiled into program states before the higher-order processing begins. The relevant sufficient condition is a channel-stretching statement at the level of free superchannels. One use of each resource channel prepares a channel-dependent state, a fixed CPTP interface reconstructs the channel from that state, and the states are all generated in parallel. What remains is a single CPTP post-processing map acting jointly on the stored program states and the eventual input.

\subsection{Common program-state simulation and parallelization}

\begin{definition}[Common program-state simulation]
\label{def:common-state-simulation}
For each input position $k=1,\ldots,n$, let $R_k$ be a finite-dimensional register, let
\begin{equation}
    \mathfrak P_k:
    \cE_k
    \longmapsto
    \omega_{\cE_k}\in\mathsf D(R_k)
    \label{eq:state-preparation-supermap}
\end{equation}
be a deterministic channel-to-state supermap using one occurrence of $\cE_k$, and let
\begin{equation}
    \cT_k:
    X_{k,i}R_k
    \longrightarrow
    X_{k,o}
    \label{eq:fixed-simulation-interface}
\end{equation}
be a CPTP map that may depend on $k$ but not on the particular resource channel being simulated. These objects define a one-input simulator
$\Phi_k:X_k\to X_k$ by
\begin{equation}
    \Phi_k(\cE_k)(\rho)
    :=
    \cT_k
    \left(
        \rho\ox\omega_{\cE_k}
    \right).
    \label{eq:state-simulation-interface}
\end{equation}

A resource tuple
\begin{equation}
    \vect{\cN}
    :=
    \left(
        \cN_1,\ldots,\cN_n
    \right)
\end{equation}
is \emph{commonly program-state simulable} when the preparation supermaps $\mathfrak P_1,\ldots,\mathfrak P_n$ can be executed in one parallel
layer and
\begin{equation}
    \Phi_k(\cN_k)=\cN_k,
    \qquad
    k=1,\ldots,n.
    \label{eq:simulation-fixed-points}
\end{equation}
\end{definition}

\begin{theorem}[Parallelization from a common program-state simulation]
\label{thm:common-simulation}
Let $\cS$ be an $n$-input superchannel, and suppose that the resource tuple $\vect{\cN}=(\cN_1,\ldots,\cN_n)$ satisfies Definition~\ref{def:common-state-simulation}. Then there exists a parallel superchannel $\cP_{\cS}$ satisfying
\begin{equation}
    \cP_{\cS}
    \left(
        \cE_1,\ldots,\cE_n
    \right)
    =
    \cS
    \left(
        \Phi_1(\cE_1),\ldots,\Phi_n(\cE_n)
    \right)
    \label{eq:parallelised-superchannel}
\end{equation}
for every tuple of input channels. In particular,
\begin{equation}
    \cP_{\cS}
    \left(
        \cN_1,\ldots,\cN_n
    \right)
    =
    \cS
    \left(
        \cN_1,\ldots,\cN_n
    \right).
    \label{eq:parallelised-same-output}
\end{equation}

Let $\mathbf F\in\{\misc,\disc\}$. If $\cS\in\mathbf F$ and every simulator $\Phi_k\in\mathbf F$, with the free classes understood in the sense of Definition~\ref{def:multi-input-misc-disc}, then $\cP_{\cS}\in\mathbf F$. Consequently, for every target channel $\cM$ and every causal class $\mathbf H$ satisfying
\begin{equation}
    \para\subseteq\mathbf H\subseteq\gen,
\end{equation}
one has
\begin{equation}
\begin{aligned}
    D_{\mathbf F}^{\para}
    \left(
        \vect{\cN};\cM
    \right)
    =
    D_{\mathbf F}^{\mathbf H}
    \left(
        \vect{\cN};\cM
    \right)
    =
    D_{\mathbf F}^{\gen}
    \left(
        \vect{\cN};\cM
    \right).
\end{aligned}
\label{eq:common-simulation-collapse}
\end{equation}
\end{theorem}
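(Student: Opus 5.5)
We define $\cP_{\cS}:=\cS\circ(\Phi_1,\ldots,\Phi_n)$ and verify three properties in turn: that it is a parallel superchannel, that it reproduces $\cS$ on the resource tuple, and that it belongs to the free class $\mathbf F$. The collapse \eqref{eq:common-simulation-collapse} then follows by a sandwich argument.

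\textbf{Step 1: parallel realization.} Fix arbitrary input channels $\cE_1,\ldots,\cE_n$. Execute the preparation supermaps $\mathfrak P_1,\ldots,\mathfrak P_n$ in one parallel layer, as Definition~\ref{def:common-state-simulation} permits. This produces the product state $\omega_{\cE_1}\ox\cdots\ox\omega_{\cE_n}$ on $R_1\cdots R_n$, and uses each $\cE_k$ exactly once. Each $\cE_k$ is used nowhere else.

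Now consider what $\cS$ does with its arguments. It receives the channels $\rho\mapsto\cT_k(\rho\ox\omega_{\cE_k})$, and each is the fixed, channel-independent CPTP map $\cT_k$ fed with one stored register $R_k$. Whatever causal structure $\cS$ has, even an indefinite one in $\gen$, we plug the fixed channels $\cT_k$ into its slots. The registers $R_k$ are treated as additional input wires that pass through untouched.

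This plugging yields a fixed channel $\cR_{\cS}:C_iR_1\cdots R_n\to C_o$. The key point is that it is a genuine CPTP map. To see this, take the link product $\Theta^{\cS}*(J_{\cT_1}\ox\cdots\ox J_{\cT_n})$, with the $R_k$ legs left open. This gives a positive operator, since link products of positive operators are positive. Its normalization is that of a channel, because $\cS$ maps every tuple of channels to a channel, including channels extended by the ancillas $R_k$, and this holds by complete positivity under ancillary extensions.

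Thus $\cP_{\cS}(\vect\cE)=\cR_{\cS}\circ(\id_{C_i}\ox\omega_{\cE_1}\ox\cdots\ox\omega_{\cE_n})$. This is a single-layer preparation followed by one post-processing channel, which is exactly the parallel form. The step requiring the most care is this one: justifying that plugging fixed CPTP interfaces with open memory legs into a general (possibly causally nonseparable) $\cS$ yields a valid channel. I would argue it via the Choi normalization of $\gen$ from \cref{app:general-sdp}, checking that the partial traces reduce to $I_{C_i}\ox I_{R}$.

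\textbf{Step 2: same output.} Equation~\eqref{eq:parallelised-same-output} is immediate from \eqref{eq:simulation-fixed-points}, since $\Phi_k(\cN_k)=\cN_k$ for every $k$.

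\textbf{Step 3: freeness and collapse.} If $\cS\in\mathbf F$ and each $\Phi_k\in\mathbf F$, then $\cP_{\cS}\in\mathbf F$ directly by Lemma~\ref{lem:free-composition-closure}. For the collapse, the inclusions $\para\subseteq\mathbf H\subseteq\gen$ give
$$D^{\gen}_{\mathbf F}\le D^{\mathbf H}_{\mathbf F}\le D^{\para}_{\mathbf F}.$$
For the reverse inequality, take an optimal $\cS\in\gen\cap\mathbf F$; a minimizer exists because the feasible set is compact. Then $\cP_{\cS}\in\para\cap\mathbf F$ and it attains the same error on $\vect\cN$, so $D^{\para}_{\mathbf F}\le D^{\gen}_{\mathbf F}$, and all three values coincide.
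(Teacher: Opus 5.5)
Your proposal follows the paper's proof. You link the fixed interfaces into $\Theta^{\cS}$ to obtain a residual channel $\mathcal R_{\cS}:C_iR_1\cdots R_n\to C_o$, applied after the parallel layer of program-state preparations. You then get freeness of $\cP_{\cS}=\cS\circ(\Phi_1,\ldots,\Phi_n)$ from Lemma~\ref{lem:free-composition-closure} and close with the sandwich from the causal inclusions.

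The one step you do not actually establish is trace preservation of $\mathcal R_{\cS}$, which is the real content of the construction. Complete positivity under ancillary extensions only gives $\Xi_{\cS}:=(\bigotimes_k J_{\cT_k})*\Theta^{\cS}\succeq0$. That $\cS$ still outputs a channel when the $R_k$ legs are left open is exactly what must be shown for a possibly causally nonseparable $\cS$. Checking it against the explicit constraints in \cref{app:general-sdp} would also be class-specific and would cover only $n=2$.

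The paper closes this step with a linearity argument that uses only the channel-to-channel property of $\cS$ on unextended inputs. For each density operator $\tau_k$, the map $\rho\mapsto\cT_k(\rho\ox\tau_k)$ is a channel, so $\tr_{C_o}[(\bigotimes_k\tau_k)*\Xi_{\cS}]=I_{C_i}$ for all product states. Product density operators span the operator space on $R_1\cdots R_n$. Linearity therefore gives $\tr_{C_o}[X*\Xi_{\cS}]=(\tr X)\,I_{C_i}$ for every $X$, which is equivalent to $\tr_{C_o}\Xi_{\cS}=I_{C_iR_1\cdots R_n}$. With this inserted, your argument is complete for every $n$ and every causal class.
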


\begin{proof}
For an arbitrary density operator $\tau_k\in\mathsf D(R_k)$, define the channel
\begin{equation}
    \Phi_k^{\tau_k}(\rho)
    :=
    \cT_k(\rho\ox\tau_k).
    \label{eq:state-parametrised-channel}
\end{equation}
Let $J_{\cT_k}$ be the Choi operator of $\cT_k$. Linking the fixed interfaces into the Choi operator $\Theta^{\cS}$ gives the residual operator
\begin{equation}
    \Xi_{\cS}
    :=
    \left(
        \bigotimes_{k=1}^{n}J_{\cT_k}
    \right)
    *
    \Theta^{\cS},
    \label{eq:residual-choi}
\end{equation}
where the systems corresponding to the $n$ channel interfaces are contracted. Associativity of the link product gives
\begin{equation}
    J_{
        \cS(
            \Phi_1^{\tau_1},
            \ldots,
            \Phi_n^{\tau_n}
        )
    }
    =
    \left(
        \bigotimes_{k=1}^{n}\tau_k
    \right)
    *
    \Xi_{\cS}.
    \label{eq:residual-action}
\end{equation}

Complete positivity of the superchannel $\cS$, together with complete positivity of the interfaces $\cT_k$, implies
\begin{equation}
    \Xi_{\cS}\succeq0.
    \label{eq:residual-positive}
\end{equation}
Positivity alone is not enough; the residual map must also be trace preserving. Since every
$\Phi_k^{\tau_k}$ is a channel whenever $\tau_k$ is a density operator,
and $\cS$ maps tuples of channels to an output channel, one has
\begin{equation}
    \tr_{C_o}
    \left[
        \left(
            \bigotimes_{k=1}^{n}\tau_k
        \right)
        *
        \Xi_{\cS}
    \right]
    =
    I_{C_i}
    \label{eq:residual-normalisation-products}
\end{equation}
for every product of density operators.

Product density operators linearly span the Hermitian operator space on $R_1\cdots R_n$. Hence every Hermitian operator $X$ on these registers can be written as
\begin{equation}
    X
    =
    \sum_j
    c_j
    \tau_{1,j}\ox\cdots\ox\tau_{n,j},
    \label{eq:density-product-spanning}
\end{equation}
where each $\tau_{k,j}$ is a density operator. Since every product appearing in Eq.~\eqref{eq:density-product-spanning} has unit trace,
\begin{equation}
    \sum_j c_j=\tr X.
\end{equation}
By linearity and Eq.~\eqref{eq:residual-normalisation-products},
\begin{equation}
    \tr_{C_o}
    \left[
        X*\Xi_{\cS}
    \right]
    =
    (\tr X)I_{C_i}
    \label{eq:residual-normalisation-full-space}
\end{equation}
for every Hermitian $X$, and therefore for every operator by complex linearity. This is equivalent to the trace-preservation Choi condition for a channel
\begin{equation}
    \mathcal R_{\cS}:
    C_iR_1\cdots R_n
    \longrightarrow
    C_o
    \label{eq:residual-channel}
\end{equation}
whose Choi operator is $\Xi_{\cS}$. Thus
$\mathcal R_{\cS}$ is CPTP.

This residual channel gives the promised parallel implementation. The superchannel $\cP_{\cS}$ applies all preparation supermaps $\mathfrak P_k$ in parallel, producing
\begin{equation}
\omega_{\cE_1}\ox\cdots\ox\omega_{\cE_n}.
\end{equation}
It then applies $\mathcal R_{\cS}$ jointly to these states and the input on $C_i$. Since all resource-channel uses occur in the initial parallel layer, $\cP_{\cS}$ is a parallel superchannel. Equation \eqref{eq:residual-action} gives Eq.~\eqref{eq:parallelised-superchannel}, while the fixed-point identities in Eq.~\eqref{eq:simulation-fixed-points} give Eq.~\eqref{eq:parallelised-same-output}.

Moreover,
\begin{equation}
    \cP_{\cS}
    =
    \cS
    \circ
    \left(
        \Phi_1,\ldots,\Phi_n
    \right)
    \label{eq:parallel-map-as-composition}
\end{equation}
as an equality of higher-order maps on the full linear Choi space. If $\cS$ and all $\Phi_k$ are MISC, then Lemma~\ref{lem:free-composition-closure} implies that $\cP_{\cS}$ is MISC. The same lemma proves the DISC statement when $\cS$ and all $\Phi_k$ are DISC.

Thus every output reached by a general-process map in the free class $\mathbf F$ is also reached by a parallel map in that class, so
\begin{equation}
    D_{\mathbf F}^{\para}
    \left(
        \vect{\cN};\cM
    \right)
    \le
    D_{\mathbf F}^{\gen}
    \left(
        \vect{\cN};\cM
    \right).
    \label{eq:collapse-reverse-inequality}
\end{equation}
The causal inclusions imply
\begin{equation}
    D_{\mathbf F}^{\para}
    \left(
        \vect{\cN};\cM
    \right)
    \ge
    D_{\mathbf F}^{\mathbf H}
    \left(
        \vect{\cN};\cM
    \right)
    \ge
    D_{\mathbf F}^{\gen}
    \left(
        \vect{\cN};\cM
    \right).
    \label{eq:collapse-causal-inequalities}
\end{equation}
Combining these inequalities proves Eq.~\eqref{eq:common-simulation-collapse}.
\end{proof}

\subsection{Hierarchy collapse for mixed-Pauli channels}

Mixed-Pauli channels provide an explicit family for which the abstract program-state condition can be checked exactly. A qubit mixed-Pauli channel is
\begin{equation}
    \cN_{\vect p}(\rho)
    =
    \sum_{\alpha=0}^{3}
    p_\alpha
    \sigma_\alpha\rho\sigma_\alpha^\dagger,
    \qquad
    p_\alpha\ge0,
    \qquad
    \sum_{\alpha=0}^{3}p_\alpha=1,
    \label{eq:mixed-pauli}
\end{equation}
where
\begin{equation}
    \sigma_0=I,
    \qquad
    \sigma_1=X,
    \qquad
    \sigma_2=Y,
    \qquad
    \sigma_3=Z.
\end{equation}
Let
\begin{equation}
    \ket{\Phi_\alpha}
    :=
    (I\ox\sigma_\alpha)\ket{\Phi^+},
    \qquad
    \Phi_\alpha
    :=
    \ketbra{\Phi_\alpha}{\Phi_\alpha},
    \qquad
    \ket{\Phi^+}
    :=
    \frac{\ket{00}+\ket{11}}{\sqrt2}.
    \label{eq:bell-projectors}
\end{equation}
The normalised Choi state of $\cN_{\vect p}$ is
\begin{equation}
    \omega_{\cN_{\vect p}}
    :=
    \frac12J_{\cN_{\vect p}}
    =
    \sum_{\alpha=0}^{3}
    p_\alpha\Phi_\alpha.
    \label{eq:mixed-pauli-bell-choi}
\end{equation}
Standard teleportation through $\omega_{\cN_{\vect p}}$ implements $\cN_{\vect p}$ using a fixed Bell measurement and fixed Pauli corrections that do not depend on $\vect p$ \cite{bennett1993teleporting,pirandola2017fundamental}.

For an arbitrary qubit channel $\cE$, define the teleportation-induced simulator
\begin{equation}
    \mathfrak T_{\rm tel}(\cE)(\rho)
    :=
    \cT_{\rm tel}
    \left(
        \rho\ox\frac12J_{\cE}
    \right),
    \label{eq:teleportation-induced}
\end{equation}
where $\cT_{\rm tel}$ is the fixed Bell-measurement and Pauli-correction channel.

\begin{lemma}[Teleportation simulation is DISC]
\label{lem:teleportation-free-properties}
The teleportation simulator $\mathfrak T_{\rm tel}$ satisfies the DISC condition on the full linear qubit Choi space and is therefore also MISC. Moreover, every mixed-Pauli channel is a fixed point,
\begin{equation}
    \mathfrak T_{\rm tel}
    \left(
        \cN_{\vect p}
    \right)
    =
    \cN_{\vect p}.
    \label{eq:mixed-pauli-teleportation-fixed-point}
\end{equation}
\end{lemma}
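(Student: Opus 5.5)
The plan is to compute the teleportation simulator $\mathfrak T_{\rm tel}$ explicitly as a linear map on qubit Choi operators, and then check both claims by direct calculation in the Bell basis.

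\textbf{Step 1: Choi action of $\mathfrak T_{\rm tel}$.} Teleport $\rho$ through $\tfrac12 J_{\cE}=(\id\ox\cE)(\Phi^+)$. The Bell outcome $\alpha$ occurs with probability $1/4$, and the fixed correction $\sigma_\alpha$ is applied after $\cE$. With the standard convention that commutes the outcome-dependent Pauli (up to transpose and phase) through $\Phi^+$, the unnormalised post-correction output is $\tfrac14\sigma_\alpha\cE(\sigma_\alpha\rho\sigma_\alpha)\sigma_\alpha$. Summing over outcomes gives
\begin{equation*}
\mathfrak T_{\rm tel}(\cE)(\rho)=\frac14\sum_{\alpha=0}^{3}\sigma_\alpha\,\cE(\sigma_\alpha\rho\sigma_\alpha)\,\sigma_\alpha ,
\end{equation*}
the Pauli twirl of $\cE$. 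In Choi space this reads
\begin{equation*}
J\mapsto\frac14\sum_\alpha(\overline{\sigma_\alpha}\ox\sigma_\alpha)\,J\,(\overline{\sigma_\alpha}\ox\sigma_\alpha)^\dagger .
\end{equation*}
The global phase of $\overline{\sigma_2}=-\sigma_2$ is irrelevant. The operators $\overline{\sigma_\alpha}\ox\sigma_\alpha$ form a (projective) abelian group that acts diagonally on the Bell vectors $\ket{\Phi_\beta}$ with distinct characters. The group average is therefore the pinching onto the Bell basis:
\begin{equation*}
\Pi(J)=\sum_{\beta}\tr(\Phi_\beta J)\,\Phi_\beta .
\end{equation*}
This formula holds on the full linear space, since it is linear in $J$. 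I expect this step to be the main obstacle: pinning down the transpose and phase conventions of the Bell measurement and corrections, so that $\Pi$ is exactly a Bell-basis pinching and not a pinching composed with a fixed Pauli relabelling. If a relabelling appears, I would absorb it into the fixed corrections $\cT_{\rm tel}$, which is allowed because they do not depend on the resource.

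\textbf{Step 2: fixed points.} By \cref{eq:mixed-pauli-bell-choi}, $J_{\cN_{\vect p}}=2\sum_\alpha p_\alpha\Phi_\alpha$ is already Bell-diagonal. Hence $\Pi(J_{\cN_{\vect p}})=J_{\cN_{\vect p}}$, which proves \cref{eq:mixed-pauli-teleportation-fixed-point}.

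\textbf{Step 3: DISC.} By the one-input case of \cref{prop:multi-input-choi-characterisation}, the DISC condition is equivalent to
\begin{equation*}
\cD\circ\Pi=\Pi\circ\cD\quad\text{on all }J,\qquad \cD=\cD_{X_i}\ox\cD_{X_o}.
\end{equation*}
I will verify this using the dephased Bell projectors in the notation of \cref{eq:omega-basis}:
\begin{equation*}
\cD(\Phi_0)=\cD(\Phi_3)=\tfrac12\Omega_0,\qquad \cD(\Phi_1)=\cD(\Phi_2)=\tfrac12\Omega_1 .
\end{equation*}
For the left-hand side, $\cD\Pi(J)=\tfrac12\big[\tr((\Phi_0+\Phi_3)J)\,\Omega_0+\tr((\Phi_1+\Phi_2)J)\,\Omega_1\big]$. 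Since $\Phi_0+\Phi_3=\Omega_0$ and $\Phi_1+\Phi_2=\Omega_1$, this equals $\tfrac12[\tr(\Omega_0J)\Omega_0+\tr(\Omega_1J)\Omega_1]$.

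For the right-hand side, $\cD$ is self-adjoint, so $\tr(\Phi_\beta\cD(J))=\tr(\cD(\Phi_\beta)J)$. This equals $\tfrac12\tr(\Omega_0J)$ for $\beta=0,3$ and $\tfrac12\tr(\Omega_1J)$ for $\beta=1,2$. Hence $\Pi\cD(J)=\tfrac12\tr(\Omega_0J)(\Phi_0+\Phi_3)+\tfrac12\tr(\Omega_1J)(\Phi_1+\Phi_2)$, which is the same expression.

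Thus $\mathfrak T_{\rm tel}$ is DISC on the full linear Choi space. The MISC property then follows from $\disc\subseteq\misc$ in \cref{prop:multi-input-choi-characterisation}.
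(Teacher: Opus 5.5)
Your proposal is correct and follows the paper's proof essentially step for step. The paper also writes the Choi action as the Bell-basis pinching $\sum_\alpha \tr(\Phi_\alpha X)\Phi_\alpha$, checks the mixed-Pauli fixed point directly, and verifies $\cD\circ\widehat{\mathfrak T}_{\rm tel}=\widehat{\mathfrak T}_{\rm tel}\circ\cD$ from $\cD(\Phi_0)=\cD(\Phi_3)=\tfrac12(\Phi_0+\Phi_3)$, $\cD(\Phi_1)=\cD(\Phi_2)=\tfrac12(\Phi_1+\Phi_2)$ and self-adjointness of $\cD$. Your Pauli-twirl derivation in Step 1 justifies the pinching formula more explicitly than the paper does, and with corrections matched to outcomes the relabelling you worry about does not arise; one caveat is that ``linear in $J$'' alone does not extend the formula from channels to all of $\mathsf L(X_iX_o)$, since channel Choi operators span only the proper subspace where $\tr_{X_o}J$ is proportional to the identity, but your teleportation identity never uses that $\cE$ is CPTP, so it holds for every linear $\cE$ and gives the full-space statement directly.
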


\begin{proof}
Let $X\in\mathsf L(X_iX_o)$ be an arbitrary operator, not necessarily the Choi operator of a channel. The linear Choi-space action induced by teleportation is
\begin{equation}
    \widehat{\mathfrak T}_{\rm tel}(X)
    =
    \sum_{\alpha=0}^{3}
    \tr(\Phi_\alpha X)\Phi_\alpha.
    \label{eq:teleportation-full-linear-action}
\end{equation}
Indeed, for $X=J_{\cE}$, the corresponding Pauli probability is
\begin{equation}
    q_\alpha(\cE)
    =
    \tr
    \left(
        \Phi_\alpha\frac{J_{\cE}}2
    \right),
\end{equation}
and the Choi operator of the resulting Pauli channel is
\begin{equation}
    2\sum_{\alpha=0}^{3}q_\alpha(\cE)\Phi_\alpha,
\end{equation}
which is Eq.~\eqref{eq:teleportation-full-linear-action}.

Complete dephasing of the Bell projectors gives
\begin{equation}
\begin{aligned}
    \cD(\Phi_0)
    &=
    \cD(\Phi_3)
    =
    \frac{\Phi_0+\Phi_3}{2},
    \\
    \cD(\Phi_1)
    &=
    \cD(\Phi_2)
    =
    \frac{\Phi_1+\Phi_2}{2}.
\end{aligned}
\label{eq:dephased-bell-projectors}
\end{equation}
Define
\begin{equation}
    a_\alpha(X):=\tr(\Phi_\alpha X).
\end{equation}
Equation~\eqref{eq:dephased-bell-projectors} implies
\begin{equation}
\begin{aligned}
    \cD
    \left[
        \widehat{\mathfrak T}_{\rm tel}(X)
    \right]
    ={}&
    \frac{a_0(X)+a_3(X)}2
    (\Phi_0+\Phi_3)
    \\
    &+
    \frac{a_1(X)+a_2(X)}2
    (\Phi_1+\Phi_2).
\end{aligned}
\label{eq:dephasing-after-teleportation}
\end{equation}
Since $\cD$ is self-adjoint,
\begin{equation}
\begin{aligned}
    a_0(\cD(X))
    &=
    a_3(\cD(X))
    =
    \frac{a_0(X)+a_3(X)}2,
    \\
    a_1(\cD(X))
    &=
    a_2(\cD(X))
    =
    \frac{a_1(X)+a_2(X)}2.
\end{aligned}
\label{eq:bell-weights-after-input-dephasing}
\end{equation}
Substituting Eq.~\eqref{eq:bell-weights-after-input-dephasing} into Eq.~\eqref{eq:teleportation-full-linear-action} gives
\begin{equation}
    \cD
    \circ
    \widehat{\mathfrak T}_{\rm tel}
    =
    \widehat{\mathfrak T}_{\rm tel}
    \circ
    \cD
    \label{eq:teleportation-disc-full-space}
\end{equation}
on the full linear Choi space. By Proposition~\ref{prop:multi-input-choi-characterisation}, this is exactly the DISC condition for the one-input superchannel $\mathfrak T_{\rm tel}$. Hence
\begin{equation}
    \mathfrak T_{\rm tel}\in\disc\subseteq\misc.
\end{equation}

Finally, using Eq.~\eqref{eq:mixed-pauli-bell-choi},
\begin{equation}
\begin{aligned}
    \widehat{\mathfrak T}_{\rm tel}
    \left(
        J_{\cN_{\vect p}}
    \right)
    &=
    \sum_{\alpha=0}^{3}
    \tr
    \left(
        \Phi_\alpha
        2\sum_{\beta=0}^{3}p_\beta\Phi_\beta
    \right)
    \Phi_\alpha
    \\
    &=
    2\sum_{\alpha=0}^{3}p_\alpha\Phi_\alpha
    =
    J_{\cN_{\vect p}}.
\end{aligned}
\end{equation}
This proves Eq.~\eqref{eq:mixed-pauli-teleportation-fixed-point}.
\end{proof}

\begin{theorem}[Mixed-Pauli collapse under MISC and DISC]
\label{thm:mixed-pauli-collapse-main}
Let
\begin{equation}
    \vect{\cN}
    :=
    \left(
        \cN_{\vect p_1},
        \ldots,
        \cN_{\vect p_n}
    \right)
\end{equation}
be any finite collection of qubit mixed-Pauli channels, not necessarily identical. For every permutation $\pi\in S_n$, let $\seq_\pi$ denote the fixed-order sequential class in which the resource channels are used according to
\begin{equation}
    \pi(1)\prec\pi(2)\prec\cdots\prec\pi(n).
\end{equation}
Then, for every target channel $\cM$, every
$\mathbf F\in\{\misc,\disc\}$, and every $\pi\in S_n$,
\begin{equation}
\begin{aligned}
    D_{\mathbf F}^{\para}
    \left(
        \vect{\cN};\cM
    \right)
    =
    D_{\mathbf F}^{\seq_\pi}
    \left(
        \vect{\cN};\cM
    \right)
    =
    D_{\mathbf F}^{\gen}
    \left(
        \vect{\cN};\cM
    \right).
\end{aligned}
\label{eq:mixed-pauli-collapse-both}
\end{equation}
Thus, for any finite collection of qubit mixed-Pauli resource channels, neither fixed-order sequential nor general-process strategies achieve a smaller optimal transformation error than parallel processing.
\end{theorem}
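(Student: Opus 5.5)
The plan is to reduce the statement directly to \cref{thm:common-simulation} by showing that every finite tuple of qubit mixed-Pauli channels is commonly program-state simulable in the sense of Definition~\ref{def:common-state-simulation}, with each simulator free.

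\emph{Construct the program-state data.} For each position $k$ take $R_k$ to be a two-qubit register. Let $\mathfrak P_k$ be the channel-to-state supermap that prepares $\ket{\Phi^+}$ locally and feeds one half through the $k$-th resource interface. This outputs the normalised Choi state $\omega_{\cE_k}=\tfrac12 J_{\cE_k}$ using exactly one occurrence of $\cE_k$. Each preparation needs only a locally prepared maximally entangled state and one call, with no dependence on any other interface. Hence all $n$ preparations can be executed in a single parallel layer.

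\emph{Fix the interfaces and identify the simulators.} Take $\cT_k:=\cT_{\rm tel}$, the fixed Bell-measurement and Pauli-correction channel, which is independent of $\vect p_k$. The induced simulator $\Phi_k$ from Eq.~\eqref{eq:state-simulation-interface} is then exactly $\mathfrak T_{\rm tel}$ of Eq.~\eqref{eq:teleportation-induced}. By \cref{lem:teleportation-free-properties}:
\begin{itemize}[leftmargin=*,label={}]
\item the fixed-point identity $\Phi_k(\cN_{\vect p_k})=\cN_{\vect p_k}$ holds for every $k$, which is Eq.~\eqref{eq:simulation-fixed-points};
\item $\Phi_k\in\disc\subseteq\misc$ holds on the full linear Choi space, which is the form required by Lemma~\ref{lem:free-composition-closure}.
\end{itemize}
So the tuple $\vect{\cN}$ satisfies Definition~\ref{def:common-state-simulation}, and the simulators lie in $\mathbf F$ for both $\mathbf F\in\{\misc,\disc\}$.

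\emph{Apply the parallelisation theorem and sandwich.} Fix $\mathbf F$, a target $\cM$, and $\pi\in S_n$. By definition, the fixed-order class $\seq_\pi$ contains the parallel class (a parallel network is a comb that passes no memory) and is contained in $\gen$. Hence $\para\subseteq\seq_\pi\subseteq\gen$, and Eq.~\eqref{eq:common-simulation-collapse} of \cref{thm:common-simulation} with $\mathbf H=\seq_\pi$ gives Eq.~\eqref{eq:mixed-pauli-collapse-both}.

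\emph{Main obstacle.} The only nontrivial points are bookkeeping ones rather than calculations. First, the $n$-input inclusions $\para\subseteq\seq_\pi\subseteq\gen$ must hold for the $n$-input comb normalisations, generalising Eq.~\eqref{eq:causal-inclusions}. I would argue this directly from the normalisation chains: the parallel conditions imply each sequential chain, and each sequential chain implies the general deterministic constraints, since comb membership guarantees mapping all channel tuples to channels. Second, the simulator's freeness must be the full-linear-space statement, not merely classicality preservation. \cref{lem:teleportation-free-properties} already supplies this.
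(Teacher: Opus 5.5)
Your proposal is correct and follows the paper's proof essentially step for step. You prepare the Choi states $\omega_{\cE_k}=\tfrac12 J_{\cE_k}$ in one parallel layer, identify the induced simulator with $\mathfrak T_{\rm tel}$, and invoke \cref{lem:teleportation-free-properties} for both the fixed-point property and the full-linear-space DISC property, then combine \cref{thm:common-simulation} with the inclusions $\para\subseteq\seq_\pi\subseteq\gen$. The only cosmetic difference is that you quote Eq.~\eqref{eq:common-simulation-collapse} directly with $\mathbf H=\seq_\pi$, whereas the paper writes out the same sandwich of inequalities explicitly.
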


\begin{proof}
For every $k$, the normalised Choi state
\begin{equation}
    \omega_{\cE_k}
    =
    \frac12J_{\cE_k}
\end{equation} 
can be prepared by applying $\cE_k$ to one half of a maximally entangled state. These preparations use each resource channel exactly once and can be executed in one parallel layer. The resulting states are stored and supplied to the same fixed teleportation interface $\mathfrak T_{\rm tel}$ whenever the original higher-order protocol would use the corresponding resource channel. This explicit teleportation-stretching construction is described in \cref{app:teleportation-details}.

By Lemma~\ref{lem:teleportation-free-properties}, the associated simulator $\mathfrak T_{\rm tel}$ is DISC and hence MISC, and every supplied mixed-Pauli channel is one of its fixed points. The hypotheses of Theorem~\ref{thm:common-simulation} therefore hold under both free constraints.

Let $\cS\in\gen\cap\mathbf F$ be arbitrary. Theorem \ref{thm:common-simulation} produces a parallel superchannel $\cP_{\cS}\in\para\cap\mathbf F$ with the same output on $\vect{\cN}$. Hence
\begin{equation}
    D_{\mathbf F}^{\para}
    \left(
        \vect{\cN};\cM
    \right)
    \le
    D_{\mathbf F}^{\gen}
    \left(
        \vect{\cN};\cM
    \right).
\end{equation}
For every $\pi\in S_n$,
\begin{equation}
    \para
    \subseteq
    \seq_\pi
    \subseteq
    \gen,
\end{equation}
and therefore
\begin{equation}
    D_{\mathbf F}^{\para}
    \left(
        \vect{\cN};\cM
    \right)
    \ge
    D_{\mathbf F}^{\seq_\pi}
    \left(
        \vect{\cN};\cM
    \right)
    \ge
    D_{\mathbf F}^{\gen}
    \left(
        \vect{\cN};\cM
    \right).
\end{equation}
Combining the two directions proves
Eq.~\eqref{eq:mixed-pauli-collapse-both}.
\end{proof}

\section{Robustness under DISC}
\label{sec:disc-robustness}

The DISC comparison tests whether the amplitude-damping hierarchy survives a genuinely smaller free set. In the reduced variables, MISC constrains only the coherence-carrying block $T_c$, whereas DISC also dephases the population blocks $T_0$ and $T_1$. Those extra conditions can change an optimum because they forbid coherence from being stored in population-processing sectors. Since $\disc\subseteq\misc$,
\begin{equation}
D_{\misc}^{\mathbf H}
\le
D_{\disc}^{\mathbf H}
\label{eq:misc-disc-general-order}
\end{equation}
for every task and causal class, with strict inequality possible in general. For the two channels defined exactly in \cref{app:misc-disc-example}, the corresponding identity-target errors are reported in \cref{tab:misc-disc-numerical}. The finite-decimal Choi matrices in that appendix are the exact definition of the revised numerical instance rather than rounded displays of a separate full-precision input file.
\begin{table}[h!]
\centering
\caption{Optimal transformation errors under MISC and DISC for the exact channel pair defined in \cref{app:misc-disc-example}, using parallel processing, the fixed order $A\prec B$, and the general-process class. The displayed values are numerical SDP optima rounded to seven decimal places.}
\label{tab:misc-disc-numerical}
\begin{tabular}{c@{\qquad}cc}
\toprule
$\mathbf H$
&
$D_{\misc}^{\mathbf H}$
&
$D_{\disc}^{\mathbf H}$
\\
\midrule
$\para$ & 0.1960802 & 0.2064456 \\
$A\prec B$ & 0.1956927 & 0.2064212 \\
$\gen$  & 0.1912633 & 0.2047563 \\
\bottomrule
\end{tabular}
\end{table}

For this exact pair, the six displayed values reproduce the numerical separation between MISC and DISC, with each MISC--DISC gap exceeding $10^{-2}$. These decimals are solver outputs rather than exact constants. DISC is therefore not redundant in the present framework. Amplitude damping is more special: its optimum already lies in the DISC-feasible subset.

\begin{proposition}[MISC--DISC equality for amplitude damping]
\label{prop:misc-disc-equality-main}
For every $0\le\varepsilon\le1$ and every
$\mathbf H\in\{\para,\seq,\gen\}$,
\begin{equation}
D_{\misc}^{\mathbf H}
\left(\cN_{\rm AD}^{\varepsilon},\cN_{\rm AD}^{\varepsilon};\cI\right)
=
D_{\disc}^{\mathbf H}
\left(\cN_{\rm AD}^{\varepsilon},\cN_{\rm AD}^{\varepsilon};\cI\right).
\label{eq:misc-disc-equality-main}
\end{equation}
\end{proposition}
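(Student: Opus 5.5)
Since $\disc\subseteq\misc$ gives $D_{\misc}^{\mathbf H}\le D_{\disc}^{\mathbf H}$ by \cref{eq:misc-disc-general-order}, it suffices to show that the MISC optimum in each class is attained by a DISC-feasible point. By \cref{prop:output-coherence-main} we may work with the reduced variables of \cref{lem:identity-output-twirl} with $T_1=0$. By \cref{tab:misc-disc-constraints}, such a point is DISC-feasible precisely when the population block $T_0=S$ is diagonal in the incoherent basis of $AB$ (the condition $\cD_{AB}(T_c)=0$ is common to both classes). The plan is therefore to exhibit, for each causal class, a MISC optimum with diagonal population block.

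\textbf{Sequential and general classes.} For $\seq$ and $\gen$ with $0<\varepsilon<1$, the achievability constructions invoked in the proofs of \cref{prop:sequential-main,prop:ico-main} (from \cref{app:explicit-achievability}) are already stated to use a \emph{diagonal} fixed-order comb, respectively a diagonal population block with $T_1=0$. Since $T_c$ satisfies $\cD_{AB}(T_c)=0$ by MISC feasibility, these points satisfy \cref{eq:disc-condition}. Evaluated at the maximiser $(x,y,z)$ of $F_\varepsilon$ and at $t_\star$, they reach the MISC optima, so $\mu_{\disc,\mathbf H}\ge\mu_{\misc,\mathbf H}$ for these two classes.

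\textbf{Parallel class.} No closed form is available here, so I would use a dephasing argument. Given an optimal MISC parallel point $(S,T_c)$, replace $S$ by $\cD_{AB}(S)$ and keep $T_c$. This preserves the following:
\begin{itemize}[label={}, leftmargin=0pt]
\item \emph{Causal normalisation.} The parallel normalisation constraints involve only partial traces and identities, and these commute with local complete dephasing.
\item \emph{Objective.} The amplitude-damping Choi operators satisfy $J_{\cN}\ox J_{\cN}$ having diagonal population-sector contributions, so $x_0$ is unchanged and $x_c$ depends only on $T_c$.
\item \emph{Positivity.} This is the obstacle: we need $\cD(S)\pm T_c\succeq0$.
\end{itemize}
To handle positivity, I would first apply the phase twirl on each input/output leg of $A,B$, i.e.\ conjugation by incoherent diagonal unitaries $\overline U_{\phi}\ox U_{\phi'}$ on each channel interface. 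This leaves the AD Choi operators invariant, since $\cN_{\rm AD}$ is phase covariant. The twirl block-diagonalises $S$ and $T_c$ by charge, which is exactly the charge-block structure used in \cref{app:sequential-certificate}. Its effect is to kill every coherence in $S$ carrying nonzero local charge.

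The remaining charge-zero coherences in $S$ pair basis states like $\ket{00}$ and $\ket{11}$ on an in/out leg. These have the same charge as the populations $T_c$ connects, so dephasing $S$ could in principle break $S\pm T_c\succeq0$. If it does, I would instead argue at the level of the reduced parallel program: write the dual certificate for the MISC parallel SDP and check that it remains valid after restricting to diagonal $S$, or solve the small reduced parallel problem directly with diagonal $S$ and compare values.

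\textbf{Endpoints.} For $\varepsilon=0$ both classes give error $0$ via the identity strategy, which is DISC. For $\varepsilon=1$ the channels are replacement channels, $\mu=0$, and complete dephasing attains the value in both classes.

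I expect the parallel case to be the main difficulty.
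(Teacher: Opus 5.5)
\textbf{Verdict: genuine gap in the parallel case.} Your arguments for $\seq$ and $\gen$ are the same as the paper's: the explicit optimal constructions already have $T_1=0$, diagonal $T_0=S$ and $\cD_{AB}(T_c)=0$. Your endpoint arguments for $\varepsilon\in\{0,1\}$ also match the paper's. The parallel case, which you flag as the main difficulty, is not closed.

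\textbf{What is missing.} You correctly reach for the independent phase twirl on the two interfaces, which is exactly the paper's tool. Note that invariance of $J_{\cN_{\rm AD}^{\varepsilon}}$ needs the \emph{same} phase on the input and output legs of one interface: $\overline U_\theta\ox U_\theta$ on $A_iA_o$ and $\overline U_{\theta'}\ox U_{\theta'}$ on $B_iB_o$. You then worry that charge-zero coherences, such as $\ket{00}\leftrightarrow\ket{11}$ on an in/out leg, survive the twirl, and you stop there.

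The missing observation is that such coherences cannot occur in the parallel class. With $T_1=0$, the reduced constraints $S=\mathsf P_{A_oB_o}(S)$ and $\tr_{A_iB_i}S=I_{A_oB_o}$ force $T_0=S=\rho_{A_iB_i}\ox I_{A_oB_o}$ for a density operator $\rho$. Every off-diagonal entry of such an $S$ joins $\ket{a_i,a_o,b_i,b_o}$ to $\ket{a_i',a_o,b_i',b_o}$. The output indices are identical and $(a_i,b_i)\neq(a_i',b_i')$, so the charge vectors $(a_o-a_i,\,b_o-b_i)$ of the two states differ. The independent twirl therefore replaces $\rho$ by $\cD_{A_iB_i}(\rho)$ and makes $S$ diagonal. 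The coherences you feared need different output indices, and the factor $I_{A_oB_o}$ excludes them.

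Positivity is then not an obstacle. The pair $(S,T_c)$ is averaged jointly under unitary conjugations, so $S\pm T_c\succeq0$ survives by convexity. Your separate step ``replace $S$ by $\cD_{AB}(S)$, keep $T_c$'' is unnecessary. The twirl also preserves the other constraints:
\begin{itemize}
\item parallel normalisation, since the unitaries are local;
\item $\cD_{AB}(T_c)=0$, since diagonal unitaries commute with dephasing;
\item $x_c$, by phase covariance of the amplitude-damping channel.
\end{itemize}
This is the paper's argument (Appendix~\ref{app:parallel-disc-twirl}).

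\textbf{Why your fallbacks do not close the gap.} A MISC dual certificate is automatically valid for the DISC problem, because $\disc\subseteq\misc$. Checking it ``after restricting to diagonal $S$'' therefore only reproduces the upper bound $\mu_{\disc,\para}\le\mu_{\misc,\para}$, which you already have. What you need is a DISC-feasible primal point that attains the MISC value. ``Solving the small reduced problem and comparing'' is not a proof for all $\varepsilon$, since you have no closed form for the parallel optimum.

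\textbf{A smaller point.} In your first bullet, $x_0$ is unchanged because the normalisation constraints give $x_0+x_1=1$ and $T_1=0$ gives $x_1=0$, so $x_0=1$. It is not because $J_{\cN}\ox J_{\cN}$ has diagonal population contributions. That operator has off-diagonal entries in the $(0,0)$ charge sector.
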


\begin{proof}
It remains only to prove $D_{\disc}^{\mathbf H}\le D_{\misc}^{\mathbf H}$. For either fixed sequential order, the construction in \cref{eq:seq-achievability-S,eq:seq-achievability-Tc} attains the MISC optimum with $T_1=0$, $T_0=S$ diagonal, and $\cD_{AB}(T_c)=0$. It therefore satisfies the DISC constraints and achieves the same optimum under DISC. The opposite fixed order has the same optimum by exchanging the two identical resource channels.

For the general-process class, the optimal family \cref{eq:ico-family-S,eq:ico-family-Tc} has the same properties and is DISC feasible at $t=t_\star$. Hence it attains the MISC optimum under DISC.

For the parallel class, \cref{prop:output-coherence-main} permits $T_1=0$. Parallel normalisation then gives $T_0=S=\rho_{A_iB_i}\ox I_{A_oB_o}$ for a density operator $\rho_{A_iB_i}$. The amplitude-damping Choi operator is invariant under independent joint input--output phase rotations on the two channel uses. Averaging the feasible point over these symmetries preserves $x_c$, positivity, and parallel normalisation while replacing $\rho_{A_iB_i}$ by its complete dephasing. The averaged point satisfies $T_0=\cD_{AB}(T_0)$ and $\cD_{AB}(T_c)=0$, and is therefore DISC feasible. The explicit covariance calculation is given in \cref{app:parallel-disc-twirl}.

At $\varepsilon=0$, one has $\cN_{\rm AD}^{0}=\cI$. Consider the parallel superchannel that wires $C_i$ directly to $A_i$, wires $A_o$ to $C_o$, feeds an incoherent fixed state into $B_i$, and discards $B_o$. This construction uses both supplied interfaces in one layer, returns the first resource channel exactly, maps classical channels to classical channels, and commutes with dephasing. It is therefore feasible under MISC and DISC and belongs to every causal class. Consequently, for every $\mathbf F\in\{\misc,\disc\}$ and $\mathbf H\in\{\para,\seq,\gen\}$,
\begin{equation}
D_{\mathbf F}^{\mathbf H}
\left(
    \cN_{\rm AD}^{0},\cN_{\rm AD}^{0};
    \cI
\right)
=
0.
\end{equation}

At $\varepsilon=1$, the amplitude-damping channel becomes the reset channel $\cN_{\rm AD}^{1}(\rho)=\ketbra{0}{0}$, whose Choi operator is diagonal in the incoherent basis. Thus $\cN_{\rm AD}^{1}$ is a classical channel. Because both MISC and DISC superchannels map classical resource channels to a classical output channel, the output coherence coefficient necessarily satisfies $x_c=0$. Proposition~\ref{prop:output-coherence-main} therefore implies
\begin{equation}
D_{\mathbf F}^{\mathbf H}
\left(
    \cN_{\rm AD}^{1},\cN_{\rm AD}^{1};
    \cI
\right)
\ge
\frac12.
\end{equation}

Conversely, consider the parallel replacement superchannel that ignores the two resource channels and outputs the completely dephasing channel $\cD$. This superchannel is feasible under both MISC and DISC and is contained in every causal class. Since $J_{\cD} = \Omega_0$  and  $\frac12 \left\|\cD-\cI \right\|_{\diamond} = \frac12$,
the lower bound is attainable. Consequently, for every $\mathbf F\in\{\misc,\disc\}$ and $\mathbf H\in\{\para,\seq,\gen\}$,
\begin{equation}
D_{\mathbf F}^{\mathbf H}
\left(
    \cN_{\rm AD}^{1},\cN_{\rm AD}^{1};
    \cI
\right)
=
\frac12.
\end{equation}
\end{proof}

\begin{theorem}[Strict DISC causal hierarchy]
\label{thm:disc-hierarchy-main}
For every $0<\varepsilon<1$,
\begin{equation}
D_{\disc}^{\para}
\left(
\cN_{\rm AD}^{\varepsilon},
\cN_{\rm AD}^{\varepsilon};
\cI
\right)
>
D_{\disc}^{\seq}
\left(
\cN_{\rm AD}^{\varepsilon},
\cN_{\rm AD}^{\varepsilon};
\cI
\right)
>
D_{\disc}^{\gen}
\left(
\cN_{\rm AD}^{\varepsilon},
\cN_{\rm AD}^{\varepsilon};
\cI
\right).
\label{eq:disc-hierarchy-main}
\end{equation}
Moreover,
\begin{equation}
D_{\disc}^{\gen}
\left(
\cN_{\rm AD}^{\varepsilon},
\cN_{\rm AD}^{\varepsilon};
\cI
\right)
=
\frac12
\left[
1-
s\left(
\varepsilon+R_\varepsilon
\right)
\right].
\label{eq:disc-ico-exact}
\end{equation}
\end{theorem}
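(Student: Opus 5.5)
The plan is to transfer the MISC hierarchy of \cref{thm:misc-hierarchy-main} to DISC through the equality of optima in \cref{prop:misc-disc-equality-main}; no new optimisation is needed. Since $\disc\subseteq\misc$, Eq.~\eqref{eq:misc-disc-general-order} already gives $D_{\misc}^{\mathbf H}\le D_{\disc}^{\mathbf H}$ for each causal class. \Cref{prop:misc-disc-equality-main} supplies the reverse inequality for $\mathbf H\in\{\para,\seq,\gen\}$ and every $0\le\varepsilon\le1$. It does so by exhibiting MISC-optimal points that are DISC feasible: the explicit sequential and general-process constructions with $T_1=0$ and $T_0=S$ diagonal, and the phase-averaged parallel point.

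Concretely, for $0<\varepsilon<1$ I will write
\begin{equation}
D_{\disc}^{\mathbf H}\left(\cN_{\rm AD}^{\varepsilon},\cN_{\rm AD}^{\varepsilon};\cI\right)
=
D_{\misc}^{\mathbf H}\left(\cN_{\rm AD}^{\varepsilon},\cN_{\rm AD}^{\varepsilon};\cI\right),
\qquad
\mathbf H\in\{\para,\seq,\gen\},
\end{equation}
and substitute these three identities into the chain \eqref{eq:misc-hierarchy-main}. The strict inequalities are properties of the common numerical values, so they carry over unchanged and give \eqref{eq:disc-hierarchy-main}. The closed form \eqref{eq:disc-ico-exact} follows in the same way, by combining the $\gen$ case of the equality with \cref{prop:ico-main}.

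There is one point to check. The symbol $\seq$ is meaningful under DISC only if the two fixed orders have the same DISC optimum. The exchange $A\leftrightarrow B$ permutes tensor factors, so it commutes with $\cD_{AB}$ and with $\cD_C$. It therefore maps $\seq_{\fixAB}\cap\disc$ bijectively onto $\seq_{\fixBA}\cap\disc$, and the argument given after \cref{prop:output-coherence-main} applies verbatim.

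The main obstacle is not in this theorem but in the parallel case of \cref{prop:misc-disc-equality-main}. That case relies on the covariance twirl of \cref{app:parallel-disc-twirl} to make $T_0$ diagonal while preserving $x_c$ and the constraint $\cD_{AB}(T_c)=0$. Since that proposition may be assumed, the present proof reduces to bookkeeping.
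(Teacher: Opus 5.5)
Your proposal is correct and follows the paper's own proof: the paper likewise obtains the theorem by combining \cref{thm:misc-hierarchy-main}, \cref{prop:misc-disc-equality-main}, and \cref{prop:ico-main}. Your extra check, that the exchange $A\leftrightarrow B$ preserves DISC so that $D_{\disc}^{\seq}$ is a well-defined common fixed-order value, matches the exchange-symmetry remark the paper makes inside the proof of \cref{prop:misc-disc-equality-main}.
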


\begin{proof}
Combine \cref{thm:misc-hierarchy-main,prop:misc-disc-equality-main,prop:ico-main}.
\end{proof}

Why does DISC leave the amplitude-damping optimum unchanged? The optimal constructions already take $T_0$ and $T_1$ diagonal, so the extra DISC constraints remove no useful degree of freedom. Those blocks perform only routing and normalization; the phase-sensitive part sits entirely in $T_c$, where MISC and DISC impose the same condition. The causal hierarchy therefore survives the stronger coherence restriction for structural, rather than numerical, reasons.

\section{Numerical results beyond the analytical example}
\label{sec:numerics}

The numerical calculations serve two purposes: they check the analytic amplitude-damping formulas and probe whether the same ordering appears away from that family. All primal and dual SDPs were solved with \textit{SeDuMi}~\cite{Sturm1999SeDuMi}. \Cref{fig:combined-results}(a) shows the amplitude-damping errors. Primal and dual values agree within $10^{-8}$ for every causal class and reproduce the strict ordering in \cref{thm:misc-hierarchy-main,thm:disc-hierarchy-main}. Because the MISC and DISC optima coincide analytically, the same three curves represent both free-operation classes.

We then sampled $1000$ independent pairs of random qubit channels, again with the identity target, to see how often a fixed-order-to-general-process gap appears outside the analytic example. For each instance, we computed
\begin{equation}
\Delta_{\seq-\gen}
:=
D_{\misc}^{\seq_{\fixAB}}(\cN_A,\cN_B;\cI)
-
D_{\misc}^{\gen}(\cN_A,\cN_B;\cI).
\label{eq:random-gap}
\end{equation}
To certify positivity using both programs, define
\begin{equation}
\Delta_{\rm cert}
:=
D_{\misc,\mathrm{dual}}^{\seq_{\fixAB}}
-
D_{\misc,\mathrm{primal}}^{\gen}.
\label{eq:certified-gap}
\end{equation}
A value $\Delta_{\rm cert}>0$ provides a solver-tolerance-certified lower bound on the fixed-order-to-general-process gap. In the supplied data, $989$ instances have a gap above $10^{-5}$, while $11$ lie below the adopted resolution and are treated as unresolved rather than as zero-gap instances. Across all programs, primal and dual values agree within $10^{-7}$.

\begin{figure}[h!]
\centering
\IfFileExists{figures/Figure_combined.png}{%
  \includegraphics[width=0.98\linewidth]{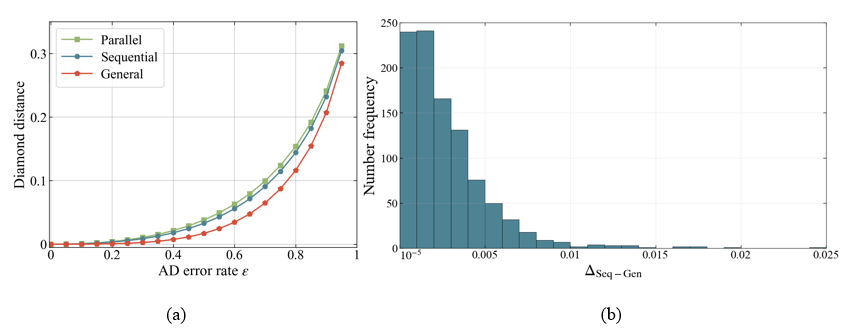}%
}{%
  \fbox{\parbox[c][6.0cm][c]{0.9\linewidth}{\centering
  Insert \texttt{figures/Figure\_combined.png}: (a) amplitude-damping errors and (b) random-channel gap histogram.}}%
}
\caption{(a) Optimal transformation errors for $(\cN_{\rm AD}^{\varepsilon})^{\otimes2}\to\cI$. The displayed curves apply to both MISC and DISC. The general-process class gives the smallest error, while parallel processing gives the largest. (b) Distribution of the fixed-order $A\prec B$-to-general-process improvement for $1000$ random pairs of qubit channels under MISC. The histogram begins at $10^{-5}$; the $11$ instances below this numerical resolution are reported separately as unresolved.}
\label{fig:combined-results}
\end{figure}

\begin{table}[h!]
\centering
\caption{Summary of the numerical evidence beyond the analytic amplitude-damping theorem. The random-channel counts refer to the reported sample and are not used as a distributional or genericity claim.}
\label{tab:numerical-summary}
\begin{tabular}{lc}
\toprule
Quantity & Reported value \\
\midrule
Random qubit-channel pairs & $1000$ \\
Certified gap above $10^{-5}$ & $989$ \\
Below numerical resolution & $11$ \\
Largest reported primal--dual mismatch & $10^{-7}$ \\
\bottomrule
\end{tabular}
\end{table}

These data are descriptive rather than distributional: they do not establish a generic advantage. They do show that the amplitude-damping separation is not isolated within the reported sample. The natural classification question is therefore which channel features block a common free simulation and leave room for a general process to outperform fixed-order memory.

\section{Discussion}
\label{sec:discussion}
The main conclusion is that causal organization and dynamical-resource restrictions cannot, in general, be optimized independently. For two uses of an amplitude-damping channel, identity-target symmetry turns the problem into the transfer of a single coherence coefficient. Within that reduced problem, explicit feasible constructions and analytic converse certificates separate parallel, fixed-order, and general processes for every nontrivial damping strength. The equality of the MISC and DISC optima further shows that this hierarchy survives a stricter free-operation constraint.

Mixed-Pauli channels expose the opposite regime and clarify why the causal hierarchy can disappear. Teleportation simulation transfers the channel dependence to Bell-diagonal program states, all of which can be prepared before any later processing. Because the induced simulator is DISC, the compilation remains free under both resource theories considered here. Once the channel calls have been replaced by stored program states, their temporal placement ceases to affect the attainable output. This identifies free program-state parallelization as a concrete obstruction to causal enhancement.

The scope of the amplitude-damping theorem is deliberately narrower than a device-level indefinite-order claim. The class $\gen$ is the general deterministic higher-order cone defined by the Choi normalization constraints used in the SDP. We have not classified its optimizer as a quantum switch or another quantum-controlled-order process, and we have not optimized over causal architectures beyond the parallel and two fixed-order classes treated explicitly. What is established is therefore a strict hierarchy among these specified process classes.

Two questions follow directly from the present results. One is structural: characterize channel families that admit a free program-state simulation, and contrast them with families whose coherence-carrying pathways remain constrained by fixed-order memory. The other is resource-theoretic: determine how the same causal comparison changes for dynamical resources such as entanglement, asymmetry, athermality, or magic. Both directions can be pursued within the same separation between causal constraints and free-operation constraints used here.

\section*{Data and code availability}

The exact channel pair used in \cref{tab:misc-disc-numerical} is defined in \cref{app:misc-disc-example} by Gaussian-integer Choi matrices with common denominator $10^4$. The reported values are numerical SDP optima for these matrices. Numerical code, solver outputs, verification scripts, random-channel data, and figure data are available at the \href{https://github.com/real-lin-zhu/Causal-Class-Hierarchies-in-Coherence-Constrained-Channel-Transformation.git}{publuc Github project repository}.

\section*{AI usage statement}
The authors formulated the research question and used several frontier LLMs under human direction to assist with exploratory derivations, proof organization, symbolic checks, numerical workflow design, and manuscript editing. The authors independently selected the final problem formulation, derived the analytical results, constructed and verified the semidefinite programs and analytic certificates, checked every displayed calculation, and reviewed the final manuscript. All analytical claims were independently verified by the authors, while numerical claims were checked through matched primal--dual optimization and explicit feasibility tests. The authors take full responsibility for the content and correctness of this article.

\section*{Acknowledgments}
This work was partially supported by the National Key R\&D Program of China (Grant No.~2024YFB4504004), the National Natural Science Foundation of China (Grant No.92576114), the Guangdong Provincial Quantum Science Strategic Initiative (Grant No.~GDZX2403008, GDZX2503001), and the Guangdong Provincial Key Lab of Integrated Communication, Sensing and Computation for Ubiquitous Internet of Things (Grant No.~2023B1212010007). X Zhao and B Zhao have been supported by the Hong Kong Research Grant Council through the Senior Research Fellowship SRFS2021-7S02.

\FloatBarrier

\ifprojectbib
\bibliographystyle{unsrt}
\bibliography{ref}
\else
\section*{References}
\emph{The project bibliography file \texttt{ref.bib} was not included in the uploaded source package used to compile this preview. Place the verified project bibliography beside this file to restore the full reference list and citation markers.}
\fi

\newpage
\appendix

\section{Notations and conventions}
\label{app:notation}

For a system $X$, $I_X$ is the identity operator, $\id_X$ is the identity superoperator on $\mathsf L(X)$, and $\tr_X$ is the partial trace. The Hilbert--Schmidt inner product is $\langle X,Y\rangle:=\tr(X^\dagger Y)$. For Hermitian operators, $X\succeq0$ means positive semidefinite.

For $M\in\mathsf L(XY)$ and $N\in\mathsf L(YZ)$, the link product over $Y$ is
\begin{equation}
M*_Y N
:=
\tr_Y\left[
(M^{T_Y}\ox I_Z)(I_X\ox N)
\right]\in\mathsf L(XZ),
\label{eq:link-product}
\end{equation}
where $T_Y$ denotes partial transpose in the fixed incoherent basis~\cite{Chiribella2009Networks}. The connected register is omitted when clear.

Let $R:=A_iA_oB_iB_o$. In the computational basis
$\ket{a_i,a_o,b_i,b_o}$, ordered lexicographically with index
$1+8a_i+4a_o+2b_i+b_o$, define $E_{i,j}:=\ket{i}\!\bra{j}$.

For compact causal constraints, define normalised replacement maps
\begin{equation}
\mathsf P_{X_o}(M)
:=
\tr_{X_o}M\ox\frac{I_{X_o}}{d_{X_o}},
\qquad
\mathsf P_{A_oB_o}(M)
:=
\tr_{A_oB_o}M\ox\frac{I_{A_oB_o}}{d_{A_oB_o}},
\label{eq:replacement-maps}
\end{equation}
with tensor factors reinserted in the fixed register order.

\section{General primal SDPs}
\label{app:general-sdp}

Applying the comb recursion in the register order $A_iA_oB_iB_oC_iC_o$ gives the causal normalization constraints below. The replacement maps in \eqref{eq:replacement-maps} keep them consistent with the unnormalized Choi convention \eqref{eq:choi-definition} and the link product \eqref{eq:link-product}~\cite{Chiribella2008CircuitArchitecture,Chiribella2009Networks,Bavaresco2021Hierarchy}.

For two input channels, the parallel constraints are
\begin{equation}
\begin{aligned}
\tr_{C_o}\Theta
&=
\tr_{A_oB_oC_o}\Theta\ox\frac{I_{A_oB_o}}{d_{A_oB_o}},\\
\tr_{C_oA_iB_i}\Theta
&=I_{A_oB_oC_i}.
\end{aligned}
\label{eq:parallel-full-constraints}
\end{equation}
For the fixed order $A\prec B$,
\begin{equation}
\begin{aligned}
\tr_{C_o}\Theta
&=
\tr_{B_oC_o}\Theta\ox\frac{I_{B_o}}{d_{B_o}},\\
\frac1{d_{B_o}}\tr_{C_oB_iB_o}\Theta
&=
\left(
\frac1{d_{B_o}}\tr_{A_oB_iB_oC_o}\Theta
\right)\ox\frac{I_{A_o}}{d_{A_o}},\\
\tr_{A_iA_oB_iB_oC_o}\Theta
&=d_{A_oB_o}I_{C_i}.
\end{aligned}
\label{eq:sequential-full-constraints}
\end{equation}
The opposite order follows by exchanging $A$ and $B$. In the analytical amplitude-damping example the two resources are identical, so the two fixed-order optima coincide by exchange symmetry.

For the full class $\gen$,
\begin{equation}
\begin{aligned}
\tr_{C_o}\Theta
={}&
\mathsf P_{A_o}(\tr_{C_o}\Theta)
+
\mathsf P_{B_o}(\tr_{C_o}\Theta)
-
\mathsf P_{A_oB_o}(\tr_{C_o}\Theta),\\
\tr_{A_iA_oC_o}\Theta
={}&
\tr_{A_iA_oB_oC_o}\Theta\ox\frac{I_{B_o}}{d_{B_o}},\\
\tr_{B_iB_oC_o}\Theta
={}&
\tr_{B_iB_oA_oC_o}\Theta\ox\frac{I_{A_o}}{d_{A_o}},\\
\tr_{A_iA_oB_iB_oC_o}\Theta
={}&d_{A_oB_o}I_{C_i}.
\end{aligned}
\label{eq:ico-full-constraints}
\end{equation}
Combining any of these constraints with positivity, the MISC or DISC equation, and the Watrous constraints in \cref{eq:general-sdp} gives the full primal SDP.

\section{Reduced MISC and DISC SDPs}
\label{app:reduced-sdps}

Set $S:=T_0+T_1$. The common positivity and scalar Watrous constraints are
\begin{equation}
\begin{aligned}
&T_1\succeq0,
\qquad
T_0\pm T_c\succeq0,\\
&z_1\ge0,
\qquad
z_0\pm z_c\ge0,
\qquad
z_0+z_1\le l,\\
&z_1\ge x_1,\\
&z_0+z_c-x_0-x_c+2\ge0,\\
&z_0-z_c-x_0+x_c\ge0.
\end{aligned}
\label{eq:common-reduced-constraints}
\end{equation}
MISC adds $\cD_{AB}(T_c)=0$. DISC adds
\begin{equation}
T_0=\cD_{AB}(T_0),
\qquad
T_1=\cD_{AB}(T_1),
\qquad
\cD_{AB}(T_c)=0.
\label{eq:reduced-disc-constraints}
\end{equation}

The reduced causal constraints are as follows. For $\para$,
\begin{equation}
S=\mathsf P_{A_oB_o}(S),
\qquad
\tr_{A_iB_i}S=I_{A_oB_o}.
\label{eq:parallel-reduced}
\end{equation}
For $A\prec B$,
\begin{equation}
S=\mathsf P_{B_o}(S),
\qquad
\tr_{B_iB_o}S
=
\mathsf P_{A_o}(\tr_{B_iB_o}S),
\qquad
\tr S=4.
\label{eq:sequential-reduced}
\end{equation}
For $\gen$,
\begin{equation}
\begin{aligned}
S={}&\mathsf P_{A_o}(S)+\mathsf P_{B_o}(S)-\mathsf P_{A_oB_o}(S),\\
\tr_{A_iA_o}S={}&\mathsf P_{B_o}(\tr_{A_iA_o}S),\\
\tr_{B_iB_o}S={}&\mathsf P_{A_o}(\tr_{B_iB_o}S),\\
\tr S={}&4.
\end{aligned}
\label{eq:ico-reduced}
\end{equation}
For identical resources, parallel and general-process variables may additionally be averaged under $A\leftrightarrow B$.

To write the duals compactly, set $U:=T_0+T_c$, $V:=T_0-T_c$, and $W:=T_1$, and let $\mathsf A_{\mathbf H}(S)=b_{\mathbf H}$ denote the corresponding causal equations. Define
\begin{equation}
\ell_{\cN}(T):=J_{\cN}^{\ox2}*T=\langle G_{\cN},T\rangle.
\label{eq:ell-definition}
\end{equation}
The reduced MISC dual is
\begin{equation}
\begin{aligned}
\text{maximise}\quad&
\langle b_{\mathbf H},Y_{\mathbf H}\rangle-2r_+\\
\text{subject to}\quad&
r_+G_{\cN}-\frac12\mathsf A_{\mathbf H}^*(Y_{\mathbf H})
-\frac12\cD_{AB}(H_c)\succeq0,\\
& r_-G_{\cN}-\frac12\mathsf A_{\mathbf H}^*(Y_{\mathbf H})
+\frac12\cD_{AB}(H_c)\succeq0,\\
& r_1G_{\cN}-\mathsf A_{\mathbf H}^*(Y_{\mathbf H})\succeq0,\\
&0\le r_1\le1,
\qquad
0\le r_\pm\le\frac12.
\end{aligned}
\label{eq:misc-reduced-dual}
\end{equation}
Let $\cQ_{AB}:=\id_{AB}-\cD_{AB}$. The DISC dual adds multipliers $H_0,H_1$ and replaces the first two matrix inequalities by
\begin{equation}
\begin{aligned}
r_+G_{\cN}
-\frac12\mathsf A_{\mathbf H}^*(Y_{\mathbf H})
-\frac12\cQ_{AB}(H_0)
-\frac12\cD_{AB}(H_c)&\succeq0,\\
r_-G_{\cN}
-\frac12\mathsf A_{\mathbf H}^*(Y_{\mathbf H})
-\frac12\cQ_{AB}(H_0)
+\frac12\cD_{AB}(H_c)&\succeq0,
\end{aligned}
\label{eq:disc-dual-first-two}
\end{equation}
and the third by
\begin{equation}
r_1G_{\cN}
-\mathsf A_{\mathbf H}^*(Y_{\mathbf H})
-\cQ_{AB}(H_1)\succeq0.
\label{eq:disc-dual-third}
\end{equation}
The class-dependent adjoints are
\begin{equation}
\mathsf A_{\para}^*(Y_0,Y_1)
=
Y_0-\mathsf P_{A_oB_o}(Y_0)+I_{A_iB_i}\ox Y_1,
\label{eq:parallel-adjoint}
\end{equation}
\begin{equation}
\begin{aligned}
\mathsf A_{\seq}^*(Y_0,Y_1,\eta)
={}&Y_0-\mathsf P_{B_o}(Y_0)\\
&+[Y_1-\mathsf P_{A_o}(Y_1)]\ox I_{B_iB_o}+\eta I_R,
\end{aligned}
\label{eq:sequential-adjoint}
\end{equation}
and
\begin{equation}
\begin{aligned}
\mathsf A_{\gen}^*(Y_0,Y_A,Y_B,\eta)
={}&Y_0-\mathsf P_{A_o}(Y_0)-\mathsf P_{B_o}(Y_0)
+\mathsf P_{A_oB_o}(Y_0)\\
&+I_{A_iA_o}\ox[Y_A-\mathsf P_{B_o}(Y_A)]\\
&+[Y_B-\mathsf P_{A_o}(Y_B)]\ox I_{B_iB_o}+\eta I_R.
\end{aligned}
\label{eq:ico-adjoint}
\end{equation}

\section{Explicit amplitude-damping achievability constructions}
\label{app:explicit-achievability}

This appendix collects the matrix-valued feasible constructions used for attainment in the amplitude-damping theorems. Keeping them here leaves the main text focused on the operational errors and the scalar optimization.

\subsection{Fixed-order sequential construction}
For achievability, choose $x,y,z\ge0$ with
$x^2+y^2+z^2=1$ and define the diagonal comb
\begin{equation}
\begin{aligned}
S=\diag\bigg(
&x^2,x^2,y^2,y^2,
\frac{x^2+y^2}{2},\frac{x^2+y^2}{2},
\frac{x^2+y^2}{2},\frac{x^2+y^2}{2},\\
&\frac{z^2}{2},\frac{z^2}{2},
\frac{z^2}{2},\frac{z^2}{2},
0,0,z^2,z^2
\bigg).
\end{aligned}
\label{eq:seq-achievability-S}
\end{equation}
Let
\begin{equation}
w:=\sqrt{y^2+s^2z^2}.
\label{eq:seq-w-definition}
\end{equation}
For $w>0$, set
\begin{equation}
\begin{aligned}
T_c={}&
\frac{xy^2}{w}(E_{1,4}+E_{4,1})
+
\frac{xsz^2}{w}(E_{1,16}+E_{16,1})\\
&+
yz(E_{3,15}+E_{15,3})
+
\frac{z^2}{2}(E_{9,12}+E_{12,9}),
\end{aligned}
\label{eq:seq-achievability-Tc}
\end{equation}
and take $T_c=0$ when $w=0$. Here $E_{i,j}$ are matrix units in the
lexicographically ordered basis of $A_iA_oB_iB_o$. Direct substitution
verifies the $A\prec B$ comb constraints and
$\cD_{AB}(T_c)=0$. Positivity follows from
\begin{equation}
\left(\frac{xy}{w}\right)^2
+
\left(\frac{xsz}{w}\right)^2
=
x^2.
\label{eq:seq-achievability-positivity}
\end{equation}

Direct contraction with $\left(J_{\cN_{\rm AD}^{\varepsilon}}\right)^{\ox2}$ gives $x_c=F_\varepsilon(x,y,z)$, which is the achievability statement used in \cref{prop:sequential-main}.

\subsection{General-process construction}
For achievability, let $t\in[0,1]$ and define
\begin{equation}
\begin{aligned}
S(t)=\diag\bigg(&1-t,\frac{1-t}{2},\frac{t}{2},0,
\frac{1-t}{2},0,\frac12,\frac{1-t}{2},\\
&\frac{t}{2},\frac12,0,\frac{t}{2},
0,\frac{1-t}{2},\frac{t}{2},t\bigg),
\end{aligned}
\label{eq:ico-family-S}
\end{equation}
with $T_0(t)=S(t)$, $T_1(t)=0$, and
\begin{equation}
\begin{aligned}
T_c(t)={}&
\sqrt{t(1-t)}(E_{1,16}+E_{16,1})\\
&+
\frac{t}{2}
\left(
E_{3,15}+E_{15,3}
+E_{9,12}+E_{12,9}
\right).
\end{aligned}
\label{eq:ico-family-Tc}
\end{equation}
The family satisfies the general-process normalisation constraints, $\cD_{AB}(T_c)=0$, and $S(t)\pm T_c(t)\succeq0$.
The contraction of this family with the two amplitude-damping Choi operators gives \cref{eq:ico-family-xc} in the main text.

\section{Sequential upper-bound certificate}
\label{app:sequential-certificate}

For a computational-basis vector $\ket{a_i,a_o,b_i,b_o}$ define the two phase charges
\begin{equation}
q_A:=a_o-a_i,
\qquad
q_B:=b_o-b_i.
\label{eq:resource-phase-charges}
\end{equation}
The amplitude-damping Choi operator is invariant under $\overline U_\theta\ox U_\theta$ and has support only in the charge sectors $0$ and $-1$.

\begin{lemma}[Fixed-order charge-block reduction]
\label{lem:fixed-order-charge-blocks}
Every reduced $A\prec B$ MISC feasible point can be averaged, without changing $x_c$, over the independent phase rotations of the two resource channels and over entrywise complex conjugation. On the charge sectors that contribute to $x_c$, the resulting operators have the following principal blocks. In the ordered bases
\begin{equation}
\mathcal B_{00}=(\ket1,\ket4,\ket{13},\ket{16}),
\quad
\mathcal B_{0,-1}=(\ket3,\ket{15}),
\quad
\mathcal B_{-1,0}=(\ket9,\ket{12}),
\label{eq:seq-charge-bases}
\end{equation}
one has
\begin{equation}
S_{00}
=
\begin{pmatrix}
a&0&u&0\\
0&b&0&-u\\
u&0&g&0\\
0&-u&0&h
\end{pmatrix},
\quad
S_{0,-1}
=
\begin{pmatrix}b&-u\\-u&h\end{pmatrix},
\quad
S_{-1,0}
=
\begin{pmatrix}m_0&0\\0&n_0\end{pmatrix},
\label{eq:seq-charge-blocks}
\end{equation}
where
\begin{equation}
g+h=m_0+n_0,
\qquad
a+b+m_0+n_0=1.
\label{eq:seq-charge-relations}
\end{equation}
The corresponding blocks of $T_c$ are real symmetric and satisfy
\begin{equation}
C_{00}=C_{00}^{\mathsf T},
\quad
\diag C_{00}=0,
\quad
C_{0,-1}=K(r),
\quad
C_{-1,0}=K(t),
\qquad
K(q):=\begin{pmatrix}0&q\\q&0\end{pmatrix}.
\label{eq:Kq}
\end{equation}
Moreover,
\begin{equation}
x_c
=
\vect v^{\mathsf T}C_{00}\vect v+2\varepsilon s(r+t),
\qquad
\vect v=(1,s,s,s^2)^{\mathsf T}.
\label{eq:seq-block-objective}
\end{equation}
All matrices appearing as the second arguments of the certificate overlaps in \eqref{eq:seq-certificate-app} are principal blocks of $S\pm T_c$ and are therefore positive semidefinite.
\end{lemma}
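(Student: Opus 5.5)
The plan is to prove the lemma by symmetrising a reduced $A\prec B$ feasible point and then reading off the charge blocks from the comb equations \eqref{eq:sequential-reduced}. I would proceed in three steps.

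\emph{Step 1: symmetry averaging.} For $\theta,\varphi\in[0,2\pi)$, let $W_{\theta,\varphi}:=(\overline U_\theta\ox U_\theta)_{A_iA_o}\ox(\overline U_\varphi\ox U_\varphi)_{B_iB_o}$ act by conjugation on $S$ and $T_c$, and let entrywise complex conjugation act as a second symmetry. I would check four things, as in the proof of \cref{lem:identity-output-twirl}.
\begin{itemize}
\item The $W_{\theta,\varphi}$ are diagonal (incoherent) unitaries. They therefore commute with $\cD_{AB}$, so the MISC condition $\cD_{AB}(T_c)=0$ is preserved.
\item They act as local unitaries of the form $\overline U\ox U$ on each input--output pair. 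Hence they commute with $\mathsf P_{A_o}$, $\mathsf P_{B_o}$ and the partial traces in \eqref{eq:sequential-reduced}, so the comb equations are preserved. Positivity of $S\pm T_c$ is preserved by unitary conjugation.
\item $J_{\cN_{\rm AD}^{\varepsilon}}$ is real and invariant under $\overline U_\theta\ox U_\theta$, so $G_{\cN}$ is invariant under both symmetries. Hence $x_c=\langle G_{\cN},T_c\rangle$ is unchanged, and it is real.
\item All of these conditions are convex, so averaging preserves feasibility. After averaging, $S$ and $T_c$ are real symmetric and block diagonal in the charges $(q_A,q_B)$ of \eqref{eq:resource-phase-charges}.
\end{itemize}

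\emph{Step 2: which sectors contribute.} Since $J_{\cN_{\rm AD}^{\varepsilon}}=\ketbra{00}{00}+s(\ketbra{00}{11}+\ketbra{11}{00})+s^2\ketbra{11}{11}+\varepsilon\ketbra{10}{10}$ lives only in charges $0$ and $-1$, only four sectors of $G_{\cN}$ can contribute to $x_c$.
\begin{itemize}
\item Sector $(0,0)$ is spanned by $\ket1,\ket4,\ket{13},\ket{16}$, where $G_{\cN}$ equals $\vect v\vect v^{\mathsf T}$ with $\vect v=(1,s,s,s^2)^{\mathsf T}$.
\item Sector $(0,-1)$ is spanned by $\ket3,\ket{15}$, where $G_{\cN}=\varepsilon\begin{pmatrix}1&s\\ s&s^2\end{pmatrix}$.
\item Sector $(-1,0)$ is spanned by $\ket9,\ket{12}$, with the same matrix.
\item Sector $(-1,-1)$ is the one-dimensional span of $\ket{11}$.
\end{itemize}
The MISC condition gives $\diag T_c=0$. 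Thus the $(-1,-1)$ sector drops out, and so do the diagonal parts of the other sectors. Writing the surviving off-diagonal entries of $T_c$ as $C_{00}$, $K(r)$ and $K(t)$ then yields \eqref{eq:seq-block-objective}. Here I would check that the partial transpose in the link product \eqref{eq:link-product} is harmless because all matrices involved are real symmetric.

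\emph{Step 3: the shape of $S$.} I would derive \eqref{eq:seq-charge-blocks} from the three reduced comb equations.
\begin{itemize}
\item $S=\mathsf P_{B_o}(S)=(\tr_{B_o}S)\ox I_{B_o}/2$ kills every entry that is off-diagonal in $b_o$. This gives $S_{1,4}=S_{13,16}=S_{1,16}=S_{4,13}=0$ and $S_{9,12}=0$. It also identifies entries that differ only in $b_o$: $S_{1,13}=S_{2,14}=:u$, $S_{4,16}=S_{3,15}$, $S_{3,3}=S_{4,4}=b$, $S_{15,15}=S_{16,16}=h$, and similarly for the $a_i=1,a_o=0$ diagonal.
\item The second condition, $\tr_{B_iB_o}S=\mathsf P_{A_o}(\cdot)$, forces the $a_o$-off-diagonal entry of $\tr_{B}S$ to vanish: $S_{1,13}+S_{2,14}+S_{3,15}+S_{4,16}=0$. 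Hence $S_{3,15}=S_{4,16}=-u$.
\item The same condition applied to diagonals equates the $a_o=1$ and $a_o=0$ populations. For $a_i=1$ this gives $g+h=m_0+n_0$; for $a_i=0$ it fixes the $a_o=1$ populations to $a+b$.
\item Combining this with $\tr S=4$ and the factor $2$ from each $B_o$ pairing gives $a+b+m_0+n_0=1$.
\end{itemize}
The principal submatrices of $S\pm T_c$ on these bases are the blocks $S_{00}\pm C_{00}$, $S_{0,-1}\pm K(r)$ and $S_{-1,0}\pm K(t)$, which are positive semidefinite.

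The main obstacle I expect is the careful index and sign bookkeeping in Step 3, especially extracting the relative sign $-u$ and matching the population labels to \eqref{eq:seq-charge-relations}. Together with the transpose convention in Step 2, these are the points I would verify explicitly in the $16$-dimensional lexicographic basis.
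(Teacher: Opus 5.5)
Your proposal is correct and follows essentially the same route as the paper. You average over the independent phase rotations and entrywise conjugation to obtain real, charge-block-diagonal $S$ and $T_c$; you use the support of $G_\varepsilon$ together with $\diag T_c=0$ to isolate the sectors $(0,0)$, $(0,-1)$, $(-1,0)$; and you solve the $A\prec B$ comb equations entry by entry.

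Your bookkeeping is more explicit than the paper's. You use the $(0,+1)$-sector entry $S_{2,14}$, which the paper's phrase ``on the three contributing sectors'' glosses over, to get $S_{3,15}=S_{4,16}=-u$. You also keep the full block $\varepsilon\begin{psmallmatrix}1&s\\ s&s^2\end{psmallmatrix}$ of $G_\varepsilon$, whose off-diagonal part yields the $2\varepsilon s(r+t)$ term.
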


\begin{proof}
The independent phase average removes matrix elements connecting different pairs $(q_A,q_B)$. Entrywise complex conjugation preserves positivity, the real causal equations, the real amplitude-damping objective, and the condition $\cD_{AB}(T_c)=0$, so a real representative may be chosen. The tensor-square Choi operator has nonzero entries contributing to the contraction with $T_c$ only on the sectors $(0,0)$, $(0,-1)$, and $(-1,0)$. The sector $(-1,-1)$ is one dimensional and cannot contribute because $T_c$ has zero diagonal.

Write the most general real charge-block-diagonal $S$ and impose, entry by entry,
\begin{equation}
S=\mathsf P_{B_o}(S),
\qquad
\tr_{B_iB_o}S=\mathsf P_{A_o}(\tr_{B_iB_o}S),
\qquad
\tr S=4.
\label{eq:seq-block-linear-system}
\end{equation}
Solving this finite linear system on the three contributing sectors gives exactly \eqref{eq:seq-charge-blocks}. The second equality in \eqref{eq:seq-block-linear-system} yields $g+h=m_0+n_0$, while the trace equation, after using the multiplicities enforced by $\mathsf P_{B_o}$, yields $a+b+m_0+n_0=1$. The same phase and reality reductions applied to $T_c$, together with $\cD_{AB}(T_c)=0$, give \eqref{eq:Kq}. Finally, restricting $G_\varepsilon=(J_{\cN_{\rm AD}^{\varepsilon}})^{\ox2}$ to the three sectors gives respectively $\vect v\vect v^{\mathsf T}$, $\varepsilon s\begin{psmallmatrix}0&1\\1&0\end{psmallmatrix}$, and the same $2\times2$ matrix, which proves \eqref{eq:seq-block-objective}. Principal submatrices of a positive semidefinite matrix are positive semidefinite, giving the final statement.
\end{proof}

We use $m_0,n_0$ for the entries in \eqref{eq:seq-charge-blocks} to avoid confusion with coherence coefficients.

\begin{lemma}[Interior maximiser and strict auxiliary inequalities]
\label{lem:seq-interior-properties}
Let $(x,y,z)$ maximise $F_\varepsilon$ over $x,y,z\ge0$ and $x^2+y^2+z^2=1$, and define
\begin{equation}
\mu:=F_\varepsilon(x,y,z).
\label{eq:mu-app-definition}
\end{equation}
For every $0<\varepsilon<1$,
\begin{equation}
x>0,
\qquad
y>0,
\qquad
z>0,
\qquad
\mu>s,
\qquad
z>y.
\label{eq:interior-properties}
\end{equation}
Moreover,
\begin{equation}
\Delta
:=
\mu-\varepsilon s-\frac{s^2}{\mu}
=
\varepsilon s\left(\frac zy-1\right)>0.
\label{eq:Delta-app}
\end{equation}
\end{lemma}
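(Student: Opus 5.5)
The plan is to work with the lifted spectral form of \cref{cor:sequential-spectral} rather than with $F_\varepsilon$ directly. First I would check that, for $0<\varepsilon<1$, the nonnegative matrix $H_\varepsilon$ in \cref{eq:sequential-H-epsilon} is irreducible. Its off-diagonal support graph has edges $1$--$3$, $2$--$4$ and $3$--$4$, all with weights $s,s^2,\varepsilon s>0$, and this graph is connected. Perron--Frobenius then gives a simple top eigenvalue $\mu=\lambda_{\max}(H_\varepsilon)$ with an entrywise strictly positive unit eigenvector $q=(u,v,y,z)^{\mathsf T}$. Every maximiser of the nonnegative Rayleigh problem in \cref{eq:sequential-lifted-rayleigh} is a top eigenvector, so it must be this Perron vector.

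Next I would transfer this back to $F_\varepsilon$. If $(x,y,z)$ maximises $F_\varepsilon$, then the Cauchy--Schwarz lift \cref{eq:sequential-cauchy-equality} yields $(u,v,y,z)$ attaining the lifted maximum. This lifted vector is therefore the Perron vector, and hence $y,z>0$ and $x=\sqrt{u^2+v^2}>0$. In the degenerate case $w=0$ one would have $y=z=0$, which contradicts positivity, so the lift is always well defined. This step also shows that $\mu=F_\varepsilon(x,y,z)$ equals $\lambda_{\max}(H_\varepsilon)$.

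The quantitative claims then come from the eigen-equations $H_\varepsilon q=\mu q$:
\begin{equation*}
\mu u=sy,\qquad \mu v=s^2z,\qquad \mu y=su+\varepsilon sz,\qquad \mu z=s^2v+\varepsilon sy+\varepsilon sz.
\end{equation*}
Eliminating $u=sy/\mu$ and $v=s^2z/\mu$ gives two relations:
\begin{equation*}
\Bigl(\mu-\tfrac{s^2}{\mu}\Bigr)y=\varepsilon s z,\qquad \Bigl(\mu-\varepsilon s-\tfrac{s^4}{\mu}\Bigr)z=\varepsilon s y .
\end{equation*}
The first relation, together with $y,z>0$ and $\mu>0$, immediately gives $\mu^2>s^2$, i.e.\ $\mu>s$. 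Dividing the first relation by $y$ and subtracting $\varepsilon s$ yields the identity $\Delta=\mu-\varepsilon s-s^2/\mu=\varepsilon s(z/y-1)$ of \cref{eq:Delta-app}. What remains is to show that $z>y$, which is equivalent to $\Delta>0$.

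I expect this last inequality to be the only real obstacle, and I would handle it by contradiction. Suppose $z\le y$, so that $\mu-s^2/\mu=\varepsilon s z/y\le\varepsilon s$. The second relation then gives $\mu-\varepsilon s-s^4/\mu=\varepsilon s y/z\ge\varepsilon s$, that is, $\mu-s^4/\mu\ge2\varepsilon s$. On the other hand, using $s^2-s^4=s^2\varepsilon$,
\begin{equation*}
\mu-\tfrac{s^4}{\mu}=\Bigl(\mu-\tfrac{s^2}{\mu}\Bigr)+\tfrac{s^2\varepsilon}{\mu}\le\varepsilon s+\tfrac{\varepsilon s^2}{\mu}.
\end{equation*}
Combining the two bounds forces $\varepsilon s^2/\mu\ge\varepsilon s$, i.e.\ $\mu\le s$, which contradicts $\mu>s$. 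Hence $z>y$ and therefore $\Delta>0$, which completes all the claims in \cref{eq:interior-properties,eq:Delta-app}.
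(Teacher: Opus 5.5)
Your proof is correct, but for the qualitative claims it takes a genuinely different route from the paper. The paper works directly with $F_\varepsilon$ on the sphere. It rules out each face $x=0$, $y=0$, $z=0$ by a one-sided directional derivative, and obtains $\mu>s$ by comparison with the face value $\max_{z=0}F_\varepsilon=s$. It then writes the Lagrange system \eqref{eq:seq-stationarity-app}, identifying the multiplier with $\mu$ through homogeneity.

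You instead lift to $q^{\mathsf T}H_\varepsilon q$ and note that $H_\varepsilon$ is irreducible precisely for $0<\varepsilon<1$, since its support graph is the path $1$--$3$--$4$--$2$. Perron--Frobenius then replaces both the boundary analysis and the stationarity conditions at once. Your $\mu>s$ comes from the first eliminated eigen-relation rather than from the boundary comparison. Your final step $z>y$ is the paper's identity $s/\mu=1+y/z-z/y$ in contrapositive form.

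Your route buys three things:
\begin{itemize}
\item Uniqueness of the maximiser, which the paper does not get: every maximiser lifts to the same Perron vector, and $x=\sqrt{u^2+v^2}$ is recovered from it.
\item No differentiability or multiplier bookkeeping.
\item A direct link to \cref{cor:sequential-spectral}.
\end{itemize}
Moreover, substituting $u=xy/w$, $v=xsz/w$ and $x/w=s/\mu$ turns your eigen-equations into exactly \eqref{eq:seq-stationarity-app}. So the later certificate identity, which invokes those stationarity equations, is unaffected. The paper's route, in turn, is elementary and needs no external theorem.

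Three small points to tighten:
\begin{itemize}
\item Invoke only the purely variational identity $\max F_\varepsilon=\lambda_{\max}(H_\varepsilon)$ from \eqref{eq:sequential-lifted-rayleigh}--\eqref{eq:sequential-rayleigh-max}. Do not use the corollary's headline $\mu_{\misc,\seq}=\lambda_{\max}(H_\varepsilon)$: it rests on \cref{prop:sequential-main}, which uses this very lemma, so citing it would be circular.
\item Do not dispose of the case $w=0$ by appeal to positivity, since positivity is what you are proving. Either note that the lift still exists there, because both sides of \eqref{eq:sequential-cauchy-lift} vanish and any $(u,v)$ with $u^2+v^2=x^2$ works, or observe directly that $F_\varepsilon(1,0,0)=0<s$.
\item Record $\mu\ge F_\varepsilon(1/\sqrt2,1/\sqrt2,0)=s>0$ before dividing by $\mu$ in the elimination step.
\end{itemize}
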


\begin{proof}
The feasible set is compact, so a maximiser exists. Suppose first that $x=0$. For sufficiently small $\delta>0$, set
\begin{equation}
(x_\delta,y_\delta,z_\delta)
=
\left(
\delta,
\sqrt{1-\delta^2}\,y,
\sqrt{1-\delta^2}\,z
\right).
\label{eq:x-perturbation}
\end{equation}
The terms independent of $x$ vary only at order $\delta^2$, whereas
\begin{equation}
\left.
\frac{\mathrm d}{\mathrm d\delta}
F_\varepsilon(x_\delta,y_\delta,z_\delta)
\right|_{\delta=0^+}
=
2s\sqrt{y^2+s^2z^2}>0.
\label{eq:x-derivative}
\end{equation}
Hence $x>0$.

If $y=0$ and $z>0$, use
\begin{equation}
(x_\delta,y_\delta,z_\delta)
=
\left(
\sqrt{1-\delta^2}\,x,
\delta,
\sqrt{1-\delta^2}\,z
\right).
\label{eq:y-perturbation}
\end{equation}
Then
\begin{equation}
\left.
\frac{\mathrm d}{\mathrm d\delta}
F_\varepsilon(x_\delta,y_\delta,z_\delta)
\right|_{\delta=0^+}
=
2\varepsilon s z>0.
\label{eq:y-derivative}
\end{equation}
If $y=z=0$, then $x=1$ and $F_\varepsilon=0$, while $F_\varepsilon(1/\sqrt2,1/\sqrt2,0)=s>0$. Thus $y>0$.

If $z=0$, then
\begin{equation}
F_\varepsilon(x,y,0)=2sxy\le s,
\label{eq:z-boundary}
\end{equation}
with equality only at $x=y=1/\sqrt2$. Along
\begin{equation}
x_\delta=y_\delta=\sqrt{\frac{1-\delta^2}{2}},
\qquad
z_\delta=\delta,
\label{eq:z-perturbation}
\end{equation}
one has
\begin{equation}
\left.
\frac{\mathrm d}{\mathrm d\delta}
F_\varepsilon(x_\delta,y_\delta,z_\delta)
\right|_{\delta=0^+}
=
\sqrt2\,\varepsilon s>0.
\label{eq:z-derivative}
\end{equation}
Therefore $z>0$, and the global maximum satisfies $\mu>s$.

The maximiser is interior, so the Lagrange equations are
\begin{equation}
\begin{aligned}
\mu x
&=s\sqrt{y^2+s^2z^2},\\
\mu y
&=\frac{sxy}{\sqrt{y^2+s^2z^2}}+\varepsilon sz,\\
\mu z
&=\frac{s^3xz}{\sqrt{y^2+s^2z^2}}+\varepsilon s(y+z).
\end{aligned}
\label{eq:seq-stationarity-app}
\end{equation}
The multiplier equals $\mu$ because $F_\varepsilon$ is homogeneous of degree two. Using the first equation in the second and third gives
\begin{equation}
\mu
=
\frac{s^2}{\mu}+\varepsilon s\frac zy,
\qquad
\mu
=
\frac{s^4}{\mu}+\varepsilon s\left(\frac yz+1\right).
\label{eq:stationarity-reduced}
\end{equation}
Subtracting and using $1-s^2=\varepsilon$ yields
\begin{equation}
\frac{s}{\mu}
=
1+\frac yz-\frac zy.
\label{eq:z-y-relation}
\end{equation}
Since $\mu>s$, the left-hand side is strictly smaller than one, so $y/z-z/y<0$. Because $y,z>0$, this is equivalent to $z>y$. The first identity in \cref{eq:stationarity-reduced} then gives \cref{eq:Delta-app}.
\end{proof}

Set
\begin{equation}
\vect n_\pm
:=
\begin{pmatrix}1\\\mp s/\mu\\0\\\mp s^2/\mu\end{pmatrix},
\qquad
\vect e_3:=\begin{pmatrix}0\\0\\1\\0\end{pmatrix},
\label{eq:n-pm}
\end{equation}
\begin{equation}
\begin{aligned}
A_\pm&:=\frac{\mu(1\mp\mu)}2,
& B_\pm&:=\frac{s(\mu\mp1)}2,\\
c_+&:=\frac{s^2(1-\mu)}{2\mu},
&c_-&:=\frac{s^2(1+\mu)}{2\mu}+\Delta,
\end{aligned}
\label{eq:ABC-pm}
\end{equation}
and
\begin{equation}
\begin{aligned}
P_\pm
&:=
A_\pm\vect n_\pm\vect n_\pm^{\mathsf T}
+B_\pm(\vect n_\pm\vect e_3^{\mathsf T}+\vect e_3\vect n_\pm^{\mathsf T})
+c_\pm\vect e_3\vect e_3^{\mathsf T},\\
Q_+
&:=
\alpha\begin{pmatrix}z\\-y\end{pmatrix}
\begin{pmatrix}z&-y\end{pmatrix},
\qquad
Q_-
:=
\beta\begin{pmatrix}z\\y\end{pmatrix}
\begin{pmatrix}z&y\end{pmatrix},\\
R_\pm
&:=
\frac{\varepsilon s}{2}
\begin{pmatrix}1\\\mp1\end{pmatrix}
\begin{pmatrix}1&\mp1\end{pmatrix},
\end{aligned}
\label{eq:seq-certificate-operators}
\end{equation}
where
\begin{equation}
\alpha
:=
\frac{\varepsilon s}{2yz}\left(1-s\frac zy\right),
\qquad
\beta
:=
\frac{\varepsilon s}{2yz}\left(1+s\frac zy\right).
\label{eq:alpha-beta}
\end{equation}

\begin{lemma}[Positivity of the sequential certificate]
\label{lem:seq-certificate-positivity}
The matrices above satisfy
\begin{equation}
P_\pm\succeq0,
\qquad
Q_\pm\succeq0,
\qquad
R_\pm\succeq0,
\qquad
P_-\succeq\Delta\vect e_3\vect e_3^{\mathsf T}.
\label{eq:certificate-positivity}
\end{equation}
\end{lemma}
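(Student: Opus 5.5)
The plan is to check each certificate block separately. Each one is either a positive multiple of a rank-one projector or is supported on a two-dimensional subspace, where positivity reduces to a $2\times2$ coefficient-matrix test.

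The operators $R_\pm$ need nothing beyond the sign of their prefactor. They are $\tfrac{\varepsilon s}{2}$ times a rank-one projector, and $\varepsilon s>0$ for $0<\varepsilon<1$, so $R_\pm\succeq0$.

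For $P_\pm$, note that $\vect n_\pm$ and $\vect e_3$ are linearly independent, since $\vect n_\pm$ has zero third component. Writing $P_\pm=N\,\Gamma_\pm N^{\mathsf T}$ with $N=(\vect n_\pm,\vect e_3)$ and
\begin{equation}
\Gamma_\pm=\begin{pmatrix}A_\pm&B_\pm\\B_\pm&c_\pm\end{pmatrix},
\end{equation}
it suffices to show $\Gamma_\pm\succeq0$. First I establish $0<\mu<1$. The lower bound comes from $\mu>s>0$ in \cref{lem:seq-interior-properties}. For the upper bound, $\mu$ is an achievable $x_c$ by \cref{prop:sequential-main}, and complete positivity of the output gives $|x_c|\le x_0=1$. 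Strictness follows from $\mu<s(\varepsilon+R_\varepsilon)$ (\cref{lem:strict-seq-ico-main}) together with $s(\varepsilon+R_\varepsilon)\le1$, which is the same bound applied to the general class.

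With $0<\mu<1$, the diagonal entries $A_\pm$, $c_+$ and $c_-$ are nonnegative; for $c_-$ this also uses $\Delta>0$ from \cref{eq:Delta-app}. The determinants are direct computations:
\begin{equation}
A_+c_+-B_+^2=\tfrac{s^2(1-\mu)^2}{4}-\tfrac{s^2(1-\mu)^2}{4}=0,
\qquad
A_-c_--B_-^2=\tfrac{\mu(1+\mu)}{2}\Delta\ge0.
\end{equation}
This gives $P_\pm\succeq0$. Moreover, $P_--\Delta\,\vect e_3\vect e_3^{\mathsf T}$ has the same form as $P_-$ but with $c_-$ replaced by $c_--\Delta$. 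Its coefficient matrix then has vanishing determinant and nonnegative diagonal, so it is positive semidefinite, which yields $P_-\succeq\Delta\,\vect e_3\vect e_3^{\mathsf T}$.

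For $Q_\pm$, each is a scalar times a rank-one projector, so it suffices to have $\alpha\ge0$ and $\beta\ge0$. The inequality $\beta>0$ is immediate because $y,z>0$. The real content is $\alpha\ge0$, which is equivalent to $s\,z/y\le1$, and I expect this to be the main obstacle. My approach is to parametrise the stationary point by $r:=z/y>1$ using \cref{eq:stationarity-reduced,eq:z-y-relation}. Set $f:=s/\mu=1+1/r-r$. Positivity of $f$ forces $1<r<(1+\sqrt5)/2$. Dividing the first identity of \cref{eq:stationarity-reduced} by $\mu$ gives $1=f^2+\varepsilon r f$, hence $\varepsilon=(1-f^2)/(rf)$ and $s^2=1-\varepsilon$. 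The claim $s^2r^2\le1$ then becomes the single-variable inequality
\begin{equation}
r^2\left(1-\frac{1-f^2}{rf}\right)\le1,
\qquad
1<r<\tfrac{1+\sqrt5}{2},
\end{equation}
which I would clear of denominators (note $rf>0$) and verify as a polynomial sign condition on that interval.

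If this direct route turns out to be messy, I will fall back on an alternative. The equivalent form of $s\,z/y\le1$ is $\mu-s^2/\mu\le\varepsilon$, by the first identity of \cref{eq:stationarity-reduced}. I would bound $\mu$ above by a value at which this holds, for instance through the characteristic quartic $p_\varepsilon$ from \cref{cor:sequential-spectral}, evaluated at the root $\mu_0$ of $\mu^2-\varepsilon\mu-s^2=0$. This requires showing $p_\varepsilon(\mu_0)\ge0$ and that no larger root exists, which would give $\mu\le\mu_0$ as needed.
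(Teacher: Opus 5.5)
Your treatment of $R_\pm$ and $P_\pm$ is the paper's argument, and your computations $A_+c_+-B_+^2=0$, $A_-c_--B_-^2=\tfrac{\mu(1+\mu)}{2}\Delta$, and the replacement of $c_-$ by $c_--\Delta$ are correct. The gap is at the step you flag yourself, $\alpha\ge0$. Neither of your two routes is carried out, so $Q_+\succeq0$ is not yet proved. What you are missing is that your bound on $\mu$ already settles this step. The root $\mu_0$ in your fallback is exactly $1$, because
\[
\mu^2-\varepsilon\mu-s^2=(\mu-1)(\mu+s^2).
\]
So for $\mu>0$, the condition $\mu-s^2/\mu\le\varepsilon$, which is equivalent to $sz/y\le1$, is the same as $\mu\le1$. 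You already have $\mu\le1$ from complete positivity of the output of the explicit sequential construction, so $\alpha\ge0$ follows at once. This is exactly the paper's proof. The paper phrases it as monotonicity of $\nu\mapsto\nu-s^2/\nu$, whose value at $\nu=1$ is $1-s^2=\varepsilon$. The quartic $p_\varepsilon$ and any root-location argument are unnecessary.

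Your primary route would also have worked, and it is a different, purely algebraic argument. With your $f=(r+1-r^2)/r$, one has $1-f=(r^2-1)/r$ and $1+f=(2r+1-r^2)/r$. After dividing by $r^2-1>0$ and multiplying by $r+1-r^2>0$, the inequality $s^2r^2\le1$ reduces to $r\ge0$. That version uses only the stationarity relations, not the bound $\mu\le1$.

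Separately, you do not need the strict bound $\mu<1$: all coefficients stay nonnegative at $\mu=1$. Deriving it by citing \cref{lem:strict-seq-ico-main} is circular as stated. That lemma's conclusion rests on \cref{prop:sequential-main}, whose converse uses the lemma you are proving. Only the purely algebraic inequality $F_\varepsilon<s(\varepsilon+R_\varepsilon)$ from its proof would be admissible. The same caveat applies to invoking \cref{cor:sequential-spectral}: only its algebraic part, not the identification with $\mu_{\misc,\seq}$, could be used.
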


\begin{proof}
The explicit sequential construction realises $F_\varepsilon(x,y,z)$ as the coherence coefficient of a CPTP qubit channel. Therefore $0<\mu\le1$. From the stationarity relation,
\begin{equation}
\mu-\frac{s^2}{\mu}
=
\varepsilon s\frac zy.
\label{eq:mu-stationarity}
\end{equation}
The function $f(r)=r-s^2/r$ is strictly increasing for $r>0$, since
\begin{equation}
f'(r)=1+\frac{s^2}{r^2}>0.
\label{eq:f-monotone}
\end{equation}
Using $\mu\le1$,
\begin{equation}
\varepsilon s\frac zy
=f(\mu)
\le f(1)
=1-s^2
=\varepsilon.
\label{eq:alpha-positive}
\end{equation}
Hence $sz/y\le1$, so $\alpha\ge0$, while $\beta>0$. Thus $Q_\pm\succeq0$; the rank-one form gives $R_\pm\succeq0$.

The vectors $\vect n_\pm$ and $\vect e_3$ are orthogonal, and
\begin{equation}
A_+c_+-B_+^2=0,
\qquad
A_-(c_--\Delta)-B_-^2=0.
\label{eq:P-determinants}
\end{equation}
Since $0<\mu\le1$, all diagonal coefficients of the corresponding $2\times2$ Gram matrices are nonnegative. Consequently,
\begin{equation}
P_+\succeq0,
\qquad
P_--\Delta\vect e_3\vect e_3^{\mathsf T}\succeq0.
\label{eq:P-positive}
\end{equation}
\end{proof}

\begin{lemma}[Sequential certificate identity]
\label{lem:seq-certificate-identity}
For every feasible block tuple in \cref{lem:fixed-order-charge-blocks},
\begin{equation}
\begin{aligned}
\mu-x_c={}&
\langle P_+,S_{00}+C_{00}\rangle
+
\langle P_-,S_{00}-C_{00}\rangle\\
&+
\langle Q_+,S_{0,-1}+K(r)\rangle
+
\langle Q_-,S_{0,-1}-K(r)\rangle\\
&+
\langle R_+,S_{-1,0}+K(t)\rangle
+
\langle R_-,S_{-1,0}-K(t)\rangle.
\end{aligned}
\label{eq:seq-certificate-app}
\end{equation}
\end{lemma}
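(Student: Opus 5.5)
The identity \eqref{eq:seq-certificate-app} is linear in the block variables, so I will expand the right-hand side term by term and match coefficients against $\mu-x_c$, where $x_c=\vect v^{\mathsf T}C_{00}\vect v+2\varepsilon s(r+t)$ by \cref{lem:fixed-order-charge-blocks}. Group the $S$-terms and $C$-terms separately:
\[
\langle P_++P_-,S_{00}\rangle+\langle Q_++Q_-,S_{0,-1}\rangle+\langle R_++R_-,S_{-1,0}\rangle
\]
must equal $\mu$ on the affine feasible set, and
\[
\langle P_+-P_-,C_{00}\rangle+\langle Q_+-Q_-,K(r)\rangle+\langle R_+-R_-,K(t)\rangle
\]
must equal $-x_c$ for all admissible $C_{00},r,t$.

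\emph{Coherence part.} Here $R_+-R_-=-\varepsilon s\,K(1)$, so $\langle R_+-R_-,K(t)\rangle=-2\varepsilon s t$. Next, $Q_+-Q_-$ has off-diagonal entry $-(\alpha+\beta)yz=-\varepsilon s$ by \eqref{eq:alpha-beta}, which gives $-2\varepsilon s r$. For $C_{00}$, only off-diagonal entries matter because $\diag C_{00}=0$. I therefore need the off-diagonal part of $P_+-P_-$ to equal that of $-\vect v\vect v^{\mathsf T}$, i.e.\ entries $-v_iv_j$ with $\vect v=(1,s,s,s^2)$. I will compute $A_+\vect n_+\vect n_+^{\mathsf T}-A_-\vect n_-\vect n_-^{\mathsf T}$ and the $B_\pm$ cross terms explicitly. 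For instance, the $(1,2)$ entry is $-\tfrac{s}{\mu}(A_++A_-)\cdot$ and the $B$ terms feed the third row/column. Checking these six entries is a routine but careful step.

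\emph{Population part.} Using the parametrisation \eqref{eq:seq-charge-blocks} with free parameters $a,b,g,h,u,m_0,n_0$, subject to \eqref{eq:seq-charge-relations}, I compute
\begin{align*}
\langle R_++R_-,S_{-1,0}\rangle&=\varepsilon s(m_0+n_0)=\varepsilon s(g+h),\\
\langle Q_++Q_-,S_{0,-1}\rangle&=(\alpha+\beta)(z^2b+y^2h)+2(\alpha-\beta)yz\,u .
\end{align*}
I then collect the diagonal of $P_++P_-$ together with the $u$-entries at positions $(1,3)$ and $(2,4)$. Eliminating $m_0+n_0=g+h$ and $a=1-b-(g+h)$, the requirement becomes that the coefficients of $b,g,h,u$ vanish and the constant equals $\mu$. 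The constant is $(P_++P_-)_{11}=A_++A_-=\mu$, which checks out. The remaining four scalar equations should reduce, via the stationarity relations \eqref{eq:stationarity-reduced} and \eqref{eq:z-y-relation} and the definition \eqref{eq:Delta-app} of $\Delta$, to identities. In particular, the $g$-coefficient is exactly where $c_-$ carries the extra $\Delta$, and the $u$-coefficient must cancel the $B_\pm$ cross terms against $(\alpha-\beta)yz=-\varepsilon s^2 z/y$.

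\emph{Main obstacle.} I expect the hardest part to be these four population-coefficient equations, especially those for $b$ and $h$. They require substituting $\alpha+\beta=\varepsilon s/(yz)$ and expressing $z/y$ and $y/z$ through $\mu$ by \eqref{eq:stationarity-reduced}. I will first write everything in terms of $\rho:=z/y$, then eliminate $\rho$ using $\mu-s^2/\mu=\varepsilon s\rho$ and $\mu-s^4/\mu=\varepsilon s(1/\rho+1)$, using $1-s^2=\varepsilon$ throughout.
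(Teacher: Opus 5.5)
Your proposal is correct and follows the same route as the paper's proof. You split the right-hand side into a coherence part, where the off-diagonals of $P_+-P_-$, $Q_+-Q_-$ and $R_+-R_-$ reproduce $-\vect v\vect v^{\mathsf T}$ and $-\varepsilon s K(1)$ and hence give $-x_c$, and a population part, where eliminating $a$ and $m_0+n_0$ via \eqref{eq:seq-charge-relations} leaves the constant $A_++A_-=\mu$ and coefficients of $b,g,h,u$ that vanish by \eqref{eq:stationarity-reduced} and \eqref{eq:Delta-app}. Each of those four cancellations is a one-line consequence of the stationarity relations. For example, the $u$-coefficient is $2s(\mu-s^2/\mu-\varepsilon s z/y)=0$. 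So the step you flagged as the main obstacle is in fact immediate.
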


\begin{proof}
Expand the right-hand side in the independent real variables consisting of the six off-diagonal entries of $C_{00}$, the two variables $r,t$, and the entries of the three $S$ blocks subject to \eqref{eq:seq-charge-relations}. By the definitions of $\vect n_\pm$, the coefficient of $(C_{00})_{ij}$ is $-2v_iv_j$ for $i<j$. The rank-one forms of $Q_\pm$ and $R_\pm$ give the coefficients $-2\varepsilon s$ of both $r$ and $t$. These terms therefore sum to $-x_c$ by \eqref{eq:seq-block-objective}. For the $S$ variables, substitute \eqref{eq:seq-charge-relations} and then use the three stationarity equations \eqref{eq:seq-stationarity-app} together with \eqref{eq:Delta-app}. The coefficients of every free $S$ variable vanish, while the constant normalisation term is $\mu$. Thus the expansion is exactly $\mu-x_c$.
\end{proof}

Every second argument in \cref{eq:seq-certificate-app} is positive semidefinite by \cref{lem:fixed-order-charge-blocks}, while every first argument is positive semidefinite by \cref{lem:seq-certificate-positivity}. Therefore $x_c\le\mu$, proving the converse part of \cref{prop:sequential-main}.

For completeness, the strict parallel exclusion used in \cref{lem:no-parallel-saturation-main} follows without any hidden positivity assumption. Equality in \cref{eq:seq-certificate-app} implies
\begin{equation}
0
=
\langle P_-,S_{00}-C_{00}\rangle
\ge
\Delta\,\langle e_3,(S_{00}-C_{00})e_3\rangle
=
\Delta g,
\label{eq:g-zero}
\end{equation}
so $g=0$. Positivity of $\diag(g,h)\pm K(t)$ gives $t=0$, and then
$\langle R_+,\diag(0,h)\rangle=\varepsilon s h/2=0$ gives $h=0$.
Next, positivity of $\diag(b,0)\pm K(r)$ gives $r=0$, and
\begin{equation}
0
=
\langle Q_-,\diag(b,0)\rangle
=
\beta z^2b
\label{eq:b-zero}
\end{equation}
gives $b=0$. Normalisation gives $a=1$. Finally, a positive semidefinite
matrix with a zero diagonal entry has a zero row and column at that entry.
Applying this fact to both $S_{00}+C_{00}$ and $S_{00}-C_{00}$ shows
$C_{00}=0$, and hence $x_c=0$, contradicting $\mu\ge s>0$.

\section{General-process upper-bound certificate}
\label{app:ico-certificate}

\begin{lemma}[General-process charge-block reduction]
\label{lem:ico-charge-blocks}
For every reduced general-process feasible point with $T_1=0$, averaging over the independent amplitude-damping phase symmetries, entrywise complex conjugation, and $A\leftrightarrow B$ preserves feasibility and $x_c$. In the bases $(\ket1,\ket4,\ket{13},\ket{16})$, $(\ket3,\ket{15})$, and $(\ket9,\ket{12})$, the contributing blocks of $S$ are
\begin{equation}
\begin{pmatrix}
r_0-2r_1+2r_4&-r_2&-r_2&0\\
-r_2&r_1-r_3+r_5&0&r_2\\
-r_2&0&r_1-r_3+r_5&r_2\\
0&r_2&r_2&2r_3-r_5
\end{pmatrix},
\quad
\begin{pmatrix}r_1&r_2\\r_2&r_3\end{pmatrix},
\quad
\begin{pmatrix}r_1&r_2\\r_2&r_3\end{pmatrix},
\label{eq:ico-blocks-app}
\end{equation}
with
\begin{equation}
r_0+2r_4+r_5=1.
\label{eq:ico-normalisation-app}
\end{equation}
All omitted charge sectors are orthogonal to the off-diagonal support of $G_\varepsilon$ and therefore do not contribute to $x_c$.
\end{lemma}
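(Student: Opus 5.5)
The plan follows the template of \cref{lem:fixed-order-charge-blocks}, with the $\gen$ normalisation \eqref{eq:ico-reduced} in place of the comb equations. There are three steps: symmetrise, then solve the linear system on the contributing sectors, then read off the parametrisation.

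\emph{Step 1: admissibility of the averages.} The joint input--output phase rotations $\overline U_{\theta_A}\ox U_{\theta_A}$ on $A_iA_o$ and $\overline U_{\theta_B}\ox U_{\theta_B}$ on $B_iB_o$ are incoherent and act locally on each channel interface. Therefore:
\begin{itemize}
\item they commute with $\mathsf P_{A_o}$, $\mathsf P_{B_o}$, $\mathsf P_{A_oB_o}$ and with the partial traces in \eqref{eq:ico-reduced};
\item they commute with $\cD_{AB}$;
\item they preserve positivity of $S\pm T_c$;
\item they leave $G_\varepsilon=(J_{\cN_{\rm AD}^{\varepsilon}})^{\ox2}$ invariant, so $x_c=\langle G_\varepsilon,T_c\rangle$ is unchanged.
\end{itemize}
Entrywise complex conjugation preserves the same data, because all constraints and $G_\varepsilon$ are real. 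The swap $A\leftrightarrow B$ maps the $\gen$ equations to themselves, since the first line of \eqref{eq:ico-reduced} is symmetric and the second and third lines are exchanged. It also fixes $G_\varepsilon$, because the two resources are identical. By convexity, the averaged point is feasible with the same $x_c$. After averaging, $S$ and $T_c$ are real, symmetric, block diagonal in the charge pairs $(q_A,q_B)$, and swap invariant. Under the swap, $\ket3\leftrightarrow\ket9$, $\ket{15}\leftrightarrow\ket{12}$, $\ket4\leftrightarrow\ket{13}$ and $\ket1,\ket{16}$ are fixed. This is why the $(0,-1)$ and $(-1,0)$ blocks coincide, and why rows $\ket4$ and $\ket{13}$ of the $(0,0)$ block agree.

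\emph{Step 2: the linear system.} Start from a general real symmetric swap-invariant matrix on the $(0,0)$, $(0,-1)$ and $(-1,0)$ sectors. Also keep the remaining diagonal sector entries, which the $\mathsf P$ maps couple to these sectors. Impose the first equation of \eqref{eq:ico-reduced} entry by entry.

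Entries that are off-diagonal in $A_o$ or $B_o$ are killed by the corresponding $\mathsf P$ terms unless the pattern is allowed. The defining equation is $S=\mathsf P_{A_o}S+\mathsf P_{B_o}S-\mathsf P_{A_oB_o}S$, so $(\id-\mathsf P_{A_o})(\id-\mathsf P_{B_o})S=0$. This says that $S$ has no component that is simultaneously traceless in $A_o$ and in $B_o$.

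Applied to the $(0,0)$ block in the basis $\ket{a_ia_ob_ib_o}\in\{\ket{0000},\ket{0011},\ket{1100},\ket{1111}\}$, this condition does two things:
\begin{itemize}
\item It forbids the entry $\ket{0000}\!\bra{1111}$, which changes both $a_o$ and $b_o$. This gives the corner zero and the zero between $\ket4$ and $\ket{13}$.
\item It allows entries changing only one output. These become $\pm r_2$, whose signs and magnitudes are tied to the $(0,-1)$ off-diagonal entry $r_2$ through the $\mathsf P$ averaging.
\end{itemize}
The second and third lines of \eqref{eq:ico-reduced} relate the diagonal of the $(0,0)$ block to the diagonal entries of the other sectors. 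This produces the combinations $r_0-2r_1+2r_4$, $r_1-r_3+r_5$ and $2r_3-r_5$. Here $r_4$ and $r_5$ are auxiliary parameters recording weights in omitted diagonal sectors. The trace condition $\tr S=4$, with multiplicities counted via $\mathsf P$, reduces to $r_0+2r_4+r_5=1$.

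\emph{Step 3: omitted sectors.} The sector $(-1,-1)$, and all sectors where $G_\varepsilon$ has only diagonal support, meet only $\diag T_c=0$. So they contribute nothing to $x_c$, exactly as in \cref{lem:fixed-order-charge-blocks}.

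The main obstacle is Step 2. The inclusion--exclusion constraint couples entries across sectors, which is not the case for a comb, so the bookkeeping of which entries survive, and with which signs, must be done carefully. I would first write $S$ in the Pauli-like operator basis $\{I,Z,\sigma_\pm\}$ on $A_o$ and $B_o$. In that basis the condition becomes: no term with both the $A_o$ factor and the $B_o$ factor traceless. From there the displayed parametrisation, including the $\pm r_2$ pattern, should follow by matching coefficients. I would then verify the result by checking that the explicit family \eqref{eq:ico-family-S} fits the form with $r_2=0$.
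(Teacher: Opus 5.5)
Your overall route is the paper's own: average over the two phase groups, complex conjugation and $A\leftrightarrow B$, then solve the finite linear system \eqref{eq:ico-reduced} on the charge blocks. Steps 1 and 3 are fine. The gap is the variable set in Step 2. Keeping only the \emph{diagonal} entries of the omitted sectors is not enough, because the coherences $S_{2,14}$ and $S_{5,8}$ of the two-dimensional omitted blocks $(0,+1)=(\ket{2},\ket{14})$ and $(+1,0)=(\ket{5},\ket{8})$ are tied to the contributing blocks.

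At the entries $(1,13)$, $(4,16)$, $(1,4)$, $(13,16)$, the first line of \eqref{eq:ico-reduced} keeps only one $\mathsf P$ term on the right, e.g.\ $(\mathsf P_{B_o}S)_{1,13}=\tfrac12(S_{1,13}+S_{2,14})$. This gives $S_{1,13}=S_{2,14}$, $S_{4,16}=S_{3,15}$, $S_{1,4}=S_{5,8}$ and $S_{13,16}=S_{9,12}$. The off-diagonal parts of the third and second lines then read $S_{1,13}+S_{2,14}+S_{3,15}+S_{4,16}=0$ and $S_{1,4}+S_{5,8}+S_{9,12}+S_{13,16}=0$. Hence $S_{1,13}=-S_{4,16}$ and $S_{1,4}=-S_{13,16}$. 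This is what produces $-r_2$ at $(1,4),(1,13)$ against $+r_2$ at $(4,16),(13,16),(3,15),(9,12)$: the $\mathsf P$ terms fix the magnitudes, and the sign comes from the partial-trace lines acting through the omitted coherences.

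Without $S_{2,14}$ and $S_{5,8}$, the same equations force $S_{1,13}=S_{1,4}=0$ and then $r_2=0$. You would therefore derive a strictly smaller family than the feasible set. That matters, because the converse certificate must hold at every feasible point; this is why the $r_2$ terms have to cancel in \eqref{eq:r2-cancellation}. Your proposed check against \eqref{eq:ico-family-S} cannot detect the problem, since that family has $r_2=0$.

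A smaller correction concerns the diagonal combinations. They are not produced by the second and third lines alone: those can always be satisfied by adjusting omitted diagonal entries, so by themselves they leave the contributing diagonal free. The diagonal part of the first line is essential. Writing $d(a_ia_ob_ib_o)$ for the diagonal entry of $S$ at $\ket{a_ia_ob_ib_o}$, it reads $d(a_i0b_i0)-d(a_i1b_i0)-d(a_i0b_i1)+d(a_i1b_i1)=0$ for each $(a_i,b_i)$. Its $(0,1)$ and $(1,1)$ instances are $S_{3,3}-S_{7,7}-S_{4,4}+S_{8,8}=0$ and $S_{11,11}-S_{15,15}-S_{12,12}+S_{16,16}=0$. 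Combined with swap symmetry and the diagonal part of the third line, they give $S_{4,4}=S_{3,3}+S_{15,15}-S_{16,16}$. After eliminating the omitted entries, the trace condition becomes $S_{1,1}+S_{16,16}+2S_{4,4}=1$.

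These are exactly the displayed relations with $r_1=S_{3,3}$, $r_3=S_{15,15}$ and $r_5=S_{11,11}$. One may take $r_0=S_{6,6}$ and $r_4=S_{7,7}$, so that \eqref{eq:ico-normalisation-app} is the total weight of the four one-dimensional sectors. If you start from a full real symmetric, charge-block-diagonal, swap-invariant $S$ on all sixteen basis states, as your $\{I,Z,\sigma_\pm\}$ expansion would do automatically, the plan goes through.
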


\begin{proof}
The phase and reality arguments are the same as in \cref{lem:fixed-order-charge-blocks}. The exchange average is allowed because the two resources are identical and the general-process equations, the free constraint, and $x_c$ are invariant under $A\leftrightarrow B$. Write a general real operator on the three contributing charge sectors and impose
\begin{equation}
\begin{aligned}
S&=\mathsf P_{A_o}(S)+\mathsf P_{B_o}(S)-\mathsf P_{A_oB_o}(S),\\
\tr_{A_iA_o}S&=\mathsf P_{B_o}(\tr_{A_iA_o}S),\\
\tr_{B_iB_o}S&=\mathsf P_{A_o}(\tr_{B_iB_o}S),
\qquad
\tr S=4,
\end{aligned}
\label{eq:ico-block-linear-system}
\end{equation}
together with exchange symmetry. Solving this finite linear system gives \eqref{eq:ico-blocks-app}; its remaining independent trace equation is exactly \eqref{eq:ico-normalisation-app}. The tensor-square amplitude-damping Choi operator has off-diagonal support only on the displayed sectors. Since $\cD_{AB}(T_c)=0$, one-dimensional sectors contribute neither to $x_c$ nor to the certificate pairing with $T_c$.
\end{proof}
Let
\begin{equation}
a:=\sqrt{1-t_\star},
\qquad
b:=\sqrt{t_\star}.
\label{eq:a-b-ico}
\end{equation}
The definition of $t_\star$ gives
\begin{equation}
\frac ba=\frac{R_\varepsilon+\varepsilon}{s},
\qquad
\frac ab=\frac{R_\varepsilon-\varepsilon}{s},
\qquad
ab=\frac{s}{2R_\varepsilon}.
\label{eq:a-b-relations}
\end{equation}
Define
\begin{equation}
\ket{e_\pm}:=\frac{\ket4\pm\ket{13}}{\sqrt2},
\quad
\ket{w_P}:=-b\ket1+a\ket{16},
\quad
\ket{w_Q}:=b\ket1+a\ket{16},
\label{eq:ico-vectors-1}
\end{equation}
\begin{equation}
\ket{\phi_\pm^A}:=\frac{\ket3\pm\ket{15}}{\sqrt2},
\qquad
\ket{\phi_\pm^B}:=\frac{\ket9\pm\ket{12}}{\sqrt2},
\label{eq:ico-vectors-2}
\end{equation}
and
\begin{equation}
\begin{aligned}
\lambda&:=\frac{\sqrt2a}{s},\\
\alpha&:=\frac{s^2[1-s(\varepsilon+R_\varepsilon)]}{2ab},
\qquad
\beta:=\frac{s^2[1+s(\varepsilon+R_\varepsilon)]}{2ab},\\
\rho&:=\frac{\varepsilon(1-R_\varepsilon)}s,
\qquad
\kappa:=\varepsilon(2-\varepsilon-R_\varepsilon).
\end{aligned}
\label{eq:ico-certificate-coefficients}
\end{equation}
Set
\begin{equation}
\begin{aligned}
P_\varepsilon={}&
\alpha(\ket{w_P}+\lambda\ket{e_+})
(\bra{w_P}+\lambda\bra{e_+})
+\rho\ketbra{e_-}{e_-}\\
&+(\varepsilon s-\kappa)
\left(
\ketbra{\phi_-^A}{\phi_-^A}
+
\ketbra{\phi_-^B}{\phi_-^B}
\right),\\
Q_\varepsilon={}&
\beta(\ket{w_Q}+\lambda\ket{e_+})
(\bra{w_Q}+\lambda\bra{e_+})
+\rho\ketbra{e_-}{e_-}\\
&+(\varepsilon s+\kappa)
\left(
\ketbra{\phi_+^A}{\phi_+^A}
+
\ketbra{\phi_+^B}{\phi_+^B}
\right).
\end{aligned}
\label{eq:ico-PQ}
\end{equation}

Positivity of the certificate operators follows directly from their coefficients.
Since
\begin{equation}
R_\varepsilon^2=1-\varepsilon(1-\varepsilon),
\label{eq:R-square}
\end{equation}
one has $R_\varepsilon\le1$. Moreover,
\begin{equation}
(1-\varepsilon s)^2-s^2R_\varepsilon^2
=
\varepsilon(1-s)^2\ge0,
\label{eq:m-less-one-proof}
\end{equation}
and $1-\varepsilon s>0$, so
$s(\varepsilon+R_\varepsilon)\le1$. Finally,
\begin{equation}
R_\varepsilon^2-(2-\varepsilon-s)^2
=
2s(1-s)^2\ge0
\label{eq:kappa-upper-proof}
\end{equation}
implies $2-\varepsilon-R_\varepsilon\le s$ and hence
$\kappa\le\varepsilon s$; also $\kappa\ge0$ follows from
$R_\varepsilon\le1$. Therefore
\begin{equation}
\alpha,\beta,\rho,\varepsilon s-\kappa,\varepsilon s+\kappa\ge0,
\label{eq:ico-nonnegative-coefficients}
\end{equation}
and consequently $P_\varepsilon,Q_\varepsilon\succeq0$.

Let
\begin{equation}
G_\varepsilon
:=
\left(J_{\cN_{\rm AD}^{\varepsilon}}\right)^{\ox2}.
\label{eq:G-epsilon-app}
\end{equation}
Using \cref{eq:a-b-relations}, the off-diagonal entries of
$Q_\varepsilon-P_\varepsilon$ on the first charge block are
\begin{equation}
\left(Q_\varepsilon-P_\varepsilon\right)_{(1,4,13,16)}
=
\begin{pmatrix}
\ast&s&s&s^2\\
s&\ast&s^2&s^3\\
s&s^2&\ast&s^3\\
s^2&s^3&s^3&\ast
\end{pmatrix},
\label{eq:QP-first-block-offdiag}
\end{equation}
where the asterisks denote diagonal entries. On each of the other two
charge blocks,
\begin{equation}
\left(Q_\varepsilon-P_\varepsilon\right)_{(3,15)}
=
\left(Q_\varepsilon-P_\varepsilon\right)_{(9,12)}
=
\begin{pmatrix}
\ast&\varepsilon s\\
\varepsilon s&\ast
\end{pmatrix}.
\label{eq:QP-other-blocks-offdiag}
\end{equation}
These are exactly all off-diagonal entries of $G_\varepsilon$ in the same
basis. Hence
\begin{equation}
H_\varepsilon
:=
G_\varepsilon-(Q_\varepsilon-P_\varepsilon)
\label{eq:H-epsilon-definition}
\end{equation}
is diagonal. Since $\cD_{AB}(T_c)=0$,
\begin{equation}
\langle Q_\varepsilon-P_\varepsilon,T_c\rangle
=
\langle G_\varepsilon,T_c\rangle
=
x_c.
\label{eq:ico-QP-Tc}
\end{equation}

The remaining ingredient is the normalization identity. Put
\begin{equation}
m_\varepsilon:=s(\varepsilon+R_\varepsilon).
\label{eq:m-epsilon-app}
\end{equation}
The entries of $P_\varepsilon+Q_\varepsilon$ that contribute to the first
block in \cref{eq:ico-blocks-app} are
\begin{equation}
\begin{aligned}
&(P_\varepsilon+Q_\varepsilon)_{11}=m_\varepsilon,\\
&(P_\varepsilon+Q_\varepsilon)_{22}
=(P_\varepsilon+Q_\varepsilon)_{33}=R_\varepsilon s,\\
&(P_\varepsilon+Q_\varepsilon)_{44}
=s(R_\varepsilon-\varepsilon),\\
&(P_\varepsilon+Q_\varepsilon)_{12}
=(P_\varepsilon+Q_\varepsilon)_{13}
=s^2(R_\varepsilon+\varepsilon),\\
&(P_\varepsilon+Q_\varepsilon)_{24}
=(P_\varepsilon+Q_\varepsilon)_{34}
=R_\varepsilon-\varepsilon.
\end{aligned}
\label{eq:PQ-relevant-entries}
\end{equation}
On each $2\times2$ block,
\begin{equation}
(P_\varepsilon+Q_\varepsilon)_{(3,15)}
=
(P_\varepsilon+Q_\varepsilon)_{(9,12)}
=
\begin{pmatrix}
\varepsilon s&\kappa\\
\kappa&\varepsilon s
\end{pmatrix}.
\label{eq:PQ-small-blocks}
\end{equation}
Taking the Hilbert--Schmidt inner product with
\cref{eq:ico-blocks-app} gives
\begin{equation}
\begin{aligned}
\langle P_\varepsilon+Q_\varepsilon,S\rangle
={}&
m_\varepsilon(r_0-2r_1+2r_4)
+2R_\varepsilon s(r_1-r_3+r_5)\\
&+s(R_\varepsilon-\varepsilon)(2r_3-r_5)
+4\left[
R_\varepsilon-\varepsilon
-s^2(R_\varepsilon+\varepsilon)
\right]r_2\\
&+2\varepsilon s(r_1+r_3)+4\kappa r_2.
\end{aligned}
\label{eq:PQ-S-expanded}
\end{equation}
Now
\begin{equation}
R_\varepsilon-\varepsilon
-s^2(R_\varepsilon+\varepsilon)
=
-\varepsilon(2-\varepsilon-R_\varepsilon)
=
-\kappa,
\label{eq:r2-cancellation}
\end{equation}
so the $r_2$ terms cancel. The coefficients of $r_1$ and $r_3$ also
vanish:
\begin{equation}
-2m_\varepsilon+2R_\varepsilon s+2\varepsilon s=0,
\qquad
-2R_\varepsilon s+2s(R_\varepsilon-\varepsilon)+2\varepsilon s=0.
\label{eq:r1-r3-cancellation}
\end{equation}
The remaining terms are
\begin{equation}
\langle P_\varepsilon+Q_\varepsilon,S\rangle
=
m_\varepsilon(r_0+2r_4+r_5)
=
m_\varepsilon,
\label{eq:ico-PQ-S}
\end{equation}
where the last equality uses \cref{eq:ico-normalisation-app}. Combining
\cref{eq:ico-QP-Tc,eq:ico-PQ-S} gives
\begin{equation}
\begin{aligned}
m_\varepsilon-x_c
&=
\langle P_\varepsilon,S+T_c\rangle
+
\langle Q_\varepsilon,S-T_c\rangle\\
&\ge0,
\end{aligned}
\label{eq:ico-certificate-app}
\end{equation}
because $P_\varepsilon,Q_\varepsilon\succeq0$ and
$S\pm T_c\succeq0$.

\section{Parallel phase twirl and DISC feasibility}
\label{app:parallel-disc-twirl}

For a single amplitude-damping channel, let
\begin{equation}
U_\theta:=\ketbra{0}{0}+e^{i\theta}\ketbra{1}{1}.
\label{eq:phase-unitary}
\end{equation}
Its Choi operator satisfies
\begin{equation}
(\overline U_\theta\ox U_\theta)
J_{\cN_{\rm AD}^{\varepsilon}}
(\overline U_\theta\ox U_\theta)^\dagger
=
J_{\cN_{\rm AD}^{\varepsilon}}.
\label{eq:ad-phase-covariance}
\end{equation}
Apply the corresponding conjugation independently on the $A$ and $B$ input-channel registers of a reduced parallel feasible point and average over both phases. The link-product objective is invariant by \cref{eq:ad-phase-covariance}. Parallel normalisation with $T_1=0$ gives
\begin{equation}
T_0=\rho_{A_iB_i}\ox I_{A_oB_o}.
\label{eq:parallel-rho-form}
\end{equation}
The independent phase average sends $\rho_{A_iB_i}$ to $\cD_{A_iB_i}(\rho_{A_iB_i})$ and leaves $I_{A_oB_o}$ unchanged, so the averaged $T_0$ is fully diagonal on $A_iA_oB_iB_o$. The same average preserves $\cD_{AB}(T_c)=0$ because complete dephasing commutes with the phase action. Positivity of $T_0\pm T_c$ and the objective are preserved by convexity. The averaged point is therefore DISC feasible with the same $x_c$.

\section{Numerical MISC--DISC separation example}
\label{app:misc-disc-example}

We define the channel pair used in \cref{tab:misc-disc-numerical} by the exact Choi operators
\begin{equation}
J_{\cN_1}:=10^{-4}K_1,
\qquad
J_{\cN_2}:=10^{-4}K_2,
\label{eq:exact-numerical-channel-definition}
\end{equation}
where, in the input--output basis order $\{\ket{00},\ket{01},\ket{10},\ket{11}\}$,
\begin{equation}
\resizebox{0.96\linewidth}{!}{$
K_1
=
\begin{pmatrix}
5717&-947+1220i&-474+567i&592+537i\\
-947-1220i&4283&-45+2150i&474-567i\\
-474-567i&-45-2150i&6068&-755-2755i\\
592-537i&474+567i&-755+2755i&3932
\end{pmatrix}
$}.
\label{eq:K1-exact}
\end{equation}
\begin{equation}
\resizebox{0.96\linewidth}{!}{$
K_2
=
\begin{pmatrix}
4828&-1180+428i&-511+630i&1102+68i\\
-1180-428i&5172&115-1401i&511-630i\\
-511-630i&115+1401i&2717&-660-5i\\
1102-68i&511+630i&-660+5i&7283
\end{pmatrix}
$}.
\label{eq:K2-exact}
\end{equation}
All entries of $K_1$ and $K_2$ are exact Gaussian integers. Thus the four-decimal entries of $J_{\cN_1}$ and $J_{\cN_2}$ are themselves the exact numerical definition of the revised instance; they are not rounded representations of a different full-precision data file.

The trace-preservation condition is exact. Taking the partial trace over the channel output gives
\begin{equation}
\tr_{X_o}K_j=10^4I_{X_i},
\qquad
\tr_{X_o}J_{\cN_j}=I_{X_i},
\qquad
j=1,2.
\label{eq:exact-channel-TP-check}
\end{equation}
For example, for $K_1$ the two diagonal entries of the partial trace are
\begin{equation}
5717+4283=10^4,
\qquad
6068+3932=10^4,
\end{equation}
and its off-diagonal entry is
\begin{equation}
(-474+567i)+(474-567i)=0.
\end{equation}
The same identities hold for $K_2$.

Positivity can also be certified without floating-point eigenvalues. Let $K_j[1\!:\!k]$ denote the leading $k\times k$ principal submatrix. The leading principal minors are
\begin{equation}
\begin{array}{c|rr}
\toprule
k & \det K_1[1\!:\!k] & \det K_2[1\!:\!k]\\
\midrule
1 & 5717 & 4828\\
2 & 22100702 & 23394832\\
3 & 105404493296 & 52291109988\\
4 & 257783065977802 & 347392338883277\\
\bottomrule
\end{array}
\label{eq:exact-channel-leading-minors}
\end{equation}
All of these integers are strictly positive. Since $K_1$ and $K_2$ are Hermitian, Sylvester's criterion implies
\begin{equation}
K_1\succ0,
\qquad
K_2\succ0,
\end{equation}
and therefore \cref{eq:exact-numerical-channel-definition} defines two CPTP qubit channels exactly.

Solving the six reduced transformation SDPs for the exact pair in \cref{eq:exact-numerical-channel-definition}, with the identity channel as target, gives the values reported in \cref{tab:misc-disc-numerical}. Those displayed decimals are numerical solver outputs rounded to seven decimal places and are not claimed to be exact algebraic constants.

\section{Teleportation stretching in explicit form}
\label{app:teleportation-details}

Let
\begin{equation}
\ket{\Phi_\alpha}:=(I\ox\sigma_\alpha)\ket{\Phi^+},
\qquad
\ket{\Phi^+}:=\frac{\ket{00}+\ket{11}}{\sqrt2}.
\label{eq:bell-states-app}
\end{equation}
For a mixed-Pauli channel,
\begin{equation}
\omega_{\cN_{\vect p}}
=
\sum_{\alpha=0}^3p_\alpha\ketbra{\Phi_\alpha}{\Phi_\alpha}.
\label{eq:pauli-choi-bell}
\end{equation}
Teleportation through $\ket{\Phi_\alpha}$ implements
$\rho\mapsto\sigma_\alpha\rho\sigma_\alpha^\dagger$ up to an irrelevant Pauli phase. Linearity proves
\begin{equation}
\mathfrak T_{\rm tel}(\cN_{\vect p})=\cN_{\vect p}.
\label{eq:pauli-fixed-point-app}
\end{equation}

For an arbitrary channel tuple $(\cE_1,\ldots,\cE_n)$, the parallelised protocol prepares $\omega_{\cE_1}\ox\cdots\ox\omega_{\cE_n}$ by applying every $\cE_k$ to half of a maximally entangled state. These states are stored and inserted into the fixed teleportation operations wherever the original higher-order protocol calls its resource channels. This is the operational content of \cref{eq:parallelised-superchannel}.

\end{document}